   \newcommand\SkipToFmtEnd{}%
   \newcommand\EndFmtInput{}%
   \long\def\SkipToFmtEnd#1\EndFmtInput{}%
\newcommand\ReadOnlyOnce[1]{\@ifundefined{#1}{\@namedef{#1}{}}\SkipToFmtEnd}
\DeclareFontFamily{OT1}{cmtex}{}
\DeclareFontShape{OT1}{cmtex}{m}{n}
  {<5><6><7><8>cmtex8
   <9>cmtex9
   <10><10.95><12><14.4><17.28><20.74><24.88>cmtex10}{}
\DeclareFontShape{OT1}{cmtex}{m}{it}
  {<-> ssub * cmtt/m/it}{}
\DeclareFontShape{OT1}{cmtt}{bx}{n}
  {<5><6><7><8>cmtt8
   <9>cmbtt9
   <10><10.95><12><14.4><17.28><20.74><24.88>cmbtt10}{}
\DeclareFontShape{OT1}{cmtex}{bx}{n}
  {<-> ssub * cmtt/bx/n}{}
\newcommand{\Conid}[1]{\mathit{#1}}
\newcommand{\Varid}[1]{\mathit{#1}}
\newcommand{\anonymous}{\kern0.06em \vbox{\hrule\@width.5em}}
\newdimen\mathindent\mathindent\leftmargini}%
\def\resethooks{%
  \global\let\SaveRestoreHook\empty
  \global\let\ColumnHook\empty}
\newcommand*{\savecolumns}[1][default]%
  {\g@addto@macro\SaveRestoreHook{\savecolumns[#1]}}
\newcommand*{\restorecolumns}[1][default]%
  {\g@addto@macro\SaveRestoreHook{\restorecolumns[#1]}}
\newcommand*{\aligncolumn}[2]%
  {\g@addto@macro\ColumnHook{\column{#1}{#2}}}
\newcommand{\onelinecommentchars}{\quad-{}- }
\newcommand{\commentbeginchars}{\enskip\{-}
\newcommand{\commentendchars}{-\}\enskip}
\newcommand{\visiblecomments}{%
  \let\onelinecomment=\onelinecommentchars
  \let\commentbegin=\commentbeginchars
  \let\commentend=\commentendchars}
\newcommand{\invisiblecomments}{%
  \let\onelinecomment=\empty
  \let\commentbegin=\empty
  \let\commentend=\empty}
\newlength{\blanklineskip}
\newcommand{\hsindent}[1]{\quad}% default is fixed indentation
\let\hspre\empty
\let\hspost\empty
\let\HaskellResetHook\empty
\newcommand*{\AtHaskellReset}[1]{%
  \g@addto@macro\HaskellResetHook{#1}}
\newcommand*{\HaskellReset}{\HaskellResetHook}
\newcommand\hsforall{\global\let\hsdot=\hsperiodonce}
\newcommand*\hsperiodonce[2]{#2\global\let\hsdot=\hscompose}
\newcommand*\hscompose[2]{#1}
\newcommand{\hsnewpar}[1]%
  {{\parskip=0pt\parindent=0pt\par\vskip #1\noindent}}
\newcommand{\hscodestyle}{}
\newcommand{\sethscode}[1]%
  {\expandafter\let\expandafter\hscode\csname #1\endcsname
   \expandafter\let\expandafter\endhscode\csname end#1\endcsname}
   \let\hspre\(\let\hspost\)%
   \let\hspre\(\let\hspost\)%
\newcommand{\plainhs}{\sethscode{plainhscode}}
\def\codeframewidth{\arrayrulewidth}
   \let\endoflinesave=\\
   \framedhslinecorrect\endoflinesave{.5ex}\hline
\newcommand{\framedhslinecorrect}[2]%
  {#1[#2]}
\def\column##1##2{}%
   \newcommand\>[1][]{}\newcommand\<[1][]{}\newcommand\\[1][]{}%
   \def\fromto##1##2##3{##3}%
\let\orighscode=\hscode
   \let\origendhscode=\endhscode
   \def\endhscode{\def\hscode{\endgroup\def\@currenvir{hscode}\\}\begingroup}
\def\hscode{\endgroup\def\@currenvir{hscode}}}%
   \global\let\hscode=\orighscode
   \global\let\endhscode=\origendhscode}%
\let\Varid\mathit
\let\Conid\mathsf
\def\commentbegin{\quad\begingroup\color{teal}\{\ }
\def\commentend{\}\endgroup}
\newcommand{\delete}[1]{}
\definecolor{airforceblue}{rgb}{0.36, 0.54, 0.66}
\definecolor{bleudefrance}{rgb}{0.19, 0.55, 0.91}
\definecolor{blue(ncs)}{rgb}{0.0, 0.53, 0.74}
\definecolor{mediumpersianblue}{rgb}{0.0, 0.4, 0.65}
\begin{document}

%% Title information
\title[Equational Reasoning for Non-determinism Monad]%
{Equational Reasoning for Non-determinism Monad:}
%% [Short Title] is optional;
                                        %% when present, will be used in
                                        %% header instead of Full Title.
%\titlenote{with title note}             %% \titlenote is optional;
                                        %% can be repeated if necessary;
                                        %% contents suppressed with 'anonymous'
\subtitle{The Case of Spark Aggregation}
%% \subtitle is optional
%\subtitlenote{with subtitle note}       %% \subtitlenote is optional;
                                        %% can be repeated if necessary;
                                        %% contents suppressed with 'anonymous'

\author{Shin-Cheng Mu}
%\authornote{with author1 note}          %% \authornote is optional;
                                        %% can be repeated if necessary
%\orcid{nnnn-nnnn-nnnn-nnnn}             %% \orcid is optional
\affiliation{
%  \position{Position1}
  \department{Institute of Information Science}              %% \department is recommended
  \institution{Academia Sinica}            %% \institution is required
%  \streetaddress{Street1 Address1}
%  \city{City1}
%  \state{State1}
%  \postcode{Post-Code1}
  \country{Taiwan}
}
\email{scm@iis.sinica.edu.tw}          %% \email is recommended

\begin{abstract}
As part of the author's studies on equational reasoning for monadic programs, this report focus on non-determinism monad.
We discuss what properties this monad should satisfy, what additional operators and notations can be introduced to facilitate equational reasoning about non-determinism, and put them to the test by proving a number of properties in our example problem inspired by the author's previous work on proving properties of Spark aggregation.
\end{abstract}

%% Note: \maketitle command must come after title commands, author
%% commands, abstract environment, Computing Classification System
%% environment and commands, and keywords command.
\maketitle

\section{Introduction}
\label{sec:intro}

In functional programming, \emph{pure} programs are those that can be understood as static mappings from inputs to outputs.
The main advantage of staying in the pure realm is that properties of pure entities can be proved by equational reasoning.
Side effects, in contrast, used to be considered the ``awkward squad'' that are difficult to be reasoned about.
\citet{GibbonsHinze:11:Just}, however, showed that effectful, monadic programs may also be reasoned about in a mathematical manner, using monad laws and properties of effect operators.

This report is part of a series of the author's studies on equational reasoning for monadic programs.
In this report we focus on non-determinism monad --- in our definition that is a monad having two effect operators, one allowing a program to fail, another allowing a non-deterministic choice between two results.
We discuss what properties these operators should satisfy, what additional operators and notations can be introduced to facilitate equational reasoning of this monad, and put them to the test by proving a number of properties in our example problem: Spark aggregation.

Much of this report is inspired by the author's joint work with \citet{Lengal:17:Executable}, in which we formalised Spark, a platform for distributed computation, and derived properties under which a distributed Spark aggregation represents a \emph{deterministic} computation.
Therefore, many examples in this report are about finding out when processing a non-deterministic permutation (simulating arbitrary distribution of data) produces a deterministic result.
\section{Monad and Non-determinism}

A monad consists of a type constructor \ensuremath{\Conid{M}\mathbin{::}\mathbin{*}\to \mathbin{*}} and two operators \ensuremath{\Varid{return}\mathbin{::}\Varid{a}\to \Conid{M}\;\Varid{a}} and ``bind'' \ensuremath{(\mathbin{\hstretch{0.7}{=\!\!<\!\!<}})\mathbin{::}(\Varid{a}\to \Conid{M}\;\Varid{b})\to \Conid{M}\;\Varid{a}\to \Conid{M}\;\Varid{b}} that satisfy the following {\em monad laws}:
\begin{align}
  \ensuremath{\Varid{f}\mathbin{\hstretch{0.7}{=\!\!<\!\!<}}\Varid{return}\;\Varid{x}} &= \ensuremath{\Varid{f}\;\Varid{x}}\mbox{~~,} \label{eq:monad-bind-ret}\\
  \ensuremath{\Varid{return}\mathbin{\hstretch{0.7}{=\!\!<\!\!<}}\Varid{m}} &= \ensuremath{\Varid{m}} \mbox{~~,} \label{eq:monad-ret-bind}\\
  \ensuremath{\Varid{f}\mathbin{\hstretch{0.7}{=\!\!<\!\!<}}(\Varid{g}\mathbin{\hstretch{0.7}{=\!\!<\!\!<}}\Varid{m})} &= \ensuremath{(\lambda \Varid{x}\to \Varid{f}\mathbin{\hstretch{0.7}{=\!\!<\!\!<}}\Varid{g}\;\Varid{x})\mathbin{\hstretch{0.7}{=\!\!<\!\!<}}\Varid{m}} \mbox{~~.}
    \label{eq:monad-assoc}
\end{align}
Rather than the usual \ensuremath{(\mathrel{\hstretch{0.7}{>\!\!>\!\!=}})\mathbin{::}\Conid{M}\;\Varid{a}\to (\Varid{a}\to \Conid{M}\;\Varid{b})\to \Conid{M}\;\Varid{b}}, in the laws above we use the reversed bind \ensuremath{(\mathbin{\hstretch{0.7}{=\!\!<\!\!<}})}, which is consistent with the direction of function composition and more readable when we program in a style that uses composition.
When we use bind with $\lambda$-abstractions, it is more natural to write \ensuremath{\Varid{m}\mathrel{\hstretch{0.7}{>\!\!>\!\!=}}\lambda \Varid{x}\to \Varid{f}\;\Varid{x}}.
In this report we use the former more than the latter, thus the choice of notation.
We also define \ensuremath{\Varid{m}_{1}\mathbin{\hstretch{0.7}{<\!\!<}}\Varid{m}_{2}\mathrel{=}\Varid{const}\;\Varid{m}_{1}\mathbin{\hstretch{0.7}{=\!\!<\!\!<}}\Varid{m}_{2}}. Note that \ensuremath{(\mathbin{\hstretch{0.7}{>\!\!>}})} has type \ensuremath{\Conid{M}\;\Varid{a}\to \Conid{M}\;\Varid{b}\to \Conid{M}\;\Varid{b}}.

\begin{figure}
\begin{hscode}\SaveRestoreHook
\column{B}{@{}>{\hspre}l<{\hspost}@{}}%
\column{8}{@{}>{\hspre}l<{\hspost}@{}}%
\column{E}{@{}>{\hspre}l<{\hspost}@{}}%
\>[B]{}(\mathrel{\hstretch{0.7}{<\!\!\!=\!\!<}}){}\<[8]%
\>[8]{}\mathbin{::}(\Varid{b}\to \Conid{M}\;\Varid{c})\to (\Varid{a}\to \Conid{M}\;\Varid{b})\to \Varid{a}\to \Conid{M}\;\Varid{c}{}\<[E]%
\\
\>[B]{}(\Varid{f}\mathrel{\hstretch{0.7}{<\!\!\!=\!\!<}}\Varid{g})\;\Varid{x}\mathrel{=}\Varid{f}\mathbin{\hstretch{0.7}{=\!\!<\!\!<}}\Varid{g}\;\Varid{x}{}\<[E]%
\\[\blanklineskip]%
\>[B]{}(\mathrel{\raisebox{0.5\depth}{\scaleobj{0.5}{\langle}} \scaleobj{0.8}{\$} \raisebox{0.5\depth}{\scaleobj{0.5}{\rangle}}}){}\<[8]%
\>[8]{}\mathbin{::}(\Varid{a}\to \Varid{b})\to \Conid{M}\;\Varid{a}\to \Conid{M}\;\Varid{b}{}\<[E]%
\\
\>[B]{}\Varid{f}\mathrel{\raisebox{0.5\depth}{\scaleobj{0.5}{\langle}} \scaleobj{0.8}{\$} \raisebox{0.5\depth}{\scaleobj{0.5}{\rangle}}}\Varid{m}\mathrel{=}(\Varid{return}\mathbin{\cdot}\Varid{f})\mathbin{\hstretch{0.7}{=\!\!<\!\!<}}\Varid{m}{}\<[E]%
\\[\blanklineskip]%
\>[B]{}(\mathrel{\raisebox{0.5\depth}{\scaleobj{0.5}{\langle \bullet \rangle}}}){}\<[8]%
\>[8]{}\mathbin{::}(\Varid{b}\to \Varid{c})\to (\Varid{a}\to \Conid{M}\;\Varid{b})\to (\Varid{a}\to \Conid{M}\;\Varid{c}){}\<[E]%
\\
\>[B]{}\Varid{f}\mathrel{\raisebox{0.5\depth}{\scaleobj{0.5}{\langle \bullet \rangle}}}\Varid{g}\mathrel{=}(\Varid{return}\mathbin{\cdot}\Varid{f})\mathrel{\hstretch{0.7}{<\!\!\!=\!\!<}}\Varid{g}{}\<[E]%
\\
\>[B]{}~~{}\<[E]%
\ColumnHook
\end{hscode}\resethooks
\vspace{-1cm}
\caption{Some monadic operators we find handy for this paper.}
\label{figure:monadic-operators}
\end{figure}

More operators we find useful are given in Figure \ref{figure:monadic-operators}. Right-to-left Kleisli composition, denoted by \ensuremath{(\mathrel{\hstretch{0.7}{<\!\!\!=\!\!<}})}, composes two monadic operations \ensuremath{\Varid{a}\to \Conid{M}\;\Varid{b}} and \ensuremath{\Varid{b}\to \Conid{M}\;\Varid{c}} into an operation \ensuremath{\Varid{a}\to \Conid{M}\;\Varid{c}}. Operators \ensuremath{(\mathrel{\raisebox{0.5\depth}{\scaleobj{0.5}{\langle}} \scaleobj{0.8}{\$} \raisebox{0.5\depth}{\scaleobj{0.5}{\rangle}}})} and \ensuremath{(\mathrel{\raisebox{0.5\depth}{\scaleobj{0.5}{\langle \bullet \rangle}}})} are monadic counterparts of function application and composition: \ensuremath{(\mathrel{\raisebox{0.5\depth}{\scaleobj{0.5}{\langle}} \scaleobj{0.8}{\$} \raisebox{0.5\depth}{\scaleobj{0.5}{\rangle}}})} applies a pure function to a monad, while \ensuremath{(\mathrel{\raisebox{0.5\depth}{\scaleobj{0.5}{\langle \bullet \rangle}}})} composes a pure function after a monadic function.

We now introduce a collections of properties that allows us to rotate an expression that involves two operators and three operands.
These properties will be handy when we need to move parenthesis around in expressions.
To begin with, the following properties show that \ensuremath{(\mathrel{\raisebox{0.5\depth}{\scaleobj{0.5}{\langle}} \scaleobj{0.8}{\$} \raisebox{0.5\depth}{\scaleobj{0.5}{\rangle}}})} and \ensuremath{(\mathrel{\raisebox{0.5\depth}{\scaleobj{0.5}{\langle \bullet \rangle}}})} share properties similar to pure function application and composition:
\begin{align}
  \ensuremath{(\Varid{f}\mathrel{\raisebox{0.5\depth}{\scaleobj{0.5}{\langle \bullet \rangle}}}\Varid{g})\;\Varid{x}} &= \ensuremath{\Varid{f}\mathrel{\raisebox{0.5\depth}{\scaleobj{0.5}{\langle}} \scaleobj{0.8}{\$} \raisebox{0.5\depth}{\scaleobj{0.5}{\rangle}}}\Varid{g}\;\Varid{x}} \mbox{~~,}
    \label{eq:comp-ap}\\
  \ensuremath{\Varid{f}\mathrel{\raisebox{0.5\depth}{\scaleobj{0.5}{\langle}} \scaleobj{0.8}{\$} \raisebox{0.5\depth}{\scaleobj{0.5}{\rangle}}}(\Varid{g}\mathrel{\raisebox{0.5\depth}{\scaleobj{0.5}{\langle}} \scaleobj{0.8}{\$} \raisebox{0.5\depth}{\scaleobj{0.5}{\rangle}}}\Varid{m})} &= \ensuremath{(\Varid{f}\mathbin{\cdot}\Varid{g})\mathrel{\raisebox{0.5\depth}{\scaleobj{0.5}{\langle}} \scaleobj{0.8}{\$} \raisebox{0.5\depth}{\scaleobj{0.5}{\rangle}}}\Varid{m}} \mbox{~~,}
    \label{eq:comp-ap-ap}\\
  \ensuremath{\Varid{f}\mathrel{\raisebox{0.5\depth}{\scaleobj{0.5}{\langle \bullet \rangle}}}(\Varid{g}\mathrel{\raisebox{0.5\depth}{\scaleobj{0.5}{\langle \bullet \rangle}}}\Varid{h})} &= \ensuremath{(\Varid{f}\mathbin{\cdot}\Varid{g})\mathrel{\raisebox{0.5\depth}{\scaleobj{0.5}{\langle \bullet \rangle}}}\Varid{h}} \mbox{~~.}
    \label{eq:comp-mcomp-mcomp}
\end{align}
We also have the following law that allows us to rotate an expression that uses \ensuremath{(\mathrel{\raisebox{0.5\depth}{\scaleobj{0.5}{\langle \bullet \rangle}}})} and \ensuremath{(\mathbin{\cdot})}:
\begin{align}
  \ensuremath{\Varid{f}\mathrel{\raisebox{0.5\depth}{\scaleobj{0.5}{\langle \bullet \rangle}}}(\Varid{g}\mathbin{\cdot}\Varid{h})}   &= \ensuremath{(\Varid{f}\mathrel{\raisebox{0.5\depth}{\scaleobj{0.5}{\langle \bullet \rangle}}}\Varid{g})\mathbin{\cdot}\Varid{h}} \mbox{~~.}
      \label{eq:mcomp-comp-mcomp}
\end{align}
Note that \ensuremath{\Varid{g}} in~\eqref{eq:mcomp-comp-mcomp} must be a function returning a monad. Furthermore, \eqref{eq:comp-bind-ap} and~\eqref{eq:mcomp-bind-ap} relate \ensuremath{(\mathbin{\hstretch{0.7}{=\!\!<\!\!<}})} and \ensuremath{(\mathrel{\raisebox{0.5\depth}{\scaleobj{0.5}{\langle}} \scaleobj{0.8}{\$} \raisebox{0.5\depth}{\scaleobj{0.5}{\rangle}}})}, both operators applying functions to monads,
while~\eqref{eq:kc-mcomp} and~\eqref{eq:mcomp-kc} relate \ensuremath{(\mathrel{\hstretch{0.7}{<\!\!\!=\!\!<}})} and \ensuremath{(\mathrel{\raisebox{0.5\depth}{\scaleobj{0.5}{\langle \bullet \rangle}}})}, both operators composing functions on monads:
\begin{align}
  \ensuremath{\Varid{f}\mathbin{\hstretch{0.7}{=\!\!<\!\!<}}(\Varid{g}\mathrel{\raisebox{0.5\depth}{\scaleobj{0.5}{\langle}} \scaleobj{0.8}{\$} \raisebox{0.5\depth}{\scaleobj{0.5}{\rangle}}}\Varid{m})} &= \ensuremath{(\Varid{f}\mathbin{\cdot}\Varid{g})\mathbin{\hstretch{0.7}{=\!\!<\!\!<}}\Varid{m}} \mbox{~~,}
    \label{eq:comp-bind-ap}\\
  \ensuremath{\Varid{f}\mathrel{\raisebox{0.5\depth}{\scaleobj{0.5}{\langle}} \scaleobj{0.8}{\$} \raisebox{0.5\depth}{\scaleobj{0.5}{\rangle}}}(\Varid{g}\mathbin{\hstretch{0.7}{=\!\!<\!\!<}}\Varid{m})} &= \ensuremath{(\Varid{f}\mathrel{\raisebox{0.5\depth}{\scaleobj{0.5}{\langle \bullet \rangle}}}\Varid{g})\mathbin{\hstretch{0.7}{=\!\!<\!\!<}}\Varid{m}} \mbox{~~,}
    \label{eq:mcomp-bind-ap}\\
  \ensuremath{\Varid{f}\mathrel{\hstretch{0.7}{<\!\!\!=\!\!<}}(\Varid{g}\mathrel{\raisebox{0.5\depth}{\scaleobj{0.5}{\langle \bullet \rangle}}}\Varid{h})} &= \ensuremath{(\Varid{f}\mathbin{\cdot}\Varid{g})\mathrel{\hstretch{0.7}{<\!\!\!=\!\!<}}\Varid{h}} \mbox{~~,}
    \label{eq:kc-mcomp}\\
  \ensuremath{\Varid{f}\mathrel{\raisebox{0.5\depth}{\scaleobj{0.5}{\langle \bullet \rangle}}}(\Varid{g}\mathrel{\hstretch{0.7}{<\!\!\!=\!\!<}}\Varid{h})} &= \ensuremath{(\Varid{f}\mathrel{\raisebox{0.5\depth}{\scaleobj{0.5}{\langle \bullet \rangle}}}\Varid{g})\mathrel{\hstretch{0.7}{<\!\!\!=\!\!<}}\Varid{h}} \mbox{~~.}
    \label{eq:mcomp-kc}
\end{align}
Having these properties is one of the advantages of writing \ensuremath{(\mathbin{\hstretch{0.7}{=\!\!<\!\!<}})} and \ensuremath{(\mathrel{\hstretch{0.7}{<\!\!\!=\!\!<}})} backwards.
All the properties above can be proved by expanding definitions, and it is a good warming-up exercise proving some of them.
Some of them are proved in Appendix~\ref{sec:misc-proofs}.

None of these operators and properties are strictly necessary: they can all be reduced to \ensuremath{\Varid{return}}, \ensuremath{(\mathbin{\hstretch{0.7}{=\!\!<\!\!<}})}, and $\lambda$-abstractions. As is often the case when designing notations, having more operators allows ideas to be expressed concisely in a higher level of abstraction, at the expense of having more properties to memorise. It is personal preference where the balance should be. Properties \eqref{eq:comp-ap} through \eqref{eq:mcomp-kc} may look like a lot of properties to remember. In practice, we find it usually sufficient to let us be guided by types. For example, when we have \ensuremath{\Varid{f}\mathrel{\raisebox{0.5\depth}{\scaleobj{0.5}{\langle}} \scaleobj{0.8}{\$} \raisebox{0.5\depth}{\scaleobj{0.5}{\rangle}}}\Varid{g}\;\Varid{x}} and want to bring \ensuremath{\Varid{f}} and \ensuremath{\Varid{g}} together, by their types we can figure out the resulting expression should be \ensuremath{(\Varid{f}\mathrel{\raisebox{0.5\depth}{\scaleobj{0.5}{\langle \bullet \rangle}}}\Varid{g})\;\Varid{x}}.

\paragraph{Non-determinism Monad}

Non-determinism is the only effect we use in this report.
We assume two operators \ensuremath{\emptyset} and \ensuremath{(\talloblong)}: the former denotes failure, while \ensuremath{\Varid{m}\mathbin{\talloblong}\Varid{n}} denotes that the computation may yield either \ensuremath{\Varid{m}} or \ensuremath{\Varid{n}}.
As pointed out by \citet{GibbonsHinze:11:Just}, for proofs and derivations, what matters is not how a monad is implemented but what properties its operators satisfy.
What laws \ensuremath{\emptyset} and \ensuremath{(\talloblong)} should satisfy, however, can be a tricky issue.
As discussed by \citet{Kiselyov:15:Laws}, it eventually comes down to what we use the monad for.
It is usually expected that \ensuremath{(\Varid{a},(\talloblong),\emptyset)} be a monoid. That is, \ensuremath{(\talloblong)} is associative, with \ensuremath{\emptyset} as its zero:
\begin{align*}
\ensuremath{(\Varid{m}\mathbin{\talloblong}\Varid{n})\mathbin{\talloblong}\Varid{k}}~ &=~ \ensuremath{\Varid{m}\mathbin{\talloblong}(\Varid{n}\mathbin{\talloblong}\Varid{k})} \mbox{~~,}\\
\ensuremath{\emptyset\mathbin{\talloblong}\Varid{m}} ~=~ & \ensuremath{\Varid{m}} ~=~ \ensuremath{\Varid{m}\mathbin{\talloblong}\emptyset} \mbox{~~.}
\end{align*}
It is also assumed that monadic bind distributes into \ensuremath{(\talloblong)} from the end,
while \ensuremath{\emptyset} is a right zero for \ensuremath{(\mathbin{\hstretch{0.7}{=\!\!<\!\!<}})}:
\begin{align}
  \ensuremath{\Varid{f}\mathbin{\hstretch{0.7}{=\!\!<\!\!<}}(\Varid{m}_{1}\mathbin{\talloblong}\Varid{m}_{2})} ~&=~ \ensuremath{(\Varid{f}\mathbin{\hstretch{0.7}{=\!\!<\!\!<}}\Varid{m}_{1})\mathbin{\talloblong}(\Varid{f}\mathbin{\hstretch{0.7}{=\!\!<\!\!<}}\Varid{m}_{2})} \mbox{~~,}
  \label{eq:bind-mplus-dist}\\
  \ensuremath{\Varid{f}\mathbin{\hstretch{0.7}{=\!\!<\!\!<}}\emptyset} ~&=~ \ensuremath{\emptyset} \label{eq:bind-mzero-zero} \mbox{~~.}
\end{align}
For our purpose in this section, we also assume that \ensuremath{(\talloblong)} is commutative (\ensuremath{\Varid{m}\mathbin{\talloblong}\Varid{n}\mathrel{=}\Varid{n}\mathbin{\talloblong}\Varid{m}}) and idempotent (\ensuremath{\Varid{m}\mathbin{\talloblong}\Varid{m}\mathrel{=}\Varid{m}}). Implementation of such non-determinism monads have been studied by \citet{Fischer:11:Purely}.

Here are some induced laws about how \ensuremath{(\mathrel{\raisebox{0.5\depth}{\scaleobj{0.5}{\langle}} \scaleobj{0.8}{\$} \raisebox{0.5\depth}{\scaleobj{0.5}{\rangle}}})} interacts with \ensuremath{\Varid{return}} and non-determinism operators:
\begin{align}
  \ensuremath{\Varid{f}\mathrel{\raisebox{0.5\depth}{\scaleobj{0.5}{\langle}} \scaleobj{0.8}{\$} \raisebox{0.5\depth}{\scaleobj{0.5}{\rangle}}}\Varid{return}\;\Varid{x}} &= \ensuremath{\Varid{return}\;(\Varid{f}\;\Varid{x})} \mbox{~~,}\label{eq:ap-return}\\
  \ensuremath{\Varid{f}\mathrel{\raisebox{0.5\depth}{\scaleobj{0.5}{\langle}} \scaleobj{0.8}{\$} \raisebox{0.5\depth}{\scaleobj{0.5}{\rangle}}}\emptyset} &= \ensuremath{\emptyset} \mbox{~~,} \label{eq:ap-mzero}\\
  \ensuremath{\Varid{f}\mathrel{\raisebox{0.5\depth}{\scaleobj{0.5}{\langle}} \scaleobj{0.8}{\$} \raisebox{0.5\depth}{\scaleobj{0.5}{\rangle}}}(\Varid{m}_{1}\mathbin{\talloblong}\Varid{m}_{2})} &= \ensuremath{(\Varid{f}\mathrel{\raisebox{0.5\depth}{\scaleobj{0.5}{\langle}} \scaleobj{0.8}{\$} \raisebox{0.5\depth}{\scaleobj{0.5}{\rangle}}}\Varid{m}_{1})\mathbin{\talloblong}(\Varid{f}\mathrel{\raisebox{0.5\depth}{\scaleobj{0.5}{\langle}} \scaleobj{0.8}{\$} \raisebox{0.5\depth}{\scaleobj{0.5}{\rangle}}}\Varid{m}_{2})}\mbox{~~.}
      \label{eq:ap-mplus}
\end{align}
\section{Permutation and Insertion}
\label{sec:perm-insert}

As a warm-up example, the function \ensuremath{\Varid{perm}} non-deterministically computes a permutation of its input, using an auxiliary function \ensuremath{\Varid{insert}} that inserts an element to an arbitrary position in a list:
\begin{hscode}\SaveRestoreHook
\column{B}{@{}>{\hspre}l<{\hspost}@{}}%
\column{18}{@{}>{\hspre}c<{\hspost}@{}}%
\column{18E}{@{}l@{}}%
\column{22}{@{}>{\hspre}l<{\hspost}@{}}%
\column{E}{@{}>{\hspre}l<{\hspost}@{}}%
\>[B]{}\Varid{perm}{}\<[18]%
\>[18]{}\mathbin{::}{}\<[18E]%
\>[22]{}[\mskip1.5mu \Varid{a}\mskip1.5mu]\to \Conid{M}\;[\mskip1.5mu \Varid{a}\mskip1.5mu]{}\<[E]%
\\
\>[B]{}\Varid{perm}\;[\mskip1.5mu \mskip1.5mu]{}\<[18]%
\>[18]{}\mathrel{=}{}\<[18E]%
\>[22]{}\Varid{return}\;[\mskip1.5mu \mskip1.5mu]{}\<[E]%
\\
\>[B]{}\Varid{perm}\;(\Varid{x}\mathbin{:}\Varid{xs}){}\<[18]%
\>[18]{}\mathrel{=}{}\<[18E]%
\>[22]{}\Varid{insert}\;\Varid{x}\mathbin{\hstretch{0.7}{=\!\!<\!\!<}}\Varid{perm}\;\Varid{xs}~~,{}\<[E]%
\\[\blanklineskip]%
\>[B]{}\Varid{insert}{}\<[18]%
\>[18]{}\mathbin{::}{}\<[18E]%
\>[22]{}\Varid{a}\to [\mskip1.5mu \Varid{a}\mskip1.5mu]\to \Conid{M}\;[\mskip1.5mu \Varid{a}\mskip1.5mu]{}\<[E]%
\\
\>[B]{}\Varid{insert}\;\Varid{x}\;[\mskip1.5mu \mskip1.5mu]{}\<[18]%
\>[18]{}\mathrel{=}{}\<[18E]%
\>[22]{}\Varid{return}\;[\mskip1.5mu \Varid{x}\mskip1.5mu]{}\<[E]%
\\
\>[B]{}\Varid{insert}\;\Varid{x}\;(\Varid{y}\mathbin{:}\Varid{xs}){}\<[18]%
\>[18]{}\mathrel{=}{}\<[18E]%
\>[22]{}\Varid{return}\;(\Varid{x}\mathbin{:}\Varid{y}\mathbin{:}\Varid{xs})\mathbin{\talloblong}((\Varid{y}\mathbin{:})\mathrel{\raisebox{0.5\depth}{\scaleobj{0.5}{\langle}} \scaleobj{0.8}{\$} \raisebox{0.5\depth}{\scaleobj{0.5}{\rangle}}}\Varid{insert}\;\Varid{x}\;\Varid{xs})~~.{}\<[E]%
\ColumnHook
\end{hscode}\resethooks
For example, possible results of \ensuremath{\Varid{perm}\;[\mskip1.5mu \mathrm{0},\mathrm{1},\mathrm{2}\mskip1.5mu]} include
\ensuremath{[\mskip1.5mu \mathrm{0},\mathrm{1},\mathrm{2}\mskip1.5mu]}, \ensuremath{[\mskip1.5mu \mathrm{0},\mathrm{2},\mathrm{1}\mskip1.5mu]}, \ensuremath{[\mskip1.5mu \mathrm{1},\mathrm{0},\mathrm{2}\mskip1.5mu]}, \ensuremath{[\mskip1.5mu \mathrm{1},\mathrm{2},\mathrm{0}\mskip1.5mu]}, \ensuremath{[\mskip1.5mu \mathrm{2},\mathrm{0},\mathrm{1}\mskip1.5mu]}, and
\ensuremath{[\mskip1.5mu \mathrm{2},\mathrm{1},\mathrm{0}\mskip1.5mu]}.

\paragraph{Determinism}
The following lemma presents properties under which permuting the input list does not change the result of a \ensuremath{\Varid{foldr}}:
\begin{lemma} \label{lemma:fold-perm}
Given \ensuremath{(\odot)\mathbin{::}\Varid{a}\to \Varid{b}\to \Varid{b}}. If \ensuremath{\Varid{x}\mathbin{\odot}(\Varid{y}\mathbin{\odot}\Varid{z})\mathrel{=}\Varid{y}\mathbin{\odot}(\Varid{x}\mathbin{\odot}\Varid{z})} for all \ensuremath{\Varid{x},\Varid{y}\mathbin{::}\Varid{a}} and \ensuremath{\Varid{z}\mathbin{::}\Varid{b}}, we have\begin{hscode}\SaveRestoreHook
\column{B}{@{}>{\hspre}l<{\hspost}@{}}%
\column{3}{@{}>{\hspre}l<{\hspost}@{}}%
\column{E}{@{}>{\hspre}l<{\hspost}@{}}%
\>[3]{}\Varid{foldr}\;(\odot)\;\Varid{z}\mathrel{\raisebox{0.5\depth}{\scaleobj{0.5}{\langle \bullet \rangle}}}\Varid{perm}\mathrel{=}\Varid{return}\mathbin{\cdot}\Varid{foldr}\;(\odot)\;\Varid{z}~~.{}\<[E]%
\ColumnHook
\end{hscode}\resethooks
\end{lemma}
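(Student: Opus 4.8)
The plan is to prove the equation pointwise by induction on the list argument. Write \ensuremath{\Varid{h}\mathrel{=}\Varid{foldr}\;(\odot)\;\Varid{z}}. Applying both sides to a list \ensuremath{\Varid{xs}} and rewriting the left-hand side with~\eqref{eq:comp-ap}, the goal reduces to showing \ensuremath{\Varid{h}\mathrel{\raisebox{0.5\depth}{\scaleobj{0.5}{\langle}} \scaleobj{0.8}{\$} \raisebox{0.5\depth}{\scaleobj{0.5}{\rangle}}}\Varid{perm}\;\Varid{xs}\mathrel{=}\Varid{return}\;(\Varid{h}\;\Varid{xs})} for every \ensuremath{\Varid{xs}}.

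The key ingredient is an auxiliary claim about \ensuremath{\Varid{insert}}: for all \ensuremath{\Varid{x}} and every list \ensuremath{\Varid{ys}}, \ensuremath{\Varid{h}\mathrel{\raisebox{0.5\depth}{\scaleobj{0.5}{\langle}} \scaleobj{0.8}{\$} \raisebox{0.5\depth}{\scaleobj{0.5}{\rangle}}}\Varid{insert}\;\Varid{x}\;\Varid{ys}\mathrel{=}\Varid{return}\;(\Varid{x}\mathbin{\odot}\Varid{h}\;\Varid{ys})}. Intuitively, wherever \ensuremath{\Varid{insert}} places \ensuremath{\Varid{x}} inside \ensuremath{\Varid{ys}}, the hypothesis on \ensuremath{(\odot)} lets us commute \ensuremath{\Varid{x}} leftwards past the elements preceding it, so the fold behaves as if \ensuremath{\Varid{x}} were at the head. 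I would prove this by induction on \ensuremath{\Varid{ys}}. The base case \ensuremath{\Varid{ys}\mathrel{=}[\mskip1.5mu \mskip1.5mu]} is immediate from~\eqref{eq:ap-return}. For \ensuremath{\Varid{ys}\mathrel{=}\Varid{y}\mathbin{:}\Varid{ys}'}, unfold \ensuremath{\Varid{insert}\;\Varid{x}\;(\Varid{y}\mathbin{:}\Varid{ys}')} to \ensuremath{\Varid{return}\;(\Varid{x}\mathbin{:}\Varid{y}\mathbin{:}\Varid{ys}')\mathbin{\talloblong}((\Varid{y}\mathbin{:})\mathrel{\raisebox{0.5\depth}{\scaleobj{0.5}{\langle}} \scaleobj{0.8}{\$} \raisebox{0.5\depth}{\scaleobj{0.5}{\rangle}}}\Varid{insert}\;\Varid{x}\;\Varid{ys}')}, distribute the \ensuremath{(\mathrel{\raisebox{0.5\depth}{\scaleobj{0.5}{\langle}} \scaleobj{0.8}{\$} \raisebox{0.5\depth}{\scaleobj{0.5}{\rangle}}})} over \ensuremath{(\talloblong)} using~\eqref{eq:ap-mplus}, and simplify each branch. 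The first branch reduces, via~\eqref{eq:ap-return} and the definition of \ensuremath{\Varid{h}}, to \ensuremath{\Varid{return}\;(\Varid{x}\mathbin{\odot}(\Varid{y}\mathbin{\odot}\Varid{h}\;\Varid{ys}'))}; the second, using~\eqref{eq:comp-ap-ap}, the identity \ensuremath{\Varid{h}\mathbin{\cdot}(\Varid{y}\mathbin{:})\mathrel{=}(\Varid{y}\mathbin{\odot})\mathbin{\cdot}\Varid{h}}, the induction hypothesis, and~\eqref{eq:ap-return}, reduces to \ensuremath{\Varid{return}\;(\Varid{y}\mathbin{\odot}(\Varid{x}\mathbin{\odot}\Varid{h}\;\Varid{ys}'))}. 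The hypothesis \ensuremath{\Varid{y}\mathbin{\odot}(\Varid{x}\mathbin{\odot}\Varid{h}\;\Varid{ys}')\mathrel{=}\Varid{x}\mathbin{\odot}(\Varid{y}\mathbin{\odot}\Varid{h}\;\Varid{ys}')} makes the two branches equal, so idempotence of \ensuremath{(\talloblong)} collapses the choice; by the definition of \ensuremath{\Varid{h}} the result is \ensuremath{\Varid{return}\;(\Varid{x}\mathbin{\odot}\Varid{h}\;(\Varid{y}\mathbin{:}\Varid{ys}'))}, as the step requires.

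With this auxiliary claim the main induction on \ensuremath{\Varid{xs}} will be short. The base case uses \ensuremath{\Varid{perm}\;[\mskip1.5mu \mskip1.5mu]\mathrel{=}\Varid{return}\;[\mskip1.5mu \mskip1.5mu]} together with~\eqref{eq:ap-return}. For \ensuremath{\Varid{xs}\mathrel{=}\Varid{x}\mathbin{:}\Varid{xs}'}, unfold \ensuremath{\Varid{perm}\;(\Varid{x}\mathbin{:}\Varid{xs}')\mathrel{=}\Varid{insert}\;\Varid{x}\mathbin{\hstretch{0.7}{=\!\!<\!\!<}}\Varid{perm}\;\Varid{xs}'} and apply~\eqref{eq:mcomp-bind-ap} to rewrite \ensuremath{\Varid{h}\mathrel{\raisebox{0.5\depth}{\scaleobj{0.5}{\langle}} \scaleobj{0.8}{\$} \raisebox{0.5\depth}{\scaleobj{0.5}{\rangle}}}(\Varid{insert}\;\Varid{x}\mathbin{\hstretch{0.7}{=\!\!<\!\!<}}\Varid{perm}\;\Varid{xs}')} as \ensuremath{(\Varid{h}\mathrel{\raisebox{0.5\depth}{\scaleobj{0.5}{\langle \bullet \rangle}}}\Varid{insert}\;\Varid{x})\mathbin{\hstretch{0.7}{=\!\!<\!\!<}}\Varid{perm}\;\Varid{xs}'}. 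From the auxiliary claim and~\eqref{eq:comp-ap} one obtains \ensuremath{\Varid{h}\mathrel{\raisebox{0.5\depth}{\scaleobj{0.5}{\langle \bullet \rangle}}}\Varid{insert}\;\Varid{x}\mathrel{=}\Varid{return}\mathbin{\cdot}((\Varid{x}\mathbin{\odot})\mathbin{\cdot}\Varid{h})}, so the expression becomes \ensuremath{(\Varid{return}\mathbin{\cdot}((\Varid{x}\mathbin{\odot})\mathbin{\cdot}\Varid{h}))\mathbin{\hstretch{0.7}{=\!\!<\!\!<}}\Varid{perm}\;\Varid{xs}'}; unfolding the definition of \ensuremath{(\mathrel{\raisebox{0.5\depth}{\scaleobj{0.5}{\langle}} \scaleobj{0.8}{\$} \raisebox{0.5\depth}{\scaleobj{0.5}{\rangle}}})} and applying~\eqref{eq:comp-ap-ap} turns it into \ensuremath{(\Varid{x}\mathbin{\odot})\mathrel{\raisebox{0.5\depth}{\scaleobj{0.5}{\langle}} \scaleobj{0.8}{\$} \raisebox{0.5\depth}{\scaleobj{0.5}{\rangle}}}(\Varid{h}\mathrel{\raisebox{0.5\depth}{\scaleobj{0.5}{\langle}} \scaleobj{0.8}{\$} \raisebox{0.5\depth}{\scaleobj{0.5}{\rangle}}}\Varid{perm}\;\Varid{xs}')}. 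The induction hypothesis rewrites \ensuremath{\Varid{h}\mathrel{\raisebox{0.5\depth}{\scaleobj{0.5}{\langle}} \scaleobj{0.8}{\$} \raisebox{0.5\depth}{\scaleobj{0.5}{\rangle}}}\Varid{perm}\;\Varid{xs}'} to \ensuremath{\Varid{return}\;(\Varid{h}\;\Varid{xs}')}, and~\eqref{eq:ap-return} followed by the definition of \ensuremath{\Varid{h}} gives \ensuremath{\Varid{return}\;(\Varid{x}\mathbin{\odot}\Varid{h}\;\Varid{xs}')\mathrel{=}\Varid{return}\;(\Varid{h}\;(\Varid{x}\mathbin{:}\Varid{xs}'))}, completing the induction.

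I expect the auxiliary \ensuremath{\Varid{insert}} claim --- and specifically its inductive step --- to be the only real difficulty: it is the sole point where the commutativity-style hypothesis on \ensuremath{(\odot)} is actually used, and where the two non-deterministic alternatives produced by \ensuremath{\Varid{insert}} must be brought into a common form before idempotence of \ensuremath{(\talloblong)} can merge them. Everything else is type-directed bookkeeping with the \ensuremath{(\mathrel{\raisebox{0.5\depth}{\scaleobj{0.5}{\langle}} \scaleobj{0.8}{\$} \raisebox{0.5\depth}{\scaleobj{0.5}{\rangle}}})}, \ensuremath{(\mathrel{\raisebox{0.5\depth}{\scaleobj{0.5}{\langle \bullet \rangle}}})} and \ensuremath{(\mathbin{\hstretch{0.7}{=\!\!<\!\!<}})} laws collected earlier.
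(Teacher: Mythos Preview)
Your proposal is correct and follows essentially the same route as the paper: the auxiliary claim about \ensuremath{\Varid{insert}} is exactly the paper's Lemma~\ref{lemma:fold-insert} (your \ensuremath{\Varid{return}\;(\Varid{x}\mathbin{\odot}\Varid{h}\;\Varid{ys})} is the paper's \ensuremath{\Varid{return}\;(\Varid{foldr}\;(\odot)\;\Varid{z}\;(\Varid{x}\mathbin{:}\Varid{ys}))} after one unfold of \ensuremath{\Varid{foldr}}), and both the auxiliary and main inductions proceed through the same equational steps and cite the same laws.
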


Since \ensuremath{\Varid{perm}} is defined in terms of \ensuremath{\Varid{insert}}, proof of Lemma~\ref{lemma:fold-perm} naturally depends on a lemma about a related property of \ensuremath{\Varid{insert}}:
\begin{lemma} \label{lemma:fold-insert}
Given \ensuremath{(\odot)\mathbin{::}\Varid{a}\to \Varid{b}\to \Varid{b}}, we have\begin{hscode}\SaveRestoreHook
\column{B}{@{}>{\hspre}l<{\hspost}@{}}%
\column{3}{@{}>{\hspre}l<{\hspost}@{}}%
\column{E}{@{}>{\hspre}l<{\hspost}@{}}%
\>[3]{}\Varid{foldr}\;(\odot)\;\Varid{z}\mathrel{\raisebox{0.5\depth}{\scaleobj{0.5}{\langle \bullet \rangle}}}\Varid{insert}\;\Varid{x}\mathrel{=}\Varid{return}\mathbin{\cdot}\Varid{foldr}\;(\odot)\;\Varid{z}\mathbin{\cdot}(\Varid{x}\mathbin{:})~~,{}\<[E]%
\ColumnHook
\end{hscode}\resethooks
provided that \ensuremath{\Varid{x}\mathbin{\odot}(\Varid{y}\mathbin{\odot}\Varid{z})\mathrel{=}\Varid{y}\mathbin{\odot}(\Varid{x}\mathbin{\odot}\Varid{z})}
for all \ensuremath{\Varid{x},\Varid{y}\mathbin{::}\Varid{a}} and \ensuremath{\Varid{z}\mathbin{::}\Varid{b}}.
\end{lemma}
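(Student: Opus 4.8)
The plan is to prove this equation between functions of type $[\mskip1.5mu\Varid{a}\mskip1.5mu]\to\Conid{M}\;[\mskip1.5mu\Varid{a}\mskip1.5mu]$ by extensionality, followed by induction on the list argument. Applying both sides to a list $\Varid{xs}$ and rewriting the left-hand side with~\eqref{eq:comp-ap} reduces the goal to
\[
\Varid{foldr}\;(\odot)\;\Varid{z}\mathrel{\raisebox{0.5\depth}{\scaleobj{0.5}{\langle}} \scaleobj{0.8}{\$} \raisebox{0.5\depth}{\scaleobj{0.5}{\rangle}}}\Varid{insert}\;\Varid{x}\;\Varid{xs} = \Varid{return}\;(\Varid{foldr}\;(\odot)\;\Varid{z}\;(\Varid{x}\mathbin{:}\Varid{xs}))\mbox{~~,}
\]
which I then establish by induction on $\Varid{xs}$, with $\Varid{x}$, $\Varid{z}$, and $(\odot)$ (together with its left-commutativity property) held fixed throughout.

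The base case $\Varid{xs}=[\mskip1.5mu\mskip1.5mu]$ is immediate: $\Varid{insert}\;\Varid{x}\;[\mskip1.5mu\mskip1.5mu]=\Varid{return}\;[\mskip1.5mu\Varid{x}\mskip1.5mu]$, so~\eqref{eq:ap-return} rewrites the left-hand side to $\Varid{return}\;(\Varid{foldr}\;(\odot)\;\Varid{z}\;[\mskip1.5mu\Varid{x}\mskip1.5mu])$, which is the right-hand side. In the inductive case $\Varid{xs}=\Varid{y}\mathbin{:}\Varid{ys}$, I unfold $\Varid{insert}\;\Varid{x}\;(\Varid{y}\mathbin{:}\Varid{ys})$ to the choice $\Varid{return}\;(\Varid{x}\mathbin{:}\Varid{y}\mathbin{:}\Varid{ys})\mathbin{\talloblong}((\Varid{y}\mathbin{:})\mathrel{\raisebox{0.5\depth}{\scaleobj{0.5}{\langle}} \scaleobj{0.8}{\$} \raisebox{0.5\depth}{\scaleobj{0.5}{\rangle}}}\Varid{insert}\;\Varid{x}\;\Varid{ys})$ and distribute the $(\mathrel{\raisebox{0.5\depth}{\scaleobj{0.5}{\langle}} \scaleobj{0.8}{\$} \raisebox{0.5\depth}{\scaleobj{0.5}{\rangle}}})$ over it using~\eqref{eq:ap-mplus}. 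The first branch becomes $\Varid{return}\;(\Varid{foldr}\;(\odot)\;\Varid{z}\;(\Varid{x}\mathbin{:}\Varid{y}\mathbin{:}\Varid{ys}))$ by~\eqref{eq:ap-return}. For the second branch I fuse the two maps via~\eqref{eq:comp-ap-ap}, rewrite $\Varid{foldr}\;(\odot)\;\Varid{z}\mathbin{\cdot}(\Varid{y}\mathbin{:})$ as $(\Varid{y}\mathbin{\odot})\mathbin{\cdot}\Varid{foldr}\;(\odot)\;\Varid{z}$ by the definition of $\Varid{foldr}$, split the maps back apart with~\eqref{eq:comp-ap-ap}, re-fold the inner map to the $(\mathrel{\raisebox{0.5\depth}{\scaleobj{0.5}{\langle \bullet \rangle}}})$-application $(\Varid{foldr}\;(\odot)\;\Varid{z}\mathrel{\raisebox{0.5\depth}{\scaleobj{0.5}{\langle \bullet \rangle}}}\Varid{insert}\;\Varid{x})\;\Varid{ys}$ via~\eqref{eq:comp-ap}, apply the induction hypothesis, and finish with~\eqref{eq:ap-return} once more, so that the second branch reduces to $\Varid{return}\;(\Varid{y}\mathbin{\odot}\Varid{foldr}\;(\odot)\;\Varid{z}\;(\Varid{x}\mathbin{:}\Varid{ys}))$.

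Writing $\Varid{w}$ for $\Varid{foldr}\;(\odot)\;\Varid{z}\;\Varid{ys}$ and unfolding $\Varid{foldr}$ on the remaining cons-lists, the goal has become
\[
\Varid{return}\;(\Varid{x}\mathbin{\odot}(\Varid{y}\mathbin{\odot}\Varid{w}))\mathbin{\talloblong}\Varid{return}\;(\Varid{y}\mathbin{\odot}(\Varid{x}\mathbin{\odot}\Varid{w})) = \Varid{return}\;(\Varid{x}\mathbin{\odot}(\Varid{y}\mathbin{\odot}\Varid{w}))\mbox{~~.}
\]
Here the hypothesis, instantiated at $\Varid{x}$, $\Varid{y}$, and $\Varid{w}$ (the last playing the role of the hypothesis's universally quantified element), makes the two arguments of $\Varid{return}$ equal, so the two branches of the choice coincide, and idempotence of $(\talloblong)$ then closes the case. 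I expect the genuine content of the proof to lie entirely in this last step --- recognising that the left-commutativity hypothesis is exactly what identifies the two non-deterministic results, and that it is idempotence, rather than the monoid laws or~\eqref{eq:bind-mplus-dist}, that performs the final collapse. Everything else is routine rewriting with the $(\mathrel{\raisebox{0.5\depth}{\scaleobj{0.5}{\langle}} \scaleobj{0.8}{\$} \raisebox{0.5\depth}{\scaleobj{0.5}{\rangle}}})$ and $(\mathrel{\raisebox{0.5\depth}{\scaleobj{0.5}{\langle \bullet \rangle}}})$ laws, which, as the surrounding text notes, is largely forced by the types.
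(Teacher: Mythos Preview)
Your proposal is correct and follows essentially the same route as the paper's own proof: pointwise reduction to the $(\mathrel{\raisebox{0.5\depth}{\scaleobj{0.5}{\langle}} \scaleobj{0.8}{\$} \raisebox{0.5\depth}{\scaleobj{0.5}{\rangle}}})$ form, induction on the list, distribution over the choice via~\eqref{eq:ap-mplus}, the same rewriting of the second branch through~\eqref{eq:comp-ap-ap} and the induction hypothesis, and finally the left-commutativity hypothesis plus idempotence of $(\talloblong)$ to collapse the two identical branches. The only cosmetic difference is your extra detour through $(\mathrel{\raisebox{0.5\depth}{\scaleobj{0.5}{\langle \bullet \rangle}}})$ via~\eqref{eq:comp-ap} before invoking the induction hypothesis; the paper stays in the $(\mathrel{\raisebox{0.5\depth}{\scaleobj{0.5}{\langle}} \scaleobj{0.8}{\$} \raisebox{0.5\depth}{\scaleobj{0.5}{\rangle}}})$ form throughout, but this makes no real difference.
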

\begin{proof} Prove \ensuremath{\Varid{foldr}\;(\odot)\;\Varid{z}\mathrel{\raisebox{0.5\depth}{\scaleobj{0.5}{\langle}} \scaleobj{0.8}{\$} \raisebox{0.5\depth}{\scaleobj{0.5}{\rangle}}}\Varid{insert}\;\Varid{x}\;\Varid{xs}\mathrel{=}\Varid{return}\;(\Varid{foldr}\;(\odot)\;\Varid{z}\;(\Varid{x}\mathbin{:}\Varid{xs}))}. Induction on \ensuremath{\Varid{xs}}.

{\sc Case} \ensuremath{\Varid{xs}\mathbin{:=}[\mskip1.5mu \mskip1.5mu]}:
\begin{hscode}\SaveRestoreHook
\column{B}{@{}>{\hspre}l<{\hspost}@{}}%
\column{4}{@{}>{\hspre}l<{\hspost}@{}}%
\column{7}{@{}>{\hspre}l<{\hspost}@{}}%
\column{31}{@{}>{\hspre}l<{\hspost}@{}}%
\column{E}{@{}>{\hspre}l<{\hspost}@{}}%
\>[4]{}\Varid{foldr}\;(\odot)\;\Varid{z}\mathrel{\raisebox{0.5\depth}{\scaleobj{0.5}{\langle}} \scaleobj{0.8}{\$} \raisebox{0.5\depth}{\scaleobj{0.5}{\rangle}}}\Varid{insert}\;\Varid{x}\;[\mskip1.5mu \mskip1.5mu]{}\<[E]%
\\
\>[B]{}\mathbin{=}{}\<[7]%
\>[7]{}\mbox{\commentbegin  definition of \ensuremath{\Varid{insert}}  \commentend}{}\<[E]%
\\
\>[B]{}\hsindent{4}{}\<[4]%
\>[4]{}\Varid{foldr}\;(\odot)\;\Varid{z}\mathrel{\raisebox{0.5\depth}{\scaleobj{0.5}{\langle}} \scaleobj{0.8}{\$} \raisebox{0.5\depth}{\scaleobj{0.5}{\rangle}}}\Varid{return}\;[\mskip1.5mu \Varid{x}\mskip1.5mu]{}\<[E]%
\\
\>[B]{}\mathbin{=}{}\<[7]%
\>[7]{}\mbox{\commentbegin  by \eqref{eq:ap-return}  \commentend}{}\<[E]%
\\
\>[B]{}\hsindent{4}{}\<[4]%
\>[4]{}\Varid{return}\;(\Varid{foldr}\;(\odot)\;\Varid{z}\;[\mskip1.5mu \Varid{x}\mskip1.5mu]){}\<[31]%
\>[31]{}~~.{}\<[E]%
\ColumnHook
\end{hscode}\resethooks

{\sc Case} \ensuremath{\Varid{xs}\mathbin{:=}\Varid{y}\mathbin{:}\Varid{xs}}:
\begin{hscode}\SaveRestoreHook
\column{B}{@{}>{\hspre}l<{\hspost}@{}}%
\column{4}{@{}>{\hspre}l<{\hspost}@{}}%
\column{9}{@{}>{\hspre}l<{\hspost}@{}}%
\column{E}{@{}>{\hspre}l<{\hspost}@{}}%
\>[4]{}\Varid{foldr}\;(\odot)\;\Varid{z}\mathrel{\raisebox{0.5\depth}{\scaleobj{0.5}{\langle}} \scaleobj{0.8}{\$} \raisebox{0.5\depth}{\scaleobj{0.5}{\rangle}}}\Varid{insert}\;\Varid{x}\;(\Varid{y}\mathbin{:}\Varid{xs}){}\<[E]%
\\
\>[B]{}\mathbin{=}{}\<[9]%
\>[9]{}\mbox{\commentbegin  definition of \ensuremath{\Varid{insert}}  \commentend}{}\<[E]%
\\
\>[B]{}\hsindent{4}{}\<[4]%
\>[4]{}\Varid{foldr}\;(\odot)\;\Varid{z}\mathrel{\raisebox{0.5\depth}{\scaleobj{0.5}{\langle}} \scaleobj{0.8}{\$} \raisebox{0.5\depth}{\scaleobj{0.5}{\rangle}}}(\Varid{return}\;(\Varid{x}\mathbin{:}\Varid{y}\mathbin{:}\Varid{xs})\mathbin{\talloblong}((\Varid{y}\mathbin{:})\mathrel{\raisebox{0.5\depth}{\scaleobj{0.5}{\langle}} \scaleobj{0.8}{\$} \raisebox{0.5\depth}{\scaleobj{0.5}{\rangle}}}\Varid{insert}\;\Varid{x}\;\Varid{xs})){}\<[E]%
\\
\>[B]{}\mathbin{=}{}\<[9]%
\>[9]{}\mbox{\commentbegin  by \eqref{eq:ap-mplus}, \eqref{eq:ap-return}, and \eqref{eq:comp-ap-ap}  \commentend}{}\<[E]%
\\
\>[B]{}\hsindent{4}{}\<[4]%
\>[4]{}\Varid{return}\;(\Varid{foldr}\;(\odot)\;\Varid{z}\;(\Varid{x}\mathbin{:}\Varid{y}\mathbin{:}\Varid{xs}))\mathbin{\talloblong}((\Varid{foldr}\;(\odot)\;\Varid{z}\mathbin{\cdot}(\Varid{y}\mathbin{:}))\mathrel{\raisebox{0.5\depth}{\scaleobj{0.5}{\langle}} \scaleobj{0.8}{\$} \raisebox{0.5\depth}{\scaleobj{0.5}{\rangle}}}\Varid{insert}\;\Varid{x}\;\Varid{xs})~~.{}\<[E]%
\ColumnHook
\end{hscode}\resethooks
Focus on the second branch of \ensuremath{(\talloblong)}:
\begin{hscode}\SaveRestoreHook
\column{B}{@{}>{\hspre}l<{\hspost}@{}}%
\column{4}{@{}>{\hspre}l<{\hspost}@{}}%
\column{9}{@{}>{\hspre}l<{\hspost}@{}}%
\column{10}{@{}>{\hspre}l<{\hspost}@{}}%
\column{E}{@{}>{\hspre}l<{\hspost}@{}}%
\>[4]{}(\Varid{foldr}\;(\odot)\;\Varid{z}\mathbin{\cdot}(\Varid{y}\mathbin{:}))\mathrel{\raisebox{0.5\depth}{\scaleobj{0.5}{\langle}} \scaleobj{0.8}{\$} \raisebox{0.5\depth}{\scaleobj{0.5}{\rangle}}}\Varid{insert}\;\Varid{x}\;\Varid{xs}{}\<[E]%
\\
\>[B]{}\mathbin{=}{}\<[9]%
\>[9]{}\mbox{\commentbegin  definition of \ensuremath{\Varid{foldr}}  \commentend}{}\<[E]%
\\
\>[B]{}\hsindent{4}{}\<[4]%
\>[4]{}((\Varid{y}\mathbin{\odot})\mathbin{\cdot}\Varid{foldr}\;(\odot)\;\Varid{z})\mathrel{\raisebox{0.5\depth}{\scaleobj{0.5}{\langle}} \scaleobj{0.8}{\$} \raisebox{0.5\depth}{\scaleobj{0.5}{\rangle}}}\Varid{insert}\;\Varid{x}\;\Varid{xs}{}\<[E]%
\\
\>[B]{}\mathbin{=}{}\<[9]%
\>[9]{}\mbox{\commentbegin  by \eqref{eq:comp-ap-ap}  \commentend}{}\<[E]%
\\
\>[B]{}\hsindent{4}{}\<[4]%
\>[4]{}(\Varid{y}\mathbin{\odot})\mathrel{\raisebox{0.5\depth}{\scaleobj{0.5}{\langle}} \scaleobj{0.8}{\$} \raisebox{0.5\depth}{\scaleobj{0.5}{\rangle}}}(\Varid{foldr}\;(\odot)\;\Varid{z}\mathrel{\raisebox{0.5\depth}{\scaleobj{0.5}{\langle}} \scaleobj{0.8}{\$} \raisebox{0.5\depth}{\scaleobj{0.5}{\rangle}}}\Varid{insert}\;\Varid{x}\;\Varid{xs}){}\<[E]%
\\
\>[B]{}\mathbin{=}{}\<[10]%
\>[10]{}\mbox{\commentbegin  induction  \commentend}{}\<[E]%
\\
\>[B]{}\hsindent{4}{}\<[4]%
\>[4]{}(\Varid{y}\mathbin{\odot})\mathrel{\raisebox{0.5\depth}{\scaleobj{0.5}{\langle}} \scaleobj{0.8}{\$} \raisebox{0.5\depth}{\scaleobj{0.5}{\rangle}}}\Varid{return}\;(\Varid{foldr}\;(\odot)\;\Varid{z}\;(\Varid{x}\mathbin{:}\Varid{xs})){}\<[E]%
\\
\>[B]{}\mathbin{=}{}\<[10]%
\>[10]{}\mbox{\commentbegin  by \eqref{eq:ap-return}  \commentend}{}\<[E]%
\\
\>[B]{}\hsindent{4}{}\<[4]%
\>[4]{}\Varid{return}\;(\Varid{y}\mathbin{\odot}\Varid{foldr}\;(\odot)\;\Varid{z}\;(\Varid{x}\mathbin{:}\Varid{xs})){}\<[E]%
\\
\>[B]{}\mathbin{=}{}\<[10]%
\>[10]{}\mbox{\commentbegin  definition of \ensuremath{\Varid{foldr}}  \commentend}{}\<[E]%
\\
\>[B]{}\hsindent{4}{}\<[4]%
\>[4]{}\Varid{return}\;(\Varid{y}\mathbin{\odot}(\Varid{x}\mathbin{\odot}\Varid{foldr}\;(\odot)\;\Varid{z}\;\Varid{xs})){}\<[E]%
\\
\>[B]{}\mathbin{=}{}\<[10]%
\>[10]{}\mbox{\commentbegin  since \ensuremath{\Varid{x}\mathbin{\odot}(\Varid{y}\mathbin{\odot}\Varid{z})\mathrel{=}\Varid{y}\mathbin{\odot}(\Varid{x}\mathbin{\odot}\Varid{z})}  \commentend}{}\<[E]%
\\
\>[B]{}\hsindent{4}{}\<[4]%
\>[4]{}\Varid{return}\;(\Varid{foldr}\;(\odot)\;\Varid{z}\;(\Varid{x}\mathbin{:}\Varid{y}\mathbin{:}\Varid{xs}))~~.{}\<[E]%
\ColumnHook
\end{hscode}\resethooks
Thus we have
\begin{hscode}\SaveRestoreHook
\column{B}{@{}>{\hspre}l<{\hspost}@{}}%
\column{4}{@{}>{\hspre}l<{\hspost}@{}}%
\column{9}{@{}>{\hspre}l<{\hspost}@{}}%
\column{E}{@{}>{\hspre}l<{\hspost}@{}}%
\>[4]{}(\Varid{foldr}\;(\odot)\;\Varid{z}\mathrel{\raisebox{0.5\depth}{\scaleobj{0.5}{\langle \bullet \rangle}}}\Varid{insert}\;\Varid{x})\;(\Varid{y}\mathbin{:}\Varid{xs}){}\<[E]%
\\
\>[B]{}\mathbin{=}{}\<[9]%
\>[9]{}\mbox{\commentbegin  calculation above  \commentend}{}\<[E]%
\\
\>[B]{}\hsindent{4}{}\<[4]%
\>[4]{}\Varid{return}\;(\Varid{foldr}\;(\odot)\;\Varid{z}\;(\Varid{x}\mathbin{:}\Varid{y}\mathbin{:}\Varid{xs}))\mathbin{\talloblong}\Varid{return}\;(\Varid{foldr}\;(\odot)\;\Varid{z}\;(\Varid{x}\mathbin{:}\Varid{y}\mathbin{:}\Varid{xs})){}\<[E]%
\\
\>[B]{}\mathbin{=}{}\<[9]%
\>[9]{}\mbox{\commentbegin  idempotence of \ensuremath{(\talloblong)}  \commentend}{}\<[E]%
\\
\>[B]{}\hsindent{4}{}\<[4]%
\>[4]{}\Varid{return}\;(\Varid{foldr}\;(\odot)\;\Varid{z}\;(\Varid{x}\mathbin{:}\Varid{y}\mathbin{:}\Varid{xs}))~~.{}\<[E]%
\ColumnHook
\end{hscode}\resethooks
\end{proof}

Proof of Lemma~\ref{lemma:fold-perm} then follows:
\begin{proof}
Prove that \ensuremath{\Varid{foldr}\;(\odot)\;\Varid{z}\mathrel{\raisebox{0.5\depth}{\scaleobj{0.5}{\langle}} \scaleobj{0.8}{\$} \raisebox{0.5\depth}{\scaleobj{0.5}{\rangle}}}\Varid{perm}\;\Varid{xs}\mathrel{=}\Varid{return}\;(\Varid{foldr}\;(\odot)\;\Varid{z}\;\Varid{xs})}.
Induction on \ensuremath{\Varid{xs}}.

{\sc Case} \ensuremath{\Varid{xs}\mathbin{:=}[\mskip1.5mu \mskip1.5mu]}:
\begin{hscode}\SaveRestoreHook
\column{B}{@{}>{\hspre}l<{\hspost}@{}}%
\column{3}{@{}>{\hspre}l<{\hspost}@{}}%
\column{8}{@{}>{\hspre}l<{\hspost}@{}}%
\column{E}{@{}>{\hspre}l<{\hspost}@{}}%
\>[3]{}\Varid{foldr}\;(\odot)\;\Varid{z}\mathrel{\raisebox{0.5\depth}{\scaleobj{0.5}{\langle}} \scaleobj{0.8}{\$} \raisebox{0.5\depth}{\scaleobj{0.5}{\rangle}}}\Varid{perm}\;[\mskip1.5mu \mskip1.5mu]{}\<[E]%
\\
\>[B]{}\mathbin{=}{}\<[8]%
\>[8]{}\mbox{\commentbegin  definitions of \ensuremath{\Varid{perm}}  \commentend}{}\<[E]%
\\
\>[B]{}\hsindent{3}{}\<[3]%
\>[3]{}\Varid{foldr}\;(\odot)\;\Varid{z}\mathrel{\raisebox{0.5\depth}{\scaleobj{0.5}{\langle}} \scaleobj{0.8}{\$} \raisebox{0.5\depth}{\scaleobj{0.5}{\rangle}}}\Varid{return}\;[\mskip1.5mu \mskip1.5mu]{}\<[E]%
\\
\>[B]{}\mathbin{=}{}\<[8]%
\>[8]{}\mbox{\commentbegin  by \eqref{eq:ap-return}  \commentend}{}\<[E]%
\\
\>[B]{}\hsindent{3}{}\<[3]%
\>[3]{}\Varid{return}\;(\Varid{foldr}\;(\odot)\;\Varid{z}\;[\mskip1.5mu \mskip1.5mu])~~.{}\<[E]%
\ColumnHook
\end{hscode}\resethooks

{\sc Case} \ensuremath{\Varid{xs}\mathbin{:=}\Varid{x}\mathbin{:}\Varid{xs}}:
\begin{hscode}\SaveRestoreHook
\column{B}{@{}>{\hspre}l<{\hspost}@{}}%
\column{4}{@{}>{\hspre}l<{\hspost}@{}}%
\column{9}{@{}>{\hspre}l<{\hspost}@{}}%
\column{E}{@{}>{\hspre}l<{\hspost}@{}}%
\>[4]{}\Varid{foldr}\;(\odot)\;\Varid{z}\mathrel{\raisebox{0.5\depth}{\scaleobj{0.5}{\langle}} \scaleobj{0.8}{\$} \raisebox{0.5\depth}{\scaleobj{0.5}{\rangle}}}\Varid{perm}\;(\Varid{x}\mathbin{:}\Varid{xs}){}\<[E]%
\\
\>[B]{}\mathbin{=}{}\<[9]%
\>[9]{}\mbox{\commentbegin  definition of \ensuremath{\Varid{perm}}  \commentend}{}\<[E]%
\\
\>[B]{}\hsindent{4}{}\<[4]%
\>[4]{}\Varid{foldr}\;(\odot)\;\Varid{z}\mathrel{\raisebox{0.5\depth}{\scaleobj{0.5}{\langle}} \scaleobj{0.8}{\$} \raisebox{0.5\depth}{\scaleobj{0.5}{\rangle}}}(\Varid{insert}\;\Varid{x}\mathbin{\hstretch{0.7}{=\!\!<\!\!<}}\Varid{perm}\;\Varid{xs}){}\<[E]%
\\
\>[B]{}\mathbin{=}{}\<[9]%
\>[9]{}\mbox{\commentbegin  by~\eqref{eq:mcomp-bind-ap}  \commentend}{}\<[E]%
\\
\>[B]{}\hsindent{4}{}\<[4]%
\>[4]{}(\Varid{foldr}\;(\odot)\;\Varid{z}\mathrel{\raisebox{0.5\depth}{\scaleobj{0.5}{\langle \bullet \rangle}}}\Varid{insert}\;\Varid{x})\mathbin{\hstretch{0.7}{=\!\!<\!\!<}}\Varid{perm}\;\Varid{xs}{}\<[E]%
\\
\>[B]{}\mathbin{=}{}\<[9]%
\>[9]{}\mbox{\commentbegin  Lemma \ref{lemma:fold-insert} \commentend}{}\<[E]%
\\
\>[B]{}\hsindent{4}{}\<[4]%
\>[4]{}(\Varid{return}\mathbin{\cdot}\Varid{foldr}\;(\odot)\;\Varid{z}\mathbin{\cdot}(\Varid{x}\mathbin{:}))\mathbin{\hstretch{0.7}{=\!\!<\!\!<}}\Varid{perm}\;\Varid{xs}{}\<[E]%
\\
\>[B]{}\mathbin{=}{}\<[9]%
\>[9]{}\mbox{\commentbegin  definitions of \ensuremath{\Varid{foldr}} and \ensuremath{(\mathrel{\raisebox{0.5\depth}{\scaleobj{0.5}{\langle}} \scaleobj{0.8}{\$} \raisebox{0.5\depth}{\scaleobj{0.5}{\rangle}}})}  \commentend}{}\<[E]%
\\
\>[B]{}\hsindent{4}{}\<[4]%
\>[4]{}((\Varid{x}\mathbin{\odot})\mathbin{\cdot}\Varid{foldr}\;(\odot)\;\Varid{z})\mathrel{\raisebox{0.5\depth}{\scaleobj{0.5}{\langle}} \scaleobj{0.8}{\$} \raisebox{0.5\depth}{\scaleobj{0.5}{\rangle}}}\Varid{perm}\;\Varid{xs}{}\<[E]%
\\
\>[B]{}\mathbin{=}{}\<[9]%
\>[9]{}\mbox{\commentbegin  by \eqref{eq:comp-ap-ap}  \commentend}{}\<[E]%
\\
\>[B]{}\hsindent{4}{}\<[4]%
\>[4]{}(\Varid{x}\mathbin{\odot})\mathrel{\raisebox{0.5\depth}{\scaleobj{0.5}{\langle}} \scaleobj{0.8}{\$} \raisebox{0.5\depth}{\scaleobj{0.5}{\rangle}}}(\Varid{foldr}\;(\odot)\;\Varid{z}\mathrel{\raisebox{0.5\depth}{\scaleobj{0.5}{\langle}} \scaleobj{0.8}{\$} \raisebox{0.5\depth}{\scaleobj{0.5}{\rangle}}}\Varid{perm}\;\Varid{xs}){}\<[E]%
\\
\>[B]{}\mathbin{=}{}\<[9]%
\>[9]{}\mbox{\commentbegin  induction  \commentend}{}\<[E]%
\\
\>[B]{}\hsindent{4}{}\<[4]%
\>[4]{}(\Varid{x}\mathbin{\odot})\mathrel{\raisebox{0.5\depth}{\scaleobj{0.5}{\langle}} \scaleobj{0.8}{\$} \raisebox{0.5\depth}{\scaleobj{0.5}{\rangle}}}(\Varid{return}\;(\Varid{foldr}\;(\odot)\;\Varid{z}\;\Varid{xs})){}\<[E]%
\\
\>[B]{}\mathbin{=}{}\<[9]%
\>[9]{}\mbox{\commentbegin  by \eqref{eq:ap-return}  \commentend}{}\<[E]%
\\
\>[B]{}\hsindent{4}{}\<[4]%
\>[4]{}\Varid{return}\;(\Varid{x}\mathbin{\odot}\Varid{foldr}\;(\odot)\;\Varid{z}\;\Varid{xs}){}\<[E]%
\\
\>[B]{}\mathbin{=}{}\<[9]%
\>[9]{}\mbox{\commentbegin  definition of \ensuremath{\Varid{foldr}}  \commentend}{}\<[E]%
\\
\>[B]{}\hsindent{4}{}\<[4]%
\>[4]{}\Varid{return}\;(\Varid{foldr}\;(\odot)\;\Varid{z}\;(\Varid{x}\mathbin{:}\Varid{xs}))~~.{}\<[E]%
\ColumnHook
\end{hscode}\resethooks
\end{proof}

\paragraph{Map, Filter, and Permutation}
It is not hard for one to formulate the following relationship between
\ensuremath{\Varid{map}} and \ensuremath{\Varid{perm}}, which is also based on a related property relating \ensuremath{\Varid{map}} and \ensuremath{\Varid{insert}}:%
\footnote{Lemma \ref{lemma:shuffle-map} and \ref{lemma:insert-map} are in fact free theorems
%following from the types
of \ensuremath{\Varid{perm}} and \ensuremath{\Varid{insert}}~\citep{Voigtlander:09:Free}. They serve as good exercises, nevertheless.}
\begin{lemma}\label{lemma:shuffle-map}
  \ensuremath{\Varid{perm}\mathbin{\cdot}\Varid{map}\;\Varid{f}\mathrel{=}\Varid{map}\;\Varid{f}\mathrel{\raisebox{0.5\depth}{\scaleobj{0.5}{\langle \bullet \rangle}}}\Varid{perm}}.
\end{lemma}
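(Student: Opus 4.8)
The plan is to follow the same pattern as the proof of Lemma~\ref{lemma:fold-perm}: since \ensuremath{\Varid{perm}} is built from \ensuremath{\Varid{insert}}, I would first establish the companion statement Lemma~\ref{lemma:insert-map}, namely \ensuremath{\Varid{insert}\;(\Varid{f}\;\Varid{x})\mathbin{\cdot}\Varid{map}\;\Varid{f}\mathrel{=}\Varid{map}\;\Varid{f}\mathrel{\raisebox{0.5\depth}{\scaleobj{0.5}{\langle \bullet \rangle}}}\Varid{insert}\;\Varid{x}}, and then feed it into an induction for \ensuremath{\Varid{perm}}. As in that earlier proof, I would first unfold the point-free \ensuremath{(\mathrel{\raisebox{0.5\depth}{\scaleobj{0.5}{\langle \bullet \rangle}}})}-equations into their pointwise forms, so that the goal for the main lemma becomes \ensuremath{\Varid{perm}\;(\Varid{map}\;\Varid{f}\;\Varid{xs})\mathrel{=}\Varid{map}\;\Varid{f}\mathrel{\raisebox{0.5\depth}{\scaleobj{0.5}{\langle}} \scaleobj{0.8}{\$} \raisebox{0.5\depth}{\scaleobj{0.5}{\rangle}}}\Varid{perm}\;\Varid{xs}}, to be proved by induction on \ensuremath{\Varid{xs}}, and similarly the \ensuremath{\Varid{insert}} lemma becomes \ensuremath{\Varid{insert}\;(\Varid{f}\;\Varid{x})\;(\Varid{map}\;\Varid{f}\;\Varid{ys})\mathrel{=}\Varid{map}\;\Varid{f}\mathrel{\raisebox{0.5\depth}{\scaleobj{0.5}{\langle}} \scaleobj{0.8}{\$} \raisebox{0.5\depth}{\scaleobj{0.5}{\rangle}}}\Varid{insert}\;\Varid{x}\;\Varid{ys}}.

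For the companion lemma, I would do induction on \ensuremath{\Varid{ys}}. The base case \ensuremath{\Varid{ys}\mathbin{:=}[\mskip1.5mu \mskip1.5mu]} reduces both sides to \ensuremath{\Varid{return}\;[\mskip1.5mu \Varid{f}\;\Varid{x}\mskip1.5mu]} using the definition of \ensuremath{\Varid{insert}} and \eqref{eq:ap-return}. In the step case \ensuremath{\Varid{ys}\mathbin{:=}\Varid{y}\mathbin{:}\Varid{ys}}, unfolding \ensuremath{\Varid{insert}} on the left produces \ensuremath{\Varid{return}\;(\Varid{f}\;\Varid{x}\mathbin{:}\Varid{f}\;\Varid{y}\mathbin{:}\Varid{map}\;\Varid{f}\;\Varid{ys})\mathbin{\talloblong}((\Varid{f}\;\Varid{y}\mathbin{:})\mathrel{\raisebox{0.5\depth}{\scaleobj{0.5}{\langle}} \scaleobj{0.8}{\$} \raisebox{0.5\depth}{\scaleobj{0.5}{\rangle}}}\Varid{insert}\;(\Varid{f}\;\Varid{x})\;(\Varid{map}\;\Varid{f}\;\Varid{ys}))}; applying the induction hypothesis to the inner \ensuremath{\Varid{insert}} and then using the pure identity \ensuremath{(\Varid{f}\;\Varid{y}\mathbin{:})\mathbin{\cdot}\Varid{map}\;\Varid{f}\mathrel{=}\Varid{map}\;\Varid{f}\mathbin{\cdot}(\Varid{y}\mathbin{:})} together with \eqref{eq:comp-ap-ap} (read both ways) re-associates the nested \ensuremath{(\mathrel{\raisebox{0.5\depth}{\scaleobj{0.5}{\langle}} \scaleobj{0.8}{\$} \raisebox{0.5\depth}{\scaleobj{0.5}{\rangle}}})}s so that \ensuremath{\Varid{map}\;\Varid{f}} sits outermost in the second branch. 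Rewriting the first branch as \ensuremath{\Varid{map}\;\Varid{f}\mathrel{\raisebox{0.5\depth}{\scaleobj{0.5}{\langle}} \scaleobj{0.8}{\$} \raisebox{0.5\depth}{\scaleobj{0.5}{\rangle}}}\Varid{return}\;(\Varid{x}\mathbin{:}\Varid{y}\mathbin{:}\Varid{ys})} via \eqref{eq:ap-return} and then pulling \ensuremath{\Varid{map}\;\Varid{f}} out of the \ensuremath{(\talloblong)} via \eqref{eq:ap-mplus} yields \ensuremath{\Varid{map}\;\Varid{f}\mathrel{\raisebox{0.5\depth}{\scaleobj{0.5}{\langle}} \scaleobj{0.8}{\$} \raisebox{0.5\depth}{\scaleobj{0.5}{\rangle}}}(\Varid{return}\;(\Varid{x}\mathbin{:}\Varid{y}\mathbin{:}\Varid{ys})\mathbin{\talloblong}((\Varid{y}\mathbin{:})\mathrel{\raisebox{0.5\depth}{\scaleobj{0.5}{\langle}} \scaleobj{0.8}{\$} \raisebox{0.5\depth}{\scaleobj{0.5}{\rangle}}}\Varid{insert}\;\Varid{x}\;\Varid{ys}))}, which by the definition of \ensuremath{\Varid{insert}} is \ensuremath{\Varid{map}\;\Varid{f}\mathrel{\raisebox{0.5\depth}{\scaleobj{0.5}{\langle}} \scaleobj{0.8}{\$} \raisebox{0.5\depth}{\scaleobj{0.5}{\rangle}}}\Varid{insert}\;\Varid{x}\;(\Varid{y}\mathbin{:}\Varid{ys})}. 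Note that, unlike Lemma~\ref{lemma:fold-perm}, no appeal to commutativity or idempotence of \ensuremath{(\talloblong)} is needed here.

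With the companion lemma available, Lemma~\ref{lemma:shuffle-map} follows by a short induction on \ensuremath{\Varid{xs}}. The case \ensuremath{\Varid{xs}\mathbin{:=}[\mskip1.5mu \mskip1.5mu]} is immediate from \ensuremath{\Varid{perm}\;[\mskip1.5mu \mskip1.5mu]\mathrel{=}\Varid{return}\;[\mskip1.5mu \mskip1.5mu]} and \eqref{eq:ap-return}. For \ensuremath{\Varid{xs}\mathbin{:=}\Varid{x}\mathbin{:}\Varid{xs}}, I rewrite \ensuremath{\Varid{perm}\;(\Varid{map}\;\Varid{f}\;(\Varid{x}\mathbin{:}\Varid{xs}))} as \ensuremath{\Varid{insert}\;(\Varid{f}\;\Varid{x})\mathbin{\hstretch{0.7}{=\!\!<\!\!<}}\Varid{perm}\;(\Varid{map}\;\Varid{f}\;\Varid{xs})} by the definitions of \ensuremath{\Varid{map}} and \ensuremath{\Varid{perm}}, apply the induction hypothesis to \ensuremath{\Varid{perm}\;(\Varid{map}\;\Varid{f}\;\Varid{xs})}, fuse \ensuremath{\Varid{insert}\;(\Varid{f}\;\Varid{x})} with the \ensuremath{\Varid{map}\;\Varid{f}} arising from \ensuremath{(\mathrel{\raisebox{0.5\depth}{\scaleobj{0.5}{\langle}} \scaleobj{0.8}{\$} \raisebox{0.5\depth}{\scaleobj{0.5}{\rangle}}})} via \eqref{eq:comp-bind-ap}, rewrite with the companion lemma, and then use \eqref{eq:mcomp-bind-ap} (right to left) followed by the definition of \ensuremath{\Varid{perm}} to reach \ensuremath{\Varid{map}\;\Varid{f}\mathrel{\raisebox{0.5\depth}{\scaleobj{0.5}{\langle}} \scaleobj{0.8}{\$} \raisebox{0.5\depth}{\scaleobj{0.5}{\rangle}}}\Varid{perm}\;(\Varid{x}\mathbin{:}\Varid{xs})}.

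The main obstacle is the step case of the \ensuremath{\Varid{insert}} lemma: it is the only place where a \ensuremath{(\talloblong)} appears and where two nested applications of \ensuremath{(\mathrel{\raisebox{0.5\depth}{\scaleobj{0.5}{\langle}} \scaleobj{0.8}{\$} \raisebox{0.5\depth}{\scaleobj{0.5}{\rangle}}})} have to be rearranged so that \ensuremath{\Varid{map}\;\Varid{f}} can be pulled to the front of both branches uniformly; keeping track of which of \eqref{eq:comp-ap-ap}, \eqref{eq:ap-return}, and \eqref{eq:ap-mplus} applies where, and in which direction, is the only fiddly part. Everything else is routine unfolding guided by types, in the spirit of the remark following \eqref{eq:mcomp-kc}. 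As the footnote notes, both lemmas are in fact free theorems of \ensuremath{\Varid{perm}} and \ensuremath{\Varid{insert}}, so one could alternatively invoke parametricity; but the inductive argument is the intended exercise.
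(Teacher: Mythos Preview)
Your proposal is correct and follows essentially the same route as the paper: the paper also reduces Lemma~\ref{lemma:shuffle-map} to Lemma~\ref{lemma:insert-map} (it only spells out the latter and leaves the \ensuremath{\Varid{perm}} induction implicit, in parallel with the Lemma~\ref{lemma:fold-perm}/\ref{lemma:fold-insert} pair). Your proof of the \ensuremath{\Varid{insert}} lemma runs from left to right whereas the paper starts from \ensuremath{\Varid{map}\;\Varid{f}\mathrel{\raisebox{0.5\depth}{\scaleobj{0.5}{\langle}} \scaleobj{0.8}{\$} \raisebox{0.5\depth}{\scaleobj{0.5}{\rangle}}}\Varid{insert}\;\Varid{x}\;(\Varid{y}\mathbin{:}\Varid{xs})} and works the other way, but the steps---\eqref{eq:ap-mplus}, \eqref{eq:ap-return}, \eqref{eq:comp-ap-ap}, the \ensuremath{\Varid{map}} identity, and the induction hypothesis---are the same, just traversed in reverse.
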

\begin{lemma}\label{lemma:insert-map}
  \ensuremath{\Varid{insert}\;(\Varid{f}\;\Varid{x})\mathbin{\cdot}\Varid{map}\;\Varid{f}\mathrel{=}\Varid{map}\;\Varid{f}\mathrel{\raisebox{0.5\depth}{\scaleobj{0.5}{\langle \bullet \rangle}}}\Varid{insert}\;\Varid{x}}.
\end{lemma}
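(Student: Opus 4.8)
The plan is to discharge the point-free equation through its pointwise instance and then proceed by structural induction on the list, following the same pattern as the proof of Lemma~\ref{lemma:fold-insert}. Unfolding \ensuremath{(\mathrel{\raisebox{0.5\depth}{\scaleobj{0.5}{\langle \bullet \rangle}}})} and \ensuremath{(\mathbin{\cdot})} (and using~\eqref{eq:comp-ap}), it is enough to show that for every list \ensuremath{\Varid{xs}},
\[\ensuremath{\Varid{insert}\;(\Varid{f}\;\Varid{x})\;(\Varid{map}\;\Varid{f}\;\Varid{xs})\mathrel{=}\Varid{map}\;\Varid{f}\mathrel{\raisebox{0.5\depth}{\scaleobj{0.5}{\langle}} \scaleobj{0.8}{\$} \raisebox{0.5\depth}{\scaleobj{0.5}{\rangle}}}\Varid{insert}\;\Varid{x}\;\Varid{xs}}\,.\]
In the base case \ensuremath{\Varid{xs}\mathbin{:=}[\mskip1.5mu \mskip1.5mu]} both sides reduce to \ensuremath{\Varid{return}\;[\mskip1.5mu \Varid{f}\;\Varid{x}\mskip1.5mu]}: the left-hand side by the defining equations of \ensuremath{\Varid{map}} and \ensuremath{\Varid{insert}}, the right-hand side by the definition of \ensuremath{\Varid{insert}} followed by~\eqref{eq:ap-return}.

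For the inductive step \ensuremath{\Varid{xs}\mathbin{:=}\Varid{y}\mathbin{:}\Varid{xs}} I would rewrite the two sides separately until they coincide. Using \ensuremath{\Varid{map}\;\Varid{f}\;(\Varid{y}\mathbin{:}\Varid{xs})\mathrel{=}\Varid{f}\;\Varid{y}\mathbin{:}\Varid{map}\;\Varid{f}\;\Varid{xs}} and the cons-case of \ensuremath{\Varid{insert}}, the left-hand side becomes
\[\ensuremath{\Varid{return}\;(\Varid{f}\;\Varid{x}\mathbin{:}\Varid{f}\;\Varid{y}\mathbin{:}\Varid{map}\;\Varid{f}\;\Varid{xs})\mathbin{\talloblong}((\Varid{f}\;\Varid{y}\mathbin{:})\mathrel{\raisebox{0.5\depth}{\scaleobj{0.5}{\langle}} \scaleobj{0.8}{\$} \raisebox{0.5\depth}{\scaleobj{0.5}{\rangle}}}\Varid{insert}\;(\Varid{f}\;\Varid{x})\;(\Varid{map}\;\Varid{f}\;\Varid{xs}))}\,.\]
The right-hand side, after the cons-case of \ensuremath{\Varid{insert}}, distributing \ensuremath{(\mathrel{\raisebox{0.5\depth}{\scaleobj{0.5}{\langle}} \scaleobj{0.8}{\$} \raisebox{0.5\depth}{\scaleobj{0.5}{\rangle}}})} over the choice with~\eqref{eq:ap-mplus}, and simplifying the two branches with~\eqref{eq:ap-return} and~\eqref{eq:comp-ap-ap}, becomes
\[\ensuremath{\Varid{return}\;(\Varid{f}\;\Varid{x}\mathbin{:}\Varid{f}\;\Varid{y}\mathbin{:}\Varid{map}\;\Varid{f}\;\Varid{xs})\mathbin{\talloblong}((\Varid{map}\;\Varid{f}\mathbin{\cdot}(\Varid{y}\mathbin{:}))\mathrel{\raisebox{0.5\depth}{\scaleobj{0.5}{\langle}} \scaleobj{0.8}{\$} \raisebox{0.5\depth}{\scaleobj{0.5}{\rangle}}}\Varid{insert}\;\Varid{x}\;\Varid{xs})}\,.\]
The first branches of the two \ensuremath{(\talloblong)}-expressions are already identical, so it only remains to match the second branches: I apply the induction hypothesis to turn \ensuremath{\Varid{insert}\;(\Varid{f}\;\Varid{x})\;(\Varid{map}\;\Varid{f}\;\Varid{xs})} into \ensuremath{\Varid{map}\;\Varid{f}\mathrel{\raisebox{0.5\depth}{\scaleobj{0.5}{\langle}} \scaleobj{0.8}{\$} \raisebox{0.5\depth}{\scaleobj{0.5}{\rangle}}}\Varid{insert}\;\Varid{x}\;\Varid{xs}}, fuse the resulting nested \ensuremath{(\mathrel{\raisebox{0.5\depth}{\scaleobj{0.5}{\langle}} \scaleobj{0.8}{\$} \raisebox{0.5\depth}{\scaleobj{0.5}{\rangle}}})}'s by~\eqref{eq:comp-ap-ap}, and close the gap with the one-line fact \ensuremath{(\Varid{f}\;\Varid{y}\mathbin{:})\mathbin{\cdot}\Varid{map}\;\Varid{f}\mathrel{=}\Varid{map}\;\Varid{f}\mathbin{\cdot}(\Varid{y}\mathbin{:})}, which is nothing but the cons-equation for \ensuremath{\Varid{map}} read as an equality of functions.

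I do not expect a genuine obstacle here; the proof is routine equational bookkeeping. What is worth flagging is that, unlike Lemma~\ref{lemma:fold-insert}, this argument needs neither commutativity nor idempotence of \ensuremath{(\talloblong)} --- only~\eqref{eq:ap-return},~\eqref{eq:ap-mplus}, and~\eqref{eq:comp-ap-ap}, together with the defining equations of \ensuremath{\Varid{map}} and \ensuremath{\Varid{insert}}. The only step that requires any thought is the naturality rewrite \ensuremath{(\Varid{f}\;\Varid{y}\mathbin{:})\mathbin{\cdot}\Varid{map}\;\Varid{f}\mathrel{=}\Varid{map}\;\Varid{f}\mathbin{\cdot}(\Varid{y}\mathbin{:})}: it is exactly what reconciles the element \ensuremath{\Varid{f}\;\Varid{y}} that \ensuremath{\Varid{insert}} has already placed on the left with the \ensuremath{\Varid{map}\;\Varid{f}} that is still pending on the right. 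As the footnote observes, Lemma~\ref{lemma:insert-map} is in fact an instance of the free theorem for \ensuremath{\Varid{insert}}, so one could alternatively obtain it from parametricity; but the direct induction sketched above is short enough that the detour would not be worthwhile.
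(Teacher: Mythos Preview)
Your proposal is correct and follows essentially the same approach as the paper: induction on \ensuremath{\Varid{xs}}, unfolding \ensuremath{\Varid{insert}} and \ensuremath{\Varid{map}}, distributing \ensuremath{(\mathrel{\raisebox{0.5\depth}{\scaleobj{0.5}{\langle}} \scaleobj{0.8}{\$} \raisebox{0.5\depth}{\scaleobj{0.5}{\rangle}}})} via~\eqref{eq:ap-mplus} and~\eqref{eq:ap-return}, fusing with~\eqref{eq:comp-ap-ap}, and using the naturality rewrite \ensuremath{(\Varid{f}\;\Varid{y}\mathbin{:})\mathbin{\cdot}\Varid{map}\;\Varid{f}\mathrel{=}\Varid{map}\;\Varid{f}\mathbin{\cdot}(\Varid{y}\mathbin{:})} before invoking the induction hypothesis. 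The only cosmetic difference is that the paper runs a single directed calculation from \ensuremath{\Varid{map}\;\Varid{f}\mathrel{\raisebox{0.5\depth}{\scaleobj{0.5}{\langle}} \scaleobj{0.8}{\$} \raisebox{0.5\depth}{\scaleobj{0.5}{\rangle}}}\Varid{insert}\;\Varid{x}\;(\Varid{y}\mathbin{:}\Varid{xs})} to \ensuremath{\Varid{insert}\;(\Varid{f}\;\Varid{x})\;(\Varid{map}\;\Varid{f}\;(\Varid{y}\mathbin{:}\Varid{xs}))}, whereas you expand both sides and meet in the middle.
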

The lemma is true because \ensuremath{\Varid{map}\;\Varid{f}} is a pure computation --- in reasoning about monadic programs it is helpful, and sometimes essential, to identify its pure segments, because these are the parts more properties are applicable. Note that the composition \ensuremath{(\mathbin{\cdot})} on the lefthand side is turned into \ensuremath{(\mathrel{\raisebox{0.5\depth}{\scaleobj{0.5}{\langle \bullet \rangle}}})} once we move \ensuremath{\Varid{map}\;\Varid{f}} leftwards.

We prove only Lemma~\ref{lemma:insert-map}.
\begin{proof}
  Prove by induction on \ensuremath{\Varid{xs}} that \ensuremath{\Varid{map}\;\Varid{f}\mathrel{\raisebox{0.5\depth}{\scaleobj{0.5}{\langle}} \scaleobj{0.8}{\$} \raisebox{0.5\depth}{\scaleobj{0.5}{\rangle}}}\Varid{insert}\;\Varid{x}\;\Varid{xs}\mathrel{=}\Varid{insert}\;(\Varid{f}\;\Varid{x})\;(\Varid{map}\;\Varid{f}\;\Varid{xs})} for all \ensuremath{\Varid{xs}}. We present only the inductive case
\ensuremath{\Varid{xs}\mathbin{:=}\Varid{y}\mathbin{:}\Varid{xs}}:
\begin{hscode}\SaveRestoreHook
\column{B}{@{}>{\hspre}l<{\hspost}@{}}%
\column{4}{@{}>{\hspre}l<{\hspost}@{}}%
\column{8}{@{}>{\hspre}l<{\hspost}@{}}%
\column{E}{@{}>{\hspre}l<{\hspost}@{}}%
\>[4]{}\Varid{map}\;\Varid{f}\mathrel{\raisebox{0.5\depth}{\scaleobj{0.5}{\langle}} \scaleobj{0.8}{\$} \raisebox{0.5\depth}{\scaleobj{0.5}{\rangle}}}\Varid{insert}\;\Varid{x}\;(\Varid{y}\mathbin{:}\Varid{xs}){}\<[E]%
\\
\>[B]{}\mathbin{=}{}\<[8]%
\>[8]{}\mbox{\commentbegin  definition of \ensuremath{\Varid{insert}}  \commentend}{}\<[E]%
\\
\>[B]{}\hsindent{4}{}\<[4]%
\>[4]{}\Varid{map}\;\Varid{f}\mathrel{\raisebox{0.5\depth}{\scaleobj{0.5}{\langle}} \scaleobj{0.8}{\$} \raisebox{0.5\depth}{\scaleobj{0.5}{\rangle}}}(\Varid{return}\;(\Varid{x}\mathbin{:}\Varid{y}\mathbin{:}\Varid{xs})\mathbin{\talloblong}((\Varid{y}\mathbin{:})\mathrel{\raisebox{0.5\depth}{\scaleobj{0.5}{\langle}} \scaleobj{0.8}{\$} \raisebox{0.5\depth}{\scaleobj{0.5}{\rangle}}}\Varid{insert}\;\Varid{x}\;\Varid{xs})){}\<[E]%
\\
\>[B]{}\mathbin{=}{}\<[8]%
\>[8]{}\mbox{\commentbegin  by \eqref{eq:ap-mplus} and \eqref{eq:ap-return}  \commentend}{}\<[E]%
\\
\>[B]{}\hsindent{4}{}\<[4]%
\>[4]{}\Varid{return}\;(\Varid{map}\;\Varid{f}\;(\Varid{x}\mathbin{:}\Varid{y}\mathbin{:}\Varid{xs}))\mathbin{\talloblong}(\Varid{map}\;\Varid{f}\mathrel{\raisebox{0.5\depth}{\scaleobj{0.5}{\langle}} \scaleobj{0.8}{\$} \raisebox{0.5\depth}{\scaleobj{0.5}{\rangle}}}((\Varid{y}\mathbin{:})\mathrel{\raisebox{0.5\depth}{\scaleobj{0.5}{\langle}} \scaleobj{0.8}{\$} \raisebox{0.5\depth}{\scaleobj{0.5}{\rangle}}}\Varid{insert}\;\Varid{x}\;\Varid{xs}))~~.{}\<[E]%
\ColumnHook
\end{hscode}\resethooks
For the second branch we reason:
\begin{hscode}\SaveRestoreHook
\column{B}{@{}>{\hspre}l<{\hspost}@{}}%
\column{4}{@{}>{\hspre}l<{\hspost}@{}}%
\column{8}{@{}>{\hspre}l<{\hspost}@{}}%
\column{E}{@{}>{\hspre}l<{\hspost}@{}}%
\>[4]{}\Varid{map}\;\Varid{f}\mathrel{\raisebox{0.5\depth}{\scaleobj{0.5}{\langle}} \scaleobj{0.8}{\$} \raisebox{0.5\depth}{\scaleobj{0.5}{\rangle}}}((\Varid{y}\mathbin{:})\mathrel{\raisebox{0.5\depth}{\scaleobj{0.5}{\langle}} \scaleobj{0.8}{\$} \raisebox{0.5\depth}{\scaleobj{0.5}{\rangle}}}\Varid{insert}\;\Varid{x}\;\Varid{xs}){}\<[E]%
\\
\>[B]{}\mathbin{=}{}\<[8]%
\>[8]{}\mbox{\commentbegin  by \eqref{eq:comp-ap-ap}  \commentend}{}\<[E]%
\\
\>[B]{}\hsindent{4}{}\<[4]%
\>[4]{}(\Varid{map}\;\Varid{f}\mathbin{\cdot}(\Varid{y}\mathbin{:}))\mathrel{\raisebox{0.5\depth}{\scaleobj{0.5}{\langle}} \scaleobj{0.8}{\$} \raisebox{0.5\depth}{\scaleobj{0.5}{\rangle}}}\Varid{insert}\;\Varid{x}\;\Varid{xs}{}\<[E]%
\\
\>[B]{}\mathbin{=}{}\<[8]%
\>[8]{}\mbox{\commentbegin  definition of \ensuremath{\Varid{map}}  \commentend}{}\<[E]%
\\
\>[B]{}\hsindent{4}{}\<[4]%
\>[4]{}((\Varid{f}\;\Varid{y}\mathbin{:})\mathbin{\cdot}\Varid{map}\;\Varid{f})\mathrel{\raisebox{0.5\depth}{\scaleobj{0.5}{\langle}} \scaleobj{0.8}{\$} \raisebox{0.5\depth}{\scaleobj{0.5}{\rangle}}}\Varid{insert}\;\Varid{x}\;\Varid{xs}{}\<[E]%
\\
\>[B]{}\mathbin{=}{}\<[8]%
\>[8]{}\mbox{\commentbegin  by \eqref{eq:comp-ap-ap}  \commentend}{}\<[E]%
\\
\>[B]{}\hsindent{4}{}\<[4]%
\>[4]{}(\Varid{f}\;\Varid{y}\mathbin{:})\mathrel{\raisebox{0.5\depth}{\scaleobj{0.5}{\langle}} \scaleobj{0.8}{\$} \raisebox{0.5\depth}{\scaleobj{0.5}{\rangle}}}(\Varid{map}\;\Varid{f}\mathrel{\raisebox{0.5\depth}{\scaleobj{0.5}{\langle}} \scaleobj{0.8}{\$} \raisebox{0.5\depth}{\scaleobj{0.5}{\rangle}}}\Varid{insert}\;\Varid{x}\;\Varid{xs}){}\<[E]%
\\
\>[B]{}\mathbin{=}{}\<[8]%
\>[8]{}\mbox{\commentbegin  induction  \commentend}{}\<[E]%
\\
\>[B]{}\hsindent{4}{}\<[4]%
\>[4]{}(\Varid{f}\;\Varid{y}\mathbin{:})\mathrel{\raisebox{0.5\depth}{\scaleobj{0.5}{\langle}} \scaleobj{0.8}{\$} \raisebox{0.5\depth}{\scaleobj{0.5}{\rangle}}}\Varid{insert}\;(\Varid{f}\;\Varid{x})\;(\Varid{map}\;\Varid{f}\;\Varid{xs})~~.{}\<[E]%
\ColumnHook
\end{hscode}\resethooks
Thus we have
\begin{hscode}\SaveRestoreHook
\column{B}{@{}>{\hspre}l<{\hspost}@{}}%
\column{4}{@{}>{\hspre}l<{\hspost}@{}}%
\column{8}{@{}>{\hspre}l<{\hspost}@{}}%
\column{E}{@{}>{\hspre}l<{\hspost}@{}}%
\>[4]{}\Varid{map}\;\Varid{f}\mathrel{\raisebox{0.5\depth}{\scaleobj{0.5}{\langle}} \scaleobj{0.8}{\$} \raisebox{0.5\depth}{\scaleobj{0.5}{\rangle}}}\Varid{insert}\;\Varid{x}\;(\Varid{y}\mathbin{:}\Varid{xs}){}\<[E]%
\\
\>[B]{}\mathbin{=}{}\<[8]%
\>[8]{}\mbox{\commentbegin  calculation above  \commentend}{}\<[E]%
\\
\>[B]{}\hsindent{4}{}\<[4]%
\>[4]{}\Varid{return}\;(\Varid{f}\;\Varid{x}\mathbin{:}\Varid{f}\;\Varid{y}\mathbin{:}\Varid{map}\;\Varid{f}\;\Varid{xs})\mathbin{\talloblong}((\Varid{f}\;\Varid{y}\mathbin{:})\mathrel{\raisebox{0.5\depth}{\scaleobj{0.5}{\langle}} \scaleobj{0.8}{\$} \raisebox{0.5\depth}{\scaleobj{0.5}{\rangle}}}(\Varid{insert}\;(\Varid{f}\;\Varid{x})\;(\Varid{map}\;\Varid{f}\;\Varid{xs}))){}\<[E]%
\\
\>[B]{}\mathbin{=}{}\<[8]%
\>[8]{}\mbox{\commentbegin  definitions of \ensuremath{\Varid{insert}} and \ensuremath{\Varid{map}}  \commentend}{}\<[E]%
\\
\>[B]{}\hsindent{4}{}\<[4]%
\>[4]{}\Varid{insert}\;(\Varid{f}\;\Varid{x})\;(\Varid{map}\;\Varid{f}\;(\Varid{y}\mathbin{:}\Varid{xs}))~~.{}\<[E]%
\ColumnHook
\end{hscode}\resethooks
\end{proof}

One may have noticed that the style of proof is familiar: replace \ensuremath{\Varid{return}\;\Varid{x}} by \ensuremath{[\mskip1.5mu \Varid{x}\mskip1.5mu]} and \ensuremath{(\talloblong)} by \ensuremath{(\mathbin{+\!\!\!\!\!+})}, the proof is more-or-less what one would do for a list version of \ensuremath{\Varid{insert}}. This is exactly the point: the style of proofs we use to do for pure programs still works for monadic programs, as long as the monad satisfies the demanded laws, be it a list, a more advanced implementation of non-determinism, or a monad having other effects.

A similar property relating \ensuremath{\Varid{perm}} and \ensuremath{\Varid{filter}} can be formulated.
\begin{lemma} \label{lemma:perm-filter}
  \ensuremath{\Varid{perm}\mathbin{\cdot}\Varid{filter}\;\Varid{p}\mathrel{=}\Varid{filter}\;\Varid{p}\mathrel{\raisebox{0.5\depth}{\scaleobj{0.5}{\langle \bullet \rangle}}}\Varid{perm}}.
\end{lemma}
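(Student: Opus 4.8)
The plan is to follow the same two-step pattern as the proof of Lemma~\ref{lemma:shuffle-map}: first establish a companion property relating \ensuremath{\Varid{filter}} and \ensuremath{\Varid{insert}}, and then lift it to \ensuremath{\Varid{perm}} by induction on the input list. Unlike Lemma~\ref{lemma:insert-map}, the companion lemma has to record that \ensuremath{\Varid{insert}\;\Varid{x}} behaves differently depending on whether \ensuremath{\Varid{x}} survives the filter: \emph{if} \ensuremath{\Varid{p}\;\Varid{x}} holds then \ensuremath{\Varid{filter}\;\Varid{p}\mathrel{\raisebox{0.5\depth}{\scaleobj{0.5}{\langle \bullet \rangle}}}\Varid{insert}\;\Varid{x}\mathrel{=}\Varid{insert}\;\Varid{x}\mathbin{\cdot}\Varid{filter}\;\Varid{p}}, \emph{otherwise} \ensuremath{\Varid{filter}\;\Varid{p}\mathrel{\raisebox{0.5\depth}{\scaleobj{0.5}{\langle \bullet \rangle}}}\Varid{insert}\;\Varid{x}\mathrel{=}\Varid{return}\mathbin{\cdot}\Varid{filter}\;\Varid{p}}. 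Equivalently, in the pointwise form used in the earlier proofs: \ensuremath{\Varid{filter}\;\Varid{p}\mathrel{\raisebox{0.5\depth}{\scaleobj{0.5}{\langle}} \scaleobj{0.8}{\$} \raisebox{0.5\depth}{\scaleobj{0.5}{\rangle}}}\Varid{insert}\;\Varid{x}\;\Varid{xs}} equals \ensuremath{\Varid{insert}\;\Varid{x}\;(\Varid{filter}\;\Varid{p}\;\Varid{xs})} when \ensuremath{\Varid{p}\;\Varid{x}}, and \ensuremath{\Varid{return}\;(\Varid{filter}\;\Varid{p}\;\Varid{xs})} when not.

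I would prove this companion lemma by induction on \ensuremath{\Varid{xs}}. The base case \ensuremath{\Varid{xs}\mathbin{:=}[\mskip1.5mu \mskip1.5mu]} is immediate from the definition of \ensuremath{\Varid{insert}} and \eqref{eq:ap-return} (both choices of \ensuremath{\Varid{p}\;\Varid{x}} give \ensuremath{\Varid{return}\;(\Varid{filter}\;\Varid{p}\;[\mskip1.5mu \Varid{x}\mskip1.5mu])}). In the step case \ensuremath{\Varid{xs}\mathbin{:=}\Varid{y}\mathbin{:}\Varid{xs}}, unfold \ensuremath{\Varid{insert}\;\Varid{x}\;(\Varid{y}\mathbin{:}\Varid{xs})} and push \ensuremath{\Varid{filter}\;\Varid{p}\mathrel{\raisebox{0.5\depth}{\scaleobj{0.5}{\langle}} \scaleobj{0.8}{\$} \raisebox{0.5\depth}{\scaleobj{0.5}{\rangle}}}} inside using \eqref{eq:ap-mplus}, \eqref{eq:ap-return} and \eqref{eq:comp-ap-ap}, reaching \ensuremath{\Varid{return}\;(\Varid{filter}\;\Varid{p}\;(\Varid{x}\mathbin{:}\Varid{y}\mathbin{:}\Varid{xs}))\mathbin{\talloblong}((\Varid{filter}\;\Varid{p}\mathbin{\cdot}(\Varid{y}\mathbin{:}))\mathrel{\raisebox{0.5\depth}{\scaleobj{0.5}{\langle}} \scaleobj{0.8}{\$} \raisebox{0.5\depth}{\scaleobj{0.5}{\rangle}}}\Varid{insert}\;\Varid{x}\;\Varid{xs})}. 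Then case-split on \ensuremath{\Varid{p}\;\Varid{y}}: when it holds, \ensuremath{\Varid{filter}\;\Varid{p}\mathbin{\cdot}(\Varid{y}\mathbin{:})\mathrel{=}(\Varid{y}\mathbin{:})\mathbin{\cdot}\Varid{filter}\;\Varid{p}}, so \eqref{eq:comp-ap-ap} and the induction hypothesis expose an occurrence of \ensuremath{\Varid{insert}\;\Varid{x}\;(\Varid{filter}\;\Varid{p}\;\Varid{xs})}; when it fails, \ensuremath{\Varid{filter}\;\Varid{p}\mathbin{\cdot}(\Varid{y}\mathbin{:})\mathrel{=}\Varid{filter}\;\Varid{p}}, so the second branch simplifies directly through the induction hypothesis. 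Combined with the outer split on \ensuremath{\Varid{p}\;\Varid{x}} this leaves four small cases; in three of them the two \ensuremath{(\talloblong)}-branches become redundant and are merged using associativity and idempotence of \ensuremath{(\talloblong)} (in the case \ensuremath{\Varid{p}\;\Varid{x}} true, \ensuremath{\Varid{p}\;\Varid{y}} false, the leftover summand \ensuremath{\Varid{return}\;(\Varid{x}\mathbin{:}\Varid{filter}\;\Varid{p}\;\Varid{xs})} is already one of the alternatives offered by \ensuremath{\Varid{insert}\;\Varid{x}\;(\Varid{filter}\;\Varid{p}\;\Varid{xs})}), while the fourth matches the definition of \ensuremath{\Varid{insert}} outright. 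This bookkeeping — and in particular the essential reliance on idempotence, which never appears in the proof of Lemma~\ref{lemma:insert-map} — is the part I expect to be most fiddly; it is precisely where \ensuremath{\Varid{filter}} departs from \ensuremath{\Varid{map}}, since filtering an element away identifies many distinct insertion positions (compare the role of idempotence in Lemma~\ref{lemma:fold-insert}).

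With the companion lemma in hand, the main statement follows by induction on \ensuremath{\Varid{xs}}, proving \ensuremath{\Varid{perm}\;(\Varid{filter}\;\Varid{p}\;\Varid{xs})\mathrel{=}\Varid{filter}\;\Varid{p}\mathrel{\raisebox{0.5\depth}{\scaleobj{0.5}{\langle}} \scaleobj{0.8}{\$} \raisebox{0.5\depth}{\scaleobj{0.5}{\rangle}}}\Varid{perm}\;\Varid{xs}}. The base case is \eqref{eq:ap-return} applied to \ensuremath{\Varid{perm}\;[\mskip1.5mu \mskip1.5mu]\mathrel{=}\Varid{return}\;[\mskip1.5mu \mskip1.5mu]}. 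For \ensuremath{\Varid{xs}\mathbin{:=}\Varid{x}\mathbin{:}\Varid{xs}}, rewrite \ensuremath{\Varid{filter}\;\Varid{p}\mathrel{\raisebox{0.5\depth}{\scaleobj{0.5}{\langle}} \scaleobj{0.8}{\$} \raisebox{0.5\depth}{\scaleobj{0.5}{\rangle}}}(\Varid{insert}\;\Varid{x}\mathbin{\hstretch{0.7}{=\!\!<\!\!<}}\Varid{perm}\;\Varid{xs})\mathrel{=}(\Varid{filter}\;\Varid{p}\mathrel{\raisebox{0.5\depth}{\scaleobj{0.5}{\langle \bullet \rangle}}}\Varid{insert}\;\Varid{x})\mathbin{\hstretch{0.7}{=\!\!<\!\!<}}\Varid{perm}\;\Varid{xs}} by \eqref{eq:mcomp-bind-ap}, and case-split on \ensuremath{\Varid{p}\;\Varid{x}}. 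If \ensuremath{\Varid{p}\;\Varid{x}} holds, the companion lemma and \eqref{eq:comp-bind-ap} turn this into \ensuremath{\Varid{insert}\;\Varid{x}\mathbin{\hstretch{0.7}{=\!\!<\!\!<}}(\Varid{filter}\;\Varid{p}\mathrel{\raisebox{0.5\depth}{\scaleobj{0.5}{\langle}} \scaleobj{0.8}{\$} \raisebox{0.5\depth}{\scaleobj{0.5}{\rangle}}}\Varid{perm}\;\Varid{xs})}, which by the induction hypothesis is \ensuremath{\Varid{insert}\;\Varid{x}\mathbin{\hstretch{0.7}{=\!\!<\!\!<}}\Varid{perm}\;(\Varid{filter}\;\Varid{p}\;\Varid{xs})\mathrel{=}\Varid{perm}\;(\Varid{x}\mathbin{:}\Varid{filter}\;\Varid{p}\;\Varid{xs})\mathrel{=}\Varid{perm}\;(\Varid{filter}\;\Varid{p}\;(\Varid{x}\mathbin{:}\Varid{xs}))}. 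If \ensuremath{\Varid{p}\;\Varid{x}} fails, the companion lemma gives \ensuremath{(\Varid{return}\mathbin{\cdot}\Varid{filter}\;\Varid{p})\mathbin{\hstretch{0.7}{=\!\!<\!\!<}}\Varid{perm}\;\Varid{xs}\mathrel{=}\Varid{filter}\;\Varid{p}\mathrel{\raisebox{0.5\depth}{\scaleobj{0.5}{\langle}} \scaleobj{0.8}{\$} \raisebox{0.5\depth}{\scaleobj{0.5}{\rangle}}}\Varid{perm}\;\Varid{xs}}, which by the induction hypothesis equals \ensuremath{\Varid{perm}\;(\Varid{filter}\;\Varid{p}\;\Varid{xs})\mathrel{=}\Varid{perm}\;(\Varid{filter}\;\Varid{p}\;(\Varid{x}\mathbin{:}\Varid{xs}))}. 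Unfolding \ensuremath{(\mathrel{\raisebox{0.5\depth}{\scaleobj{0.5}{\langle \bullet \rangle}}})} via \eqref{eq:comp-ap} then recovers the point-free statement. As with Lemma~\ref{lemma:insert-map}, replacing \ensuremath{\Varid{return}} by singleton lists and \ensuremath{(\talloblong)} by \ensuremath{(\mathbin{+\!\!\!\!\!+})} turns the whole argument into the familiar list-level reasoning about \ensuremath{\Varid{filter}} and insertion.
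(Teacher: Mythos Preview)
Your proposal is correct and follows precisely the pattern the paper has in mind: the paper actually omits the proof of this lemma, remarking only that it is ``routine'' and analogous to Lemma~\ref{lemma:shuffle-map}/\ref{lemma:insert-map}. Your two-step decomposition (a companion lemma for \ensuremath{\Varid{insert}} with a case split on \ensuremath{\Varid{p}\;\Varid{x}}, then induction lifting to \ensuremath{\Varid{perm}}) is exactly that analogue, and your observation that idempotence of \ensuremath{(\talloblong)} is genuinely needed here---unlike in the \ensuremath{\Varid{map}} case---is the one point where the ``routine'' label hides a real detail.
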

Its proof is routine and omitted. Finally, in a number of occasions it helps to know that \ensuremath{\Varid{xs}} is a result of \ensuremath{\Varid{perm}\;\Varid{xs}}. The proof is also routine and omitted.
\begin{lemma} \label{lemma:perm-id}
For all \ensuremath{\Varid{xs}} we have that \ensuremath{\Varid{perm}\;\Varid{xs}\mathrel{=}\Varid{return}\;\Varid{xs}\mathbin{\talloblong}\Varid{m}} for some \ensuremath{\Varid{m}}.
\end{lemma}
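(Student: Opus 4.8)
The plan is a straightforward induction on \ensuremath{\Varid{xs}}, carrying the witness \ensuremath{\Varid{m}} along explicitly. Before the induction it is convenient to record a trivial sub-observation about \ensuremath{\Varid{insert}}: for every \ensuremath{\Varid{x}} and \ensuremath{\Varid{xs}} there is some \ensuremath{\Varid{m}'} with \ensuremath{\Varid{insert}\;\Varid{x}\;\Varid{xs}\mathrel{=}\Varid{return}\;(\Varid{x}\mathbin{:}\Varid{xs})\mathbin{\talloblong}\Varid{m}'}. This needs only a one-level case split on \ensuremath{\Varid{xs}}: for \ensuremath{\Varid{xs}\mathbin{=}[\mskip1.5mu \mskip1.5mu]} the definition gives \ensuremath{\Varid{return}\;[\mskip1.5mu \Varid{x}\mskip1.5mu]\mathrel{=}\Varid{return}\;(\Varid{x}\mathbin{:}[\mskip1.5mu \mskip1.5mu])\mathbin{\talloblong}\emptyset} by the monoid laws, and for \ensuremath{\Varid{xs}\mathbin{=}\Varid{y}\mathbin{:}\Varid{ys}} the definition is already literally of that shape with \ensuremath{\Varid{m}'\mathrel{=}(\Varid{y}\mathbin{:})\mathrel{\raisebox{0.5\depth}{\scaleobj{0.5}{\langle}} \scaleobj{0.8}{\$} \raisebox{0.5\depth}{\scaleobj{0.5}{\rangle}}}\Varid{insert}\;\Varid{x}\;\Varid{ys}}.

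For the base case \ensuremath{\Varid{xs}\mathbin{:=}[\mskip1.5mu \mskip1.5mu]}, \ensuremath{\Varid{perm}\;[\mskip1.5mu \mskip1.5mu]\mathrel{=}\Varid{return}\;[\mskip1.5mu \mskip1.5mu]\mathrel{=}\Varid{return}\;[\mskip1.5mu \mskip1.5mu]\mathbin{\talloblong}\emptyset}, so \ensuremath{\Varid{m}\mathrel{=}\emptyset} works. For the inductive case \ensuremath{\Varid{xs}\mathbin{:=}\Varid{x}\mathbin{:}\Varid{xs}}, assume \ensuremath{\Varid{perm}\;\Varid{xs}\mathrel{=}\Varid{return}\;\Varid{xs}\mathbin{\talloblong}\Varid{m}_{0}}. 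Then I would compute
\begin{align*}
\ensuremath{\Varid{perm}\;(\Varid{x}\mathbin{:}\Varid{xs})}
 &= \ensuremath{\Varid{insert}\;\Varid{x}\mathbin{\hstretch{0.7}{=\!\!<\!\!<}}(\Varid{return}\;\Varid{xs}\mathbin{\talloblong}\Varid{m}_{0})}\\
 &= \ensuremath{(\Varid{insert}\;\Varid{x}\mathbin{\hstretch{0.7}{=\!\!<\!\!<}}\Varid{return}\;\Varid{xs})\mathbin{\talloblong}(\Varid{insert}\;\Varid{x}\mathbin{\hstretch{0.7}{=\!\!<\!\!<}}\Varid{m}_{0})}\\
 &= \ensuremath{\Varid{insert}\;\Varid{x}\;\Varid{xs}\mathbin{\talloblong}(\Varid{insert}\;\Varid{x}\mathbin{\hstretch{0.7}{=\!\!<\!\!<}}\Varid{m}_{0})}\mbox{~~,}
\end{align*}
using the definition of \ensuremath{\Varid{perm}} and the induction hypothesis, then \eqref{eq:bind-mplus-dist}, then \eqref{eq:monad-bind-ret}. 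Applying the sub-observation as \ensuremath{\Varid{insert}\;\Varid{x}\;\Varid{xs}\mathrel{=}\Varid{return}\;(\Varid{x}\mathbin{:}\Varid{xs})\mathbin{\talloblong}\Varid{m}_{1}} and re-associating gives \ensuremath{\Varid{perm}\;(\Varid{x}\mathbin{:}\Varid{xs})\mathrel{=}\Varid{return}\;(\Varid{x}\mathbin{:}\Varid{xs})\mathbin{\talloblong}(\Varid{m}_{1}\mathbin{\talloblong}(\Varid{insert}\;\Varid{x}\mathbin{\hstretch{0.7}{=\!\!<\!\!<}}\Varid{m}_{0}))}, so we take \ensuremath{\Varid{m}\mathrel{=}\Varid{m}_{1}\mathbin{\talloblong}(\Varid{insert}\;\Varid{x}\mathbin{\hstretch{0.7}{=\!\!<\!\!<}}\Varid{m}_{0})}.

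I do not expect a genuine obstacle here: the argument is mechanical once one decides to track the witness. The only mild care needed is notational --- the statement is existential, so each step manufactures a fresh witness (\ensuremath{\Varid{m}_{0}}, \ensuremath{\Varid{m}_{1}}, and finally \ensuremath{\Varid{m}}), and one should avoid conflating them; isolating the \ensuremath{\Varid{insert}} fact as a separate observation keeps this bookkeeping tidy. Beyond that, the proof uses only the monoid laws for \ensuremath{(\talloblong)} together with \eqref{eq:bind-mplus-dist} and \eqref{eq:monad-bind-ret}; notably, idempotence and commutativity of \ensuremath{(\talloblong)} play no role.
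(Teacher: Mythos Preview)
Your proposal is correct and is precisely the routine induction the paper alludes to; the paper itself omits the proof, so there is nothing to compare against beyond noting that your argument---induction on \ensuremath{\Varid{xs}} via the auxiliary fact that \ensuremath{\Varid{insert}\;\Varid{x}\;\Varid{xs}} always has \ensuremath{\Varid{return}\;(\Varid{x}\mathbin{:}\Varid{xs})} as a summand---is exactly what one would expect.
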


\section{Spark Aggregation}
\label{sec:spark}

% As pointed out in~\cite{Lengal:17:Executable}, an numerical integration that ought to return zero actually returned values ranging from $-8192.0$ to
% $12288.0$.
% In the author's joint work with \citet{Lengal:17:Executable}, we presented a Haskell model of Spark aggregation, in order to find out conditions under which an aggregation is indeed deterministic.

Spark~\cite{Zaharia:12:Resilient} is a popular platform for scalable distributed data-parallel computation based on a flexible programming environment with high-level APIs, considered by many as the successor of MapReduce.
In a typical Spark program, data is partitioned and stored distributively on read-only {\em Resilient Distributed Datasets} (RDDs) ---
we can think of it as a list of lists, where each sub-list is potentially stored on a remote node.
On an RDD one can apply operations, called \emph{combinators}, such as \ensuremath{\Varid{map}}, \ensuremath{\Varid{reduce}}, and \ensuremath{\Varid{aggregate}}.
The \ensuremath{\Varid{aggregate}} combinator, for example, takes user-defined functions \ensuremath{(\otimes)} and \ensuremath{(\oplus)}: \ensuremath{(\otimes)} accumulates a sub-result for each data partition while \ensuremath{(\oplus)} merges sub-results across different partitions.

Programming in Spark, however, can be tricky.
Since sub-results are computed across partitions concurrently, the order of their applications varies on different executions.
Aggregation in Spark is therefore inherently non-deterministic.
An example from \citet{Lengal:17:Executable} showed that computing the integral of $x^{73}$ from $x=-2$ to $x=2$, which should be $0$, using a function in the Spark machine learning library, yields results ranging from $-8192.0$ to $12288.0$ in different runs.
It is thus desirable to find out conditions, which Spark's documentation does not specify formally, under which a Spark computation yields deterministic outcomes.

\subsection{List Homomorphism}

Since a Spark aggregation is typically used to computes a \emph{list homomorphism}~\cite{Bird:87:Introduction},
we digress a little in this section to give a brief review and present some results that we will use.
A function \ensuremath{\Varid{h}\mathbin{::}\Conid{List}\;\Varid{a}\to \Varid{b}} is called a list homomorphism if there exists \ensuremath{\Varid{z}\mathbin{::}\Varid{b}}, \ensuremath{\Varid{k}\mathbin{::}\Varid{a}\to \Varid{b}}, and \ensuremath{(\oplus)\mathbin{::}\Varid{b}\to \Varid{b}\to \Varid{b}} such that:
\begin{hscode}\SaveRestoreHook
\column{B}{@{}>{\hspre}l<{\hspost}@{}}%
\column{15}{@{}>{\hspre}l<{\hspost}@{}}%
\column{E}{@{}>{\hspre}l<{\hspost}@{}}%
\>[B]{}\Varid{h}\;[\mskip1.5mu \mskip1.5mu]{}\<[15]%
\>[15]{}\mathrel{=}\Varid{z}{}\<[E]%
\\
\>[B]{}\Varid{h}\;[\mskip1.5mu \Varid{x}\mskip1.5mu]{}\<[15]%
\>[15]{}\mathrel{=}\Varid{k}\;\Varid{x}{}\<[E]%
\\
\>[B]{}\Varid{h}\;(\Varid{xs}\mathbin{+\!\!\!\!\!+}\Varid{ys}){}\<[15]%
\>[15]{}\mathrel{=}\Varid{h}\;\Varid{xs}\mathbin{\oplus}\Varid{h}\;\Varid{ys}~~.{}\<[E]%
\ColumnHook
\end{hscode}\resethooks
That \ensuremath{\Varid{h}} is such a list homomorphism is denoted by \ensuremath{\Varid{h}\mathrel{=}\Varid{hom}\;(\oplus)\;\Varid{k}\;\Varid{z}}.
Note that the properties above implicitly demand that \ensuremath{(\oplus)} be associative with \ensuremath{\Varid{z}} as its identity element.

Lemma~\ref{lemma:hom-concat} and \ref{lemma:foldr-hom} below are about when a computation defined in terms of \ensuremath{\Varid{foldr}} is actually a list homomorphism. In Lemma~\ref{lemma:foldr-hom}, \ensuremath{\Varid{img}\;\Varid{f}} denotes the image of a function \ensuremath{\Varid{f}}.
\begin{lemma} \label{lemma:hom-concat}
\ensuremath{\Varid{h}\mathrel{=}\Varid{hom}\;(\oplus)\;(\Varid{h}\mathbin{\cdot}\Varid{wrap})\;\Varid{z}} if and only if \ensuremath{\Varid{foldr}\;(\oplus)\;\Varid{z}\mathbin{\cdot}\Varid{map}\;\Varid{h}\mathrel{=}\Varid{h}\mathbin{\cdot}\Varid{concat}}, where \ensuremath{\Varid{wrap}\;\Varid{x}\mathrel{=}[\mskip1.5mu \Varid{x}\mskip1.5mu]}.
\end{lemma}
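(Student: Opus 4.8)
The plan is to prove the two implications separately: the "only if'' direction is a short induction on a list of lists, and the "if'' direction is a two-line instantiation of the hypothesis at well-chosen arguments.

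\textbf{Forward direction} ($\Rightarrow$). Suppose $h = \mathit{hom}\ (\oplus)\ (h\cdot\mathit{wrap})\ z$, which unfolds to $h\ [\,] = z$, the clause $h\ [x] = (h\cdot\mathit{wrap})\ x$ (true by definition of $\mathit{wrap}$), and $h\ (xs \mathbin{+\!\!+} ys) = h\ xs \oplus h\ ys$. I would prove the pointwise statement $\mathit{foldr}\ (\oplus)\ z\ (\mathit{map}\ h\ \mathit{xss}) = h\ (\mathit{concat}\ \mathit{xss})$ by induction on $\mathit{xss}$. The case $\mathit{xss} := [\,]$ reduces both sides to $z$. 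For $\mathit{xss} := xs : \mathit{xss'}$, unfolding $\mathit{map}$ and $\mathit{foldr}$ gives $h\ xs \oplus \mathit{foldr}\ (\oplus)\ z\ (\mathit{map}\ h\ \mathit{xss'})$; the induction hypothesis rewrites the right operand to $h\ (\mathit{concat}\ \mathit{xss'})$, and the append clause together with $\mathit{concat}\ (xs : \mathit{xss'}) = xs \mathbin{+\!\!+} \mathit{concat}\ \mathit{xss'}$ collapses the whole expression to $h\ (\mathit{concat}\ (xs : \mathit{xss'}))$.

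\textbf{Backward direction} ($\Leftarrow$). Suppose $\mathit{foldr}\ (\oplus)\ z \cdot \mathit{map}\ h = h \cdot \mathit{concat}$. Here I would verify the three defining equations of $h = \mathit{hom}\ (\oplus)\ (h\cdot\mathit{wrap})\ z$ directly, with no induction, by evaluating the hypothesis at suitable arguments. Instantiating at $[\,]$ gives $z = h\ [\,]$; the clause $h\ [x] = (h\cdot\mathit{wrap})\ x$ is immediate. For the append clause, instantiating at the singleton $[ys]$ gives $h\ ys \oplus z = h\ ys$ (using $\mathit{concat}\ [ys] = ys$), and instantiating at the two-element list $[xs, ys]$ gives $h\ xs \oplus (h\ ys \oplus z) = h\ (\mathit{concat}\ [xs, ys]) = h\ (xs \mathbin{+\!\!+} ys)$; substituting the former into the latter yields $h\ (xs \mathbin{+\!\!+} ys) = h\ xs \oplus h\ ys$.

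\textbf{Expected obstacle.} There is no deep difficulty here; the one point that needs care is in the backward direction, where $\mathit{foldr}\ (\oplus)\ z$ always leaves a trailing ``$\oplus z$'' that must be discharged before the binary homomorphism law can be read off --- this is exactly what the singleton instance supplies, so the order of the argument (singleton first, then the pair) matters. I would also make explicit that $h = \mathit{hom}\ (\oplus)\ k\ z$ here abbreviates satisfaction of the three stated equations, so that the associativity-and-identity conditions mentioned as ``implicit'' are a derived consequence, not an extra proof obligation. As a sanity check, the forward direction can equivalently be obtained from the map--fold fusion identity $\mathit{foldr}\ (\oplus)\ z \cdot \mathit{map}\ h = \mathit{foldr}\ (\lambda\,xs\ r \to h\ xs \oplus r)\ z$ and the $\mathit{foldr}$ fusion law applied to $\mathit{concat} = \mathit{foldr}\ (\mathbin{+\!\!+})\ [\,]$, but the direct induction is shorter and closer to the paper's style.
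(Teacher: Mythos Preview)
Your proposal is correct and follows essentially the same route as the paper: the forward direction is the same induction on the list of lists, and the backward direction instantiates the hypothesis at \ensuremath{[\mskip1.5mu \mskip1.5mu]} and at \ensuremath{[\mskip1.5mu \Varid{xs},\Varid{ys}\mskip1.5mu]} just as the paper does. If anything you are slightly more explicit than the paper, which silently simplifies \ensuremath{\Varid{h}\;\Varid{xs}\mathbin{\oplus}(\Varid{h}\;\Varid{ys}\mathbin{\oplus}\Varid{z})} to \ensuremath{\Varid{h}\;\Varid{xs}\mathbin{\oplus}\Varid{h}\;\Varid{ys}} without spelling out that the singleton instance \ensuremath{[\mskip1.5mu \Varid{ys}\mskip1.5mu]} supplies \ensuremath{\Varid{h}\;\Varid{ys}\mathbin{\oplus}\Varid{z}\mathrel{=}\Varid{h}\;\Varid{ys}}.
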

\begin{lemma} \label{lemma:foldr-hom}
Let \ensuremath{(\oplus)\mathbin{::}\Varid{b}\to \Varid{b}\to \Varid{b}} be associative on \ensuremath{\Varid{img}\;(\Varid{foldr}\;(\otimes)\;\Varid{z})} with \ensuremath{\Varid{z}} as its identity, where \ensuremath{(\otimes)\mathbin{::}\Varid{a}\to \Varid{b}\to \Varid{b}}.
We have \ensuremath{\Varid{foldr}\;(\otimes)\;\Varid{z}\mathrel{=}\Varid{hom}\;(\oplus)\;(\mathbin{\otimes}\Varid{z})\;\Varid{z}} if and only if
\ensuremath{\Varid{x}\mathbin{\otimes}(\Varid{y}\mathbin{\oplus}\Varid{w})\mathrel{=}(\Varid{x}\mathbin{\otimes}\Varid{y})\mathbin{\oplus}\Varid{w}} for all
\ensuremath{\Varid{x}\mathbin{::}\Varid{a}} and \ensuremath{\Varid{y},\Varid{w}\in \Varid{img}\;(\Varid{foldr}\;(\otimes)\;\Varid{z})}.
\end{lemma}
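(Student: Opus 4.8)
The plan is to reduce the ``list‑homomorphism'' side of the biconditional to a single, more manageable equation and then verify the equivalence by two short calculations. Write \ensuremath{\Varid{f}\mathrel{=}\Varid{foldr}\;(\otimes)\;\Varid{z}}. Of the three equations defining \ensuremath{\Varid{f}\mathrel{=}\Varid{hom}\;(\oplus)\;(\mathbin{\otimes}\Varid{z})\;\Varid{z}}, two hold unconditionally by the definition of \ensuremath{\Varid{foldr}}: \ensuremath{\Varid{f}\;[\mskip1.5mu \mskip1.5mu]\mathrel{=}\Varid{z}} and \ensuremath{\Varid{f}\;[\mskip1.5mu \Varid{x}\mskip1.5mu]\mathrel{=}\Varid{x}\mathbin{\otimes}\Varid{z}\mathrel{=}(\mathbin{\otimes}\Varid{z})\;\Varid{x}}. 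So the only content is the append clause, call it \ensuremath{(\star)}: \ensuremath{\Varid{f}\;(\Varid{xs}\mathbin{+\!\!\!\!\!+}\Varid{ys})\mathrel{=}\Varid{f}\;\Varid{xs}\mathbin{\oplus}\Varid{f}\;\Varid{ys}} for all lists \ensuremath{\Varid{xs},\Varid{ys}}. (The implicit demand that \ensuremath{(\oplus)} be associative with unit \ensuremath{\Varid{z}} is, on \ensuremath{\Varid{img}\;\Varid{f}}, exactly the standing hypothesis of the lemma, so nothing extra is needed to make sense of the statement.) It therefore suffices to show that \ensuremath{(\star)} is equivalent to the distributivity law \ensuremath{\Varid{x}\mathbin{\otimes}(\Varid{y}\mathbin{\oplus}\Varid{w})\mathrel{=}(\Varid{x}\mathbin{\otimes}\Varid{y})\mathbin{\oplus}\Varid{w}} for all \ensuremath{\Varid{x}} and \ensuremath{\Varid{y},\Varid{w}\in \Varid{img}\;\Varid{f}}.

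For \ensuremath{(\star)\Rightarrow} distributivity I would proceed purely equationally: given \ensuremath{\Varid{y},\Varid{w}\in \Varid{img}\;\Varid{f}}, choose lists with \ensuremath{\Varid{y}\mathrel{=}\Varid{f}\;\Varid{ys}} and \ensuremath{\Varid{w}\mathrel{=}\Varid{f}\;\Varid{ws}}, and compute \ensuremath{(\Varid{x}\mathbin{\otimes}\Varid{y})\mathbin{\oplus}\Varid{w}\mathrel{=}\Varid{f}\;(\Varid{x}\mathbin{:}\Varid{ys})\mathbin{\oplus}\Varid{f}\;\Varid{ws}\mathrel{=}\Varid{f}\;((\Varid{x}\mathbin{:}\Varid{ys})\mathbin{+\!\!\!\!\!+}\Varid{ws})\mathrel{=}\Varid{f}\;(\Varid{x}\mathbin{:}(\Varid{ys}\mathbin{+\!\!\!\!\!+}\Varid{ws}))\mathrel{=}\Varid{x}\mathbin{\otimes}\Varid{f}\;(\Varid{ys}\mathbin{+\!\!\!\!\!+}\Varid{ws})\mathrel{=}\Varid{x}\mathbin{\otimes}(\Varid{y}\mathbin{\oplus}\Varid{w})}, where the first and fourth steps unfold \ensuremath{\Varid{foldr}}, the third is \ensuremath{(\mathbin{+\!\!\!\!\!+})} on a cons, and the second and fifth are instances of \ensuremath{(\star)}. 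For the converse I would assume distributivity and prove \ensuremath{(\star)} by induction on \ensuremath{\Varid{xs}}: the base case \ensuremath{\Varid{xs}\mathbin{:=}[\mskip1.5mu \mskip1.5mu]} is \ensuremath{\Varid{f}\;([\mskip1.5mu \mskip1.5mu]\mathbin{+\!\!\!\!\!+}\Varid{ys})\mathrel{=}\Varid{f}\;\Varid{ys}\mathrel{=}\Varid{z}\mathbin{\oplus}\Varid{f}\;\Varid{ys}\mathrel{=}\Varid{f}\;[\mskip1.5mu \mskip1.5mu]\mathbin{\oplus}\Varid{f}\;\Varid{ys}}, using that \ensuremath{\Varid{z}} is a left unit of \ensuremath{(\oplus)} on \ensuremath{\Varid{img}\;\Varid{f}}; and in the step \ensuremath{\Varid{xs}\mathbin{:=}\Varid{x}\mathbin{:}\Varid{xs}} we have \ensuremath{\Varid{f}\;((\Varid{x}\mathbin{:}\Varid{xs})\mathbin{+\!\!\!\!\!+}\Varid{ys})\mathrel{=}\Varid{x}\mathbin{\otimes}\Varid{f}\;(\Varid{xs}\mathbin{+\!\!\!\!\!+}\Varid{ys})\mathrel{=}\Varid{x}\mathbin{\otimes}(\Varid{f}\;\Varid{xs}\mathbin{\oplus}\Varid{f}\;\Varid{ys})\mathrel{=}(\Varid{x}\mathbin{\otimes}\Varid{f}\;\Varid{xs})\mathbin{\oplus}\Varid{f}\;\Varid{ys}\mathrel{=}\Varid{f}\;(\Varid{x}\mathbin{:}\Varid{xs})\mathbin{\oplus}\Varid{f}\;\Varid{ys}}, the middle equalities being the induction hypothesis and the distributivity law.

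I do not expect a genuinely hard step here; the argument is short in both directions. The one point that needs care — the ``main obstacle'' in the bookkeeping sense — is that the distributivity law is only assumed to hold when \ensuremath{\Varid{y}} and \ensuremath{\Varid{w}} range over \ensuremath{\Varid{img}\;\Varid{f}}, and likewise for the unit law for \ensuremath{\Varid{z}}. So at every appeal to those laws I must check that the operands are literally of the form \ensuremath{\Varid{f}\;\Varid{us}}; this is always the case above since \ensuremath{\Varid{f}\;\Varid{xs}}, \ensuremath{\Varid{f}\;\Varid{ys}}, \ensuremath{\Varid{f}\;\Varid{ws}} manifestly lie in \ensuremath{\Varid{img}\;\Varid{f}}, but it is the detail a careless proof would gloss over. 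Notably, associativity of \ensuremath{(\oplus)} is not actually consumed by either calculation; it is needed only to justify writing \ensuremath{\Varid{hom}} at all and is already among the standing hypotheses. (One could alternatively route the ``only if'' half through Lemma~\ref{lemma:hom-concat}, instantiated at \ensuremath{\Varid{h}\mathbin{:=}\Varid{f}} with \ensuremath{\Varid{h}\mathbin{\cdot}\Varid{wrap}\mathrel{=}(\mathbin{\otimes}\Varid{z})}, but the direct argument above is more economical.)
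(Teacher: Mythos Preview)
Your proposal is correct and follows essentially the same route as the paper: both directions are argued exactly as you describe, with the \ensuremath{(\Leftarrow)} direction proved by induction on \ensuremath{\Varid{xs}} (the paper first factors through the identity \ensuremath{\Varid{foldr}\;(\otimes)\;\Varid{z}\;(\Varid{xs}\mathbin{+\!\!\!\!\!+}\Varid{ys})\mathrel{=}\Varid{foldr}\;(\otimes)\;(\Varid{foldr}\;(\otimes)\;\Varid{z}\;\Varid{ys})\;\Varid{xs}}, which you simply inline) and the \ensuremath{(\Rightarrow)} direction by choosing witnesses for \ensuremath{\Varid{y}} and \ensuremath{\Varid{w}} and unfolding. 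Your explicit handling of the base case via the left-unit law and your remark that associativity is not consumed are both more careful than the paper, which dismisses the base case as ``trivially holds.''
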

Notice, in Lemma~\ref{lemma:foldr-hom}, that \ensuremath{(\mathbin{\otimes}\Varid{z})\mathrel{=}\Varid{foldr}\;(\otimes)\;\Varid{z}\mathbin{\cdot}\Varid{wrap}}.
Proofs of both lemmas are interesting exercises, albeit being a bit off-topic.
They are recorded in Appendix~\ref{sec:misc-proofs}.

\subsection{Formalisation and Results}

Distributed collections of data are represented by {\em Resilient Distributed Datasets} (RDDs) in Spark. Informally, an RDD is a collection of data entries; these data entries are further divided into partitions stored on different machines. Abstractly, an \ensuremath{\Conid{RDD}} can be seen as a list of lists:
\begin{hscode}\SaveRestoreHook
\column{B}{@{}>{\hspre}l<{\hspost}@{}}%
\column{19}{@{}>{\hspre}l<{\hspost}@{}}%
\column{E}{@{}>{\hspre}l<{\hspost}@{}}%
\>[B]{}\mathbf{type}\;\Conid{Partition}\;\Varid{a}{}\<[19]%
\>[19]{}\mathrel{=}[\mskip1.5mu \Varid{a}\mskip1.5mu]~~,{}\<[E]%
\\
\>[B]{}\mathbf{type}\;\Conid{RDD}\;\Varid{a}{}\<[19]%
\>[19]{}\mathrel{=}[\mskip1.5mu \Conid{Partition}\;\Varid{a}\mskip1.5mu]~~,{}\<[E]%
\ColumnHook
\end{hscode}\resethooks
where each \ensuremath{\Conid{Partition}} may be stored in a different machine.

While Spark provides a collection of \emph{combinators} (functions on \ensuremath{\Conid{RDD}}s that are designed to be composed to form larger programs), in this report we focus on a particular one, \ensuremath{\Varid{aggregate}}. It can be seen as a parallel implementation \ensuremath{\Varid{foldr}}. The combinator processes an \ensuremath{\Conid{RDD}} in two levels: each partition is first processed locally on one machine by \ensuremath{\Varid{foldr}\;(\otimes)\;\Varid{z}}. The sub-results are then communicated and combined ---
this second step can be think of as another \ensuremath{\Varid{foldr}} with \ensuremath{(\oplus)}.%
\footnote{In fact, the actual Spark aggregation (and that modelled in \citet{Lengal:17:Executable}) are like \ensuremath{\Varid{foldl}}.
For convenience in our proofs we see all list operations the other way round and use \ensuremath{\Varid{foldr}}. This is not a fundamental difference.}

Spark programmers like to assume that their programs are deterministic. To exploit concurrency, however, the sub-results from each machine might be processed in arbitrary order and the result could be non-deterministic.
The following is our characterisation of \ensuremath{\Varid{aggregate}}, where we use \ensuremath{\Varid{perm}} to model the fact that sub-results from each machine are processed in unknown order:
\begin{hscode}\SaveRestoreHook
\column{B}{@{}>{\hspre}l<{\hspost}@{}}%
\column{E}{@{}>{\hspre}l<{\hspost}@{}}%
\>[B]{}\Varid{aggregate}\mathbin{::}\Varid{b}\to (\Varid{a}\to \Varid{b}\to \Varid{b})\to (\Varid{b}\to \Varid{b}\to \Varid{b})\to \Conid{RDD}\;\Varid{a}\to \Conid{M}\;\Varid{b}{}\<[E]%
\\
\>[B]{}\Varid{aggregate}\;\Varid{z}\;(\otimes)\;(\oplus)\mathrel{=}\Varid{foldr}\;(\oplus)\;\Varid{z}\mathrel{\raisebox{0.5\depth}{\scaleobj{0.5}{\langle \bullet \rangle}}}(\Varid{perm}\mathbin{\cdot}\Varid{map}\;(\Varid{foldr}\;(\otimes)\;\Varid{z}))~~.{}\<[E]%
\ColumnHook
\end{hscode}\resethooks
It is clear from the types that \ensuremath{\Varid{foldr}\;(\otimes)\;\Varid{z}} and \ensuremath{\Varid{foldr}\;(\oplus)\;\Varid{z}} are pure computations, and non-determinism is introduced solely by \ensuremath{\Varid{perm}}.

\paragraph{Deterministic Aggregation}
We are interested in finding out conditions under which \ensuremath{\Varid{aggregate}} produces deterministic outcomes.
% For that we will need the well-known |foldr|-|map| fusion:
% \begin{theorem}[|foldr|-|map| fusion] \label{thm:foldr-map}
% For all |oplus|, |z|, and |h|,
% < foldr oplus z . map h = foldr odot z {-"~~,"-}
% <  where x `odot` w = h x `oplus` w {-"~~."-}
% \end{theorem}
% Furthermore, we have
% \begin{lemma}\label{lemma:foldr-map-odot-commute}
% Let |odot| be defined as in Theorem~\ref{thm:foldr-map}.
% We have |x `odot` (y `odot` w) = y `odot` (x `odot` w)|
% if |oplus| is associative and commutative.
% \end{lemma}
% Proof of Lemma~\ref{lemma:foldr-map-odot-commute} is routine and deferred to Appendix~\ref{sec:misc-proofs}. The main result about |aggregation| is the following theorem:
\begin{theorem}
\label{thm:aggregate-det}
Given \ensuremath{(\otimes)\mathbin{::}\Varid{a}\to \Varid{b}\to \Varid{b}} and \ensuremath{(\oplus)\mathbin{::}\Varid{b}\to \Varid{b}\to \Varid{b}}, where \ensuremath{(\oplus)} is associative and commutative, we have:\begin{hscode}\SaveRestoreHook
\column{B}{@{}>{\hspre}l<{\hspost}@{}}%
\column{3}{@{}>{\hspre}l<{\hspost}@{}}%
\column{E}{@{}>{\hspre}l<{\hspost}@{}}%
\>[3]{}\Varid{aggregate}\;\Varid{z}\;(\otimes)\;(\oplus)\mathrel{=}\Varid{return}\mathbin{\cdot}\Varid{foldr}\;(\oplus)\;\Varid{z}\mathbin{\cdot}\Varid{map}\;(\Varid{foldr}\;(\otimes)\;\Varid{z})~~.{}\<[E]%
\ColumnHook
\end{hscode}\resethooks
\end{theorem}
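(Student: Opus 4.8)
The plan is to derive the statement as essentially a one-step consequence of Lemma~\ref{lemma:fold-perm}, after peeling off the pure part of \ensuremath{\Varid{aggregate}}. Unfolding its definition, the left-hand side is \ensuremath{\Varid{foldr}\;(\oplus)\;\Varid{z}\mathrel{\raisebox{0.5\depth}{\scaleobj{0.5}{\langle \bullet \rangle}}}(\Varid{perm}\mathbin{\cdot}\Varid{map}\;(\Varid{foldr}\;(\otimes)\;\Varid{z}))}. Since \ensuremath{\Varid{map}\;(\Varid{foldr}\;(\otimes)\;\Varid{z})} is a pure function and \ensuremath{\Varid{perm}} is the sole source of non-determinism, the idea is to float that pure function out of the monadic composition so that Lemma~\ref{lemma:fold-perm} applies to the \ensuremath{\Varid{foldr}\;(\oplus)\;\Varid{z}\mathrel{\raisebox{0.5\depth}{\scaleobj{0.5}{\langle \bullet \rangle}}}\Varid{perm}} fragment.

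First I would check that Lemma~\ref{lemma:fold-perm} can be instantiated with \ensuremath{(\odot)\mathbin{:=}(\oplus)}; this is type-correct, since both of that lemma's type variables become \ensuremath{\Varid{b}} as \ensuremath{(\oplus)\mathbin{::}\Varid{b}\to \Varid{b}\to \Varid{b}}. Its hypothesis then reads \ensuremath{\Varid{x}\mathbin{\oplus}(\Varid{y}\mathbin{\oplus}\Varid{w})\mathrel{=}\Varid{y}\mathbin{\oplus}(\Varid{x}\mathbin{\oplus}\Varid{w})} for all \ensuremath{\Varid{x},\Varid{y},\Varid{w}\mathbin{::}\Varid{b}}, which follows in one line from the assumed associativity and commutativity: \ensuremath{\Varid{x}\mathbin{\oplus}(\Varid{y}\mathbin{\oplus}\Varid{w})\mathrel{=}(\Varid{x}\mathbin{\oplus}\Varid{y})\mathbin{\oplus}\Varid{w}\mathrel{=}(\Varid{y}\mathbin{\oplus}\Varid{x})\mathbin{\oplus}\Varid{w}\mathrel{=}\Varid{y}\mathbin{\oplus}(\Varid{x}\mathbin{\oplus}\Varid{w})}. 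Hence \ensuremath{\Varid{foldr}\;(\oplus)\;\Varid{z}\mathrel{\raisebox{0.5\depth}{\scaleobj{0.5}{\langle \bullet \rangle}}}\Varid{perm}\mathrel{=}\Varid{return}\mathbin{\cdot}\Varid{foldr}\;(\oplus)\;\Varid{z}}.

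The remaining calculation is a short chain: from \ensuremath{\Varid{foldr}\;(\oplus)\;\Varid{z}\mathrel{\raisebox{0.5\depth}{\scaleobj{0.5}{\langle \bullet \rangle}}}(\Varid{perm}\mathbin{\cdot}\Varid{map}\;(\Varid{foldr}\;(\otimes)\;\Varid{z}))}, apply~\eqref{eq:mcomp-comp-mcomp} with \ensuremath{\Varid{g}\mathbin{:=}\Varid{perm}} --- whose side condition (the middle operand must be a monad-valued function) is met since \ensuremath{\Varid{perm}\mathbin{::}[\mskip1.5mu \Varid{a}\mskip1.5mu]\to \Conid{M}\;[\mskip1.5mu \Varid{a}\mskip1.5mu]} --- to obtain \ensuremath{(\Varid{foldr}\;(\oplus)\;\Varid{z}\mathrel{\raisebox{0.5\depth}{\scaleobj{0.5}{\langle \bullet \rangle}}}\Varid{perm})\mathbin{\cdot}\Varid{map}\;(\Varid{foldr}\;(\otimes)\;\Varid{z})}; then substitute the consequence of Lemma~\ref{lemma:fold-perm} just obtained; and finish with associativity of \ensuremath{(\mathbin{\cdot})}, yielding \ensuremath{\Varid{return}\mathbin{\cdot}\Varid{foldr}\;(\oplus)\;\Varid{z}\mathbin{\cdot}\Varid{map}\;(\Varid{foldr}\;(\otimes)\;\Varid{z})}, the right-hand side.

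There is no real obstacle here: all the genuine work --- the induction on lists and the use of idempotence, \eqref{eq:ap-return}, \eqref{eq:ap-mplus}, and the rotation laws --- has already been discharged inside Lemma~\ref{lemma:fold-insert} and Lemma~\ref{lemma:fold-perm}. The only points demanding care are bookkeeping ones: that the instantiation \ensuremath{(\odot)\mathbin{:=}(\oplus)} type-checks and that the side condition of~\eqref{eq:mcomp-comp-mcomp} holds. Conceptually, the theorem merely records that for \ensuremath{\Varid{aggregate}} to be deterministic one needs nothing of \ensuremath{(\oplus)} beyond what makes \ensuremath{\Varid{foldr}\;(\oplus)\;\Varid{z}} insensitive to permutations of its input, and that associativity together with commutativity already secure this via the interchange law.
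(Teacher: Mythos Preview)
Your proposal is correct and follows essentially the same route as the paper: unfold \ensuremath{\Varid{aggregate}}, use~\eqref{eq:mcomp-comp-mcomp} to isolate \ensuremath{\Varid{foldr}\;(\oplus)\;\Varid{z}\mathrel{\raisebox{0.5\depth}{\scaleobj{0.5}{\langle \bullet \rangle}}}\Varid{perm}}, then apply Lemma~\ref{lemma:fold-perm}. You even spell out explicitly why associativity and commutativity of \ensuremath{(\oplus)} discharge the interchange hypothesis of that lemma, which the paper leaves implicit.
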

\begin{proof} We reason:
\begin{hscode}\SaveRestoreHook
\column{B}{@{}>{\hspre}l<{\hspost}@{}}%
\column{4}{@{}>{\hspre}l<{\hspost}@{}}%
\column{8}{@{}>{\hspre}l<{\hspost}@{}}%
\column{E}{@{}>{\hspre}l<{\hspost}@{}}%
\>[4]{}\Varid{aggregate}\;\Varid{z}\;(\otimes)\;(\oplus){}\<[E]%
\\
\>[B]{}\mathbin{=}{}\<[8]%
\>[8]{}\mbox{\commentbegin  definition of \ensuremath{\Varid{aggregate}}  \commentend}{}\<[E]%
\\
\>[B]{}\hsindent{4}{}\<[4]%
\>[4]{}\Varid{foldr}\;(\oplus)\;\Varid{z}\mathrel{\raisebox{0.5\depth}{\scaleobj{0.5}{\langle \bullet \rangle}}}(\Varid{perm}\mathbin{\cdot}\Varid{map}\;(\Varid{foldr}\;(\otimes)\;\Varid{z})){}\<[E]%
\\
\>[B]{}\mathbin{=}{}\<[8]%
\>[8]{}\mbox{\commentbegin  by \eqref{eq:mcomp-comp-mcomp}  \commentend}{}\<[E]%
\\
\>[B]{}\hsindent{4}{}\<[4]%
\>[4]{}(\Varid{foldr}\;(\oplus)\;\Varid{z}\mathrel{\raisebox{0.5\depth}{\scaleobj{0.5}{\langle \bullet \rangle}}}\Varid{perm})\mathbin{\cdot}\Varid{map}\;(\Varid{foldr}\;(\otimes)\;\Varid{z}){}\<[E]%
\\
\>[B]{}\mathbin{=}{}\<[8]%
\>[8]{}\mbox{\commentbegin   Lemma~\ref{lemma:fold-perm}, since \ensuremath{(\oplus)} is associative and commutative  \commentend}{}\<[E]%
\\
\>[B]{}\hsindent{4}{}\<[4]%
\>[4]{}\Varid{return}\mathbin{\cdot}\Varid{foldr}\;(\oplus)\;\Varid{z}\mathbin{\cdot}\Varid{map}\;(\Varid{foldr}\;(\otimes)\;\Varid{z})~~.{}\<[E]%
\ColumnHook
\end{hscode}\resethooks
\end{proof}

The following corollary summaries the results and present conditions under which \ensuremath{\Varid{aggregate}} computes a homomorphism.

\begin{corollary}
\label{cor:aggregate-det-hom}
\ensuremath{\Varid{aggregate}\;\Varid{z}\;(\otimes)\;(\oplus)\mathrel{=}\Varid{return}\mathbin{\cdot}\Varid{hom}\;(\oplus)\;(\mathbin{\otimes}\Varid{z})\;\Varid{z}\mathbin{\cdot}\Varid{concat}}, provided that
\ensuremath{(\oplus)} is associative, commutative, and has \ensuremath{\Varid{z}} as identity, and that \ensuremath{\Varid{x}\mathbin{\otimes}(\Varid{y}\mathbin{\oplus}\Varid{w})\mathrel{=}(\Varid{x}\mathbin{\otimes}\Varid{y})\mathbin{\oplus}\Varid{w}} for all
\ensuremath{\Varid{x}\mathbin{::}\Varid{a}} and \ensuremath{\Varid{y},\Varid{w}\in \Varid{img}\;(\Varid{foldr}\;(\otimes)\;\Varid{z})}.
\end{corollary}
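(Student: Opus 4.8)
The plan is to obtain the corollary by combining Theorem~\ref{thm:aggregate-det} with Lemmas~\ref{lemma:hom-concat} and~\ref{lemma:foldr-hom}, reducing everything after the first step to reasoning about pure list functions. Theorem~\ref{thm:aggregate-det} already gives, since \ensuremath{(\oplus)} is associative and commutative,
\[
  \Varid{aggregate}\;\Varid{z}\;(\otimes)\;(\oplus)\;=\;\Varid{return}\mathbin{\cdot}\Varid{foldr}\;(\oplus)\;\Varid{z}\mathbin{\cdot}\Varid{map}\;(\Varid{foldr}\;(\otimes)\;\Varid{z})\,,
\]
so the remaining task is to show \ensuremath{\Varid{foldr}\;(\oplus)\;\Varid{z}\mathbin{\cdot}\Varid{map}\;(\Varid{foldr}\;(\otimes)\;\Varid{z})\mathrel{=}\Varid{hom}\;(\oplus)\;(\mathbin{\otimes}\Varid{z})\;\Varid{z}\mathbin{\cdot}\Varid{concat}}, after which composing on the left with \ensuremath{\Varid{return}} (and reassociating \ensuremath{(\mathbin{\cdot})}) finishes the proof.

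For that task I would proceed in three small steps. First, apply Lemma~\ref{lemma:foldr-hom} with the roles \ensuremath{(\otimes)}, \ensuremath{(\oplus)}, \ensuremath{\Varid{z}}: its premises are that \ensuremath{(\oplus)} be associative on \ensuremath{\Varid{img}\;(\Varid{foldr}\;(\otimes)\;\Varid{z})} with \ensuremath{\Varid{z}} as identity, and that \ensuremath{\Varid{x}\mathbin{\otimes}(\Varid{y}\mathbin{\oplus}\Varid{w})\mathrel{=}(\Varid{x}\mathbin{\otimes}\Varid{y})\mathbin{\oplus}\Varid{w}} for \ensuremath{\Varid{y},\Varid{w}\in\Varid{img}\;(\Varid{foldr}\;(\otimes)\;\Varid{z})}; both are delivered by the corollary's hypotheses (global associativity and unit restrict to the image, and the exchange law is assumed exactly on the image). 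This yields \ensuremath{\Varid{foldr}\;(\otimes)\;\Varid{z}\mathrel{=}\Varid{hom}\;(\oplus)\;(\mathbin{\otimes}\Varid{z})\;\Varid{z}}. Second, observe --- as the paper notes just after Lemma~\ref{lemma:foldr-hom} --- that \ensuremath{(\mathbin{\otimes}\Varid{z})\mathrel{=}\Varid{foldr}\;(\otimes)\;\Varid{z}\mathbin{\cdot}\Varid{wrap}}, so the equation just obtained is exactly the left-hand side of the ``if and only if'' in Lemma~\ref{lemma:hom-concat} instantiated at \ensuremath{\Varid{h}\mathrel{=}\Varid{foldr}\;(\otimes)\;\Varid{z}}; hence Lemma~\ref{lemma:hom-concat} gives \ensuremath{\Varid{foldr}\;(\oplus)\;\Varid{z}\mathbin{\cdot}\Varid{map}\;(\Varid{foldr}\;(\otimes)\;\Varid{z})\mathrel{=}\Varid{foldr}\;(\otimes)\;\Varid{z}\mathbin{\cdot}\Varid{concat}}. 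Third, rewrite the \ensuremath{\Varid{foldr}\;(\otimes)\;\Varid{z}} on the right by the homomorphism identity from step one, producing \ensuremath{\Varid{hom}\;(\oplus)\;(\mathbin{\otimes}\Varid{z})\;\Varid{z}\mathbin{\cdot}\Varid{concat}}; prepending \ensuremath{\Varid{return}\mathbin{\cdot}} then yields the claimed equation.

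I do not expect a genuine obstacle; the only thing requiring care is aligning the hypotheses. One must notice that commutativity of \ensuremath{(\oplus)} is consumed solely by Theorem~\ref{thm:aggregate-det} (through Lemma~\ref{lemma:fold-perm}) and plays no further role, whereas the pure-function part needs only the weaker, image-restricted premises of Lemmas~\ref{lemma:foldr-hom} and~\ref{lemma:hom-concat}, which the corollary's global assumptions subsume. A second, purely bookkeeping point is that the notation \ensuremath{\Varid{hom}\;(\oplus)\;(\mathbin{\otimes}\Varid{z})\;\Varid{z}} implicitly demands that \ensuremath{(\oplus)} be associative with \ensuremath{\Varid{z}} as identity; this is precisely part of the corollary's hypotheses, so the right-hand side is well-formed and no further verification is needed.
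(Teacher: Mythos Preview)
Your proposal is correct and follows essentially the same route as the paper's proof: apply Theorem~\ref{thm:aggregate-det} to obtain \ensuremath{\Varid{return}\mathbin{\cdot}\Varid{foldr}\;(\oplus)\;\Varid{z}\mathbin{\cdot}\Varid{map}\;(\Varid{foldr}\;(\otimes)\;\Varid{z})}, then use Lemma~\ref{lemma:foldr-hom} to recognise \ensuremath{\Varid{foldr}\;(\otimes)\;\Varid{z}} as \ensuremath{\Varid{hom}\;(\oplus)\;(\mathbin{\otimes}\Varid{z})\;\Varid{z}} and Lemma~\ref{lemma:hom-concat} to collapse the \ensuremath{\Varid{foldr}}--\ensuremath{\Varid{map}} composite into \ensuremath{\Varid{hom}\mathbin{\cdot}\Varid{concat}}. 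The paper compresses your three pure-function steps into a single line, but the decomposition and the lemmas invoked are identical; your extra remarks about which hypotheses feed which lemma are accurate bookkeeping rather than a different argument.
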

\begin{proof} We reason:
\begin{hscode}\SaveRestoreHook
\column{B}{@{}>{\hspre}l<{\hspost}@{}}%
\column{4}{@{}>{\hspre}l<{\hspost}@{}}%
\column{8}{@{}>{\hspre}l<{\hspost}@{}}%
\column{E}{@{}>{\hspre}l<{\hspost}@{}}%
\>[4]{}\Varid{aggregate}\;\Varid{z}\;(\otimes)\;(\oplus){}\<[E]%
\\
\>[B]{}\mathbin{=}{}\<[8]%
\>[8]{}\mbox{\commentbegin  Theorem~\ref{thm:aggregate-det}  \commentend}{}\<[E]%
\\
\>[B]{}\hsindent{4}{}\<[4]%
\>[4]{}\Varid{return}\mathbin{\cdot}\Varid{foldr}\;(\oplus)\;\Varid{z}\mathbin{\cdot}\Varid{map}\;(\Varid{foldr}\;(\otimes)\;\Varid{z}){}\<[E]%
\\
\>[B]{}\mathbin{=}{}\<[8]%
\>[8]{}\mbox{\commentbegin  \ensuremath{\Varid{foldr}\;(\otimes)\;\Varid{z}\mathrel{=}\Varid{hom}\;(\oplus)\;(\mathbin{\otimes}\Varid{z})\;\Varid{z}} by Lemma~\ref{lemma:foldr-hom}; Lemma~\ref{lemma:hom-concat}  \commentend}{}\<[E]%
\\
\>[B]{}\hsindent{4}{}\<[4]%
\>[4]{}\Varid{return}\mathbin{\cdot}\Varid{hom}\;(\oplus)\;(\mathbin{\otimes}\Varid{z})\;\Varid{z}\mathbin{\cdot}\Varid{concat}~~.{}\<[E]%
\ColumnHook
\end{hscode}\resethooks
\end{proof}

\paragraph{Determinism Implies Homomorphism}
The final part of the report deals with an opposite question:
what can we infer if we know that \ensuremath{\Varid{aggregate}} is deterministic?
To answer that, however, we need to assume two more properties:
\begin{align}
 \ensuremath{\Varid{m}_{1}\mathbin{\talloblong}\Varid{m}_{2}\mathrel{=}\Varid{return}\;\Varid{x}} ~~&\Rightarrow~~ \ensuremath{\Varid{m}_{1}\mathrel{=}\Varid{m}_{2}\mathrel{=}\Varid{return}\;\Varid{x}} \mbox{.}
  \label{eq:mplus-return}\\
  \ensuremath{\Varid{return}\;\Varid{x}_{1}\mathrel{=}\Varid{return}\;\Varid{x}_{2}} ~~&\Rightarrow~~ \ensuremath{\Varid{x}_{1}\mathrel{=}\Varid{x}_{2}} \mbox{.}
  \label{eq:return-injective}
\end{align}
Property \eqref{eq:mplus-return} can be seen as the other direction of idempotency of \ensuremath{(\talloblong)},
while \eqref{eq:return-injective} states that \ensuremath{\Varid{return}} is injective.

The following lemma can be understood this way: when \ensuremath{\Varid{aggregate}\;\Varid{z}\;(\otimes)\;(\oplus)}, which could be non-deterministic, can be performed by a deterministic function, the operator \ensuremath{(\oplus)} should be insensitive to ordering:
\begin{lemma}
\label{lemma:aggregate-det-reasoning}
If \ensuremath{\Varid{aggregate}\;\Varid{z}\;(\otimes)\;(\oplus)\mathrel{=}\Varid{return}\mathbin{\cdot}\Varid{foldr}\;(\otimes)\;\Varid{z}\mathbin{\cdot}\Varid{concat}},
and \ensuremath{\Varid{perm}\;\Varid{xss}\mathrel{=}\Varid{return}\;\Varid{yss}\mathbin{\talloblong}\Varid{m}} for some \ensuremath{\Varid{m}}, we have
\begin{hscode}\SaveRestoreHook
\column{B}{@{}>{\hspre}l<{\hspost}@{}}%
\column{3}{@{}>{\hspre}l<{\hspost}@{}}%
\column{5}{@{}>{\hspre}l<{\hspost}@{}}%
\column{7}{@{}>{\hspre}l<{\hspost}@{}}%
\column{E}{@{}>{\hspre}l<{\hspost}@{}}%
\>[3]{}\Varid{foldr}\;(\otimes)\;\Varid{z}\;(\Varid{concat}\;\Varid{xss})\mathbin{=}{}\<[E]%
\\
\>[3]{}\hsindent{2}{}\<[5]%
\>[5]{}\Varid{foldr}\;(\oplus)\;\Varid{z}\;(\Varid{map}\;(\Varid{foldr}\;(\otimes)\;\Varid{z})\;\Varid{xss})\mathbin{=}{}\<[E]%
\\
\>[5]{}\hsindent{2}{}\<[7]%
\>[7]{}\Varid{foldr}\;(\oplus)\;\Varid{z}\;(\Varid{map}\;(\Varid{foldr}\;(\otimes)\;\Varid{z})\;\Varid{yss})~~.{}\<[E]%
\ColumnHook
\end{hscode}\resethooks
\end{lemma}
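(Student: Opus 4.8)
Write $f = \Varid{foldr}\;(\otimes)\;\Varid{z}$ and $g = \Varid{foldr}\;(\oplus)\;\Varid{z}$ for brevity, so that the three quantities linked by the displayed chain are $f\;(\Varid{concat}\;\Varid{xss})$, $g\;(\Varid{map}\;f\;\Varid{xss})$, and $g\;(\Varid{map}\;f\;\Varid{yss})$. The plan is to isolate a single auxiliary claim and use it twice: \emph{whenever $\Varid{perm}\;\Varid{xss} = \Varid{return}\;\Varid{wss}\mathbin{\talloblong}\Varid{m}$ for some $\Varid{m}$, one has $g\;(\Varid{map}\;f\;\Varid{wss}) = f\;(\Varid{concat}\;\Varid{xss})$}. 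Lemma~\ref{lemma:perm-id} supplies such a decomposition with $\Varid{wss} := \Varid{xss}$, which then yields the first equality of the lemma ($\Varid{foldr}\;(\otimes)\;\Varid{z}\;(\Varid{concat}\;\Varid{xss}) = \Varid{foldr}\;(\oplus)\;\Varid{z}\;(\Varid{map}\;(\Varid{foldr}\;(\otimes)\;\Varid{z})\;\Varid{xss})$); the hypothesis of the lemma supplies one with $\Varid{wss} := \Varid{yss}$, and combining the two instances gives the second equality. Both target equalities thus reduce to this one claim.

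To prove the claim I would evaluate $\Varid{aggregate}\;\Varid{z}\;(\otimes)\;(\oplus)\;\Varid{xss}$ in two ways and compare. On the one hand, the determinism hypothesis $\Varid{aggregate}\;\Varid{z}\;(\otimes)\;(\oplus) = \Varid{return}\mathbin{\cdot}\Varid{foldr}\;(\otimes)\;\Varid{z}\mathbin{\cdot}\Varid{concat}$ gives $\Varid{aggregate}\;\Varid{z}\;(\otimes)\;(\oplus)\;\Varid{xss} = \Varid{return}\;(f\;(\Varid{concat}\;\Varid{xss}))$. On the other hand, unfolding the definition of $\Varid{aggregate}$ and reading it pointwise turns it into $g\mathrel{\raisebox{0.5\depth}{\scaleobj{0.5}{\langle}} \scaleobj{0.8}{\$} \raisebox{0.5\depth}{\scaleobj{0.5}{\rangle}}}\Varid{perm}\;(\Varid{map}\;f\;\Varid{xss})$. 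Here the key move is Lemma~\ref{lemma:shuffle-map}, read pointwise as $\Varid{perm}\;(\Varid{map}\;f\;\Varid{xss}) = \Varid{map}\;f\mathrel{\raisebox{0.5\depth}{\scaleobj{0.5}{\langle}} \scaleobj{0.8}{\$} \raisebox{0.5\depth}{\scaleobj{0.5}{\rangle}}}\Varid{perm}\;\Varid{xss}$: it lets us push the given decomposition $\Varid{perm}\;\Varid{xss} = \Varid{return}\;\Varid{wss}\mathbin{\talloblong}\Varid{m}$ inside, after which \eqref{eq:ap-mplus} and \eqref{eq:ap-return} distribute both $\Varid{map}\;f\mathrel{\raisebox{0.5\depth}{\scaleobj{0.5}{\langle}} \scaleobj{0.8}{\$} \raisebox{0.5\depth}{\scaleobj{0.5}{\rangle}}}(\cdot)$ and $g\mathrel{\raisebox{0.5\depth}{\scaleobj{0.5}{\langle}} \scaleobj{0.8}{\$} \raisebox{0.5\depth}{\scaleobj{0.5}{\rangle}}}(\cdot)$ over $(\mathbin{\talloblong})$, leaving $\Varid{aggregate}\;\Varid{z}\;(\otimes)\;(\oplus)\;\Varid{xss} = \Varid{return}\;(g\;(\Varid{map}\;f\;\Varid{wss}))\mathbin{\talloblong}\Varid{m}'$ for some $\Varid{m}'$. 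Equating the two evaluations gives $\Varid{return}\;(g\;(\Varid{map}\;f\;\Varid{wss}))\mathbin{\talloblong}\Varid{m}' = \Varid{return}\;(f\;(\Varid{concat}\;\Varid{xss}))$; applying \eqref{eq:mplus-return} and then \eqref{eq:return-injective} strips off the $(\mathbin{\talloblong})$ and the $\Varid{return}$, leaving $g\;(\Varid{map}\;f\;\Varid{wss}) = f\;(\Varid{concat}\;\Varid{xss})$, which is the claim.

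The step that needs thought, rather than routine rewriting, is the decision to route the argument through the given permutation $\Varid{yss}$ via Lemma~\ref{lemma:shuffle-map} instead of attempting to compare $\Varid{concat}\;\Varid{xss}$ with $\Varid{concat}\;\Varid{yss}$ directly: permuting the partitions genuinely reorders the flattened list, and $\Varid{foldr}\;(\otimes)\;\Varid{z}$ is \emph{not} assumed to be insensitive to that, so there is no shortcut of that kind. What makes the proof go through is that $f\;(\Varid{concat}\;\Varid{xss})$ is a single fixed value that \emph{every} branch of the non-deterministic aggregation must produce, and the two fresh axioms \eqref{eq:mplus-return} and \eqref{eq:return-injective} are exactly the tools needed to extract that equality of pure results from an equality of non-deterministic computations. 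Everything else — unfolding $\Varid{aggregate}$ and pushing $(\mathrel{\raisebox{0.5\depth}{\scaleobj{0.5}{\langle}} \scaleobj{0.8}{\$} \raisebox{0.5\depth}{\scaleobj{0.5}{\rangle}}})$ through $\Varid{return}$ and $(\mathbin{\talloblong})$ — is bookkeeping of the kind already seen in the proofs of Lemmas~\ref{lemma:fold-insert} and~\ref{lemma:fold-perm}.
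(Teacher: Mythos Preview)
Your proposal is correct and follows essentially the same route as the paper: unfold \ensuremath{\Varid{aggregate}}, push \ensuremath{\Varid{map}\;(\Varid{foldr}\;(\otimes)\;\Varid{z})} past \ensuremath{\Varid{perm}} via Lemma~\ref{lemma:shuffle-map}, distribute \ensuremath{(\mathrel{\raisebox{0.5\depth}{\scaleobj{0.5}{\langle}} \scaleobj{0.8}{\$} \raisebox{0.5\depth}{\scaleobj{0.5}{\rangle}}})} over the decomposition \ensuremath{\Varid{return}\;\Varid{wss}\mathbin{\talloblong}\Varid{m}}, and then invoke \eqref{eq:mplus-return} and \eqref{eq:return-injective}. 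The only difference is organisational: you abstract a single auxiliary claim and instantiate it twice (once with \ensuremath{\Varid{wss}\mathbin{:=}\Varid{yss}} from the hypothesis, once with \ensuremath{\Varid{wss}\mathbin{:=}\Varid{xss}} from Lemma~\ref{lemma:perm-id}), whereas the paper carries out the calculation for \ensuremath{\Varid{yss}} and then remarks in a sentence that Lemma~\ref{lemma:perm-id} gives the \ensuremath{\Varid{xss}} case as well.
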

\begin{proof}
We reason:
\begin{hscode}\SaveRestoreHook
\column{B}{@{}>{\hspre}l<{\hspost}@{}}%
\column{4}{@{}>{\hspre}l<{\hspost}@{}}%
\column{7}{@{}>{\hspre}l<{\hspost}@{}}%
\column{9}{@{}>{\hspre}l<{\hspost}@{}}%
\column{E}{@{}>{\hspre}l<{\hspost}@{}}%
\>[4]{}\Varid{return}\mathbin{\cdot}\Varid{foldr}\;(\otimes)\;\Varid{z}\mathbin{\cdot}\Varid{concat}\mathbin{\$}\Varid{xss}{}\<[E]%
\\
\>[B]{}\mathbin{=}{}\<[9]%
\>[9]{}\mbox{\commentbegin  assumption  \commentend}{}\<[E]%
\\
\>[B]{}\hsindent{4}{}\<[4]%
\>[4]{}\Varid{aggregate}\;\Varid{z}\;(\otimes)\;(\oplus)\mathbin{\$}\Varid{xss}{}\<[E]%
\\
\>[B]{}\mathbin{=}{}\<[9]%
\>[9]{}\mbox{\commentbegin  definition of \ensuremath{\Varid{aggregate}}, Lemma~\ref{lemma:shuffle-map}, and \eqref{eq:comp-mcomp-mcomp}  \commentend}{}\<[E]%
\\
\>[B]{}\hsindent{4}{}\<[4]%
\>[4]{}(\Varid{foldr}\;(\oplus)\;\Varid{z}\mathbin{\cdot}\Varid{map}\;(\Varid{foldr}\;(\otimes)\;\Varid{z}))\mathrel{\raisebox{0.5\depth}{\scaleobj{0.5}{\langle}} \scaleobj{0.8}{\$} \raisebox{0.5\depth}{\scaleobj{0.5}{\rangle}}}\Varid{perm}\;\Varid{xss}{}\<[E]%
\\
\>[B]{}\mathbin{=}{}\<[9]%
\>[9]{}\mbox{\commentbegin  assumption: \ensuremath{\Varid{perm}\;\Varid{xss}\mathrel{=}\Varid{return}\;\Varid{yss}\mathbin{\talloblong}\Varid{m}}, by \eqref{eq:ap-mplus} and \eqref{eq:ap-return}  \commentend}{}\<[E]%
\\
\>[B]{}\hsindent{4}{}\<[4]%
\>[4]{}(\Varid{return}\mathbin{\cdot}\Varid{foldr}\;(\oplus)\;\Varid{z}\mathbin{\cdot}\Varid{map}\;(\Varid{foldr}\;(\otimes)\;\Varid{z})\mathbin{\$}\Varid{yss})\mathbin{\talloblong}{}\<[E]%
\\
\>[4]{}\hsindent{3}{}\<[7]%
\>[7]{}((\Varid{foldr}\;(\oplus)\;\Varid{z}\mathbin{\cdot}\Varid{map}\;(\Varid{foldr}\;(\otimes)\;\Varid{z}))\mathrel{\raisebox{0.5\depth}{\scaleobj{0.5}{\langle}} \scaleobj{0.8}{\$} \raisebox{0.5\depth}{\scaleobj{0.5}{\rangle}}}\Varid{m})~~.{}\<[E]%
\ColumnHook
\end{hscode}\resethooks
Thus by \eqref{eq:mplus-return} and \eqref{eq:return-injective},
\ensuremath{\Varid{foldr}\;(\otimes)\;\Varid{z}\mathbin{\cdot}\Varid{concat}\mathbin{\$}\Varid{xss}} equals
\ensuremath{\Varid{foldr}\;(\oplus)\;\Varid{z}\mathbin{\cdot}\Varid{map}\;(\Varid{foldr}\;(\otimes)\;\Varid{z})\mathbin{\$}\Varid{yss}}.
The former also equals
\ensuremath{\Varid{foldr}\;(\oplus)\;\Varid{z}\mathbin{\cdot}\Varid{map}\;(\Varid{foldr}\;(\otimes)\;\Varid{z})\mathbin{\$}\Varid{xss}} because,
by Lemma~\ref{lemma:perm-id},
\ensuremath{\Varid{perm}\;\Varid{xss}\mathrel{=}\Varid{return}\;\Varid{xss}\mathbin{\talloblong}\Varid{m}} for some \ensuremath{\Varid{m}}.
\end{proof}

Based on Lemma~\ref{lemma:aggregate-det-reasoning}, the following theorem explicitly states that \ensuremath{(\oplus)} should be associative, commutative, and has \ensuremath{\Varid{z}} as its identity in restricted domain.
\begin{theorem}
\label{thm:aggregate-cmonoid}
If \ensuremath{\Varid{aggregate}\;\Varid{z}\;(\otimes)\;(\oplus)\mathrel{=}\Varid{return}\mathbin{\cdot}\Varid{foldr}\;(\otimes)\;\Varid{z}\mathbin{\cdot}\Varid{concat}},
we have that \ensuremath{(\oplus)}, when restricted to values in \ensuremath{\Varid{img}\;(\Varid{foldr}\;(\otimes)\;\Varid{z})}, is associative, commutative, and has \ensuremath{\Varid{z}} as its identity.
\end{theorem}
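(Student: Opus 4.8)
The plan is to route everything through Lemma~\ref{lemma:aggregate-det-reasoning}, which already spends \eqref{eq:mplus-return} and \eqref{eq:return-injective} internally so that I need not touch them again. Abbreviate $h = \Varid{foldr}\;(\otimes)\;\Varid{z}$ and $g = \Varid{foldr}\;(\oplus)\;\Varid{z}$, and fix arbitrary $a,b,c \in \Varid{img}\;h$ with witnesses $a = h\;\Varid{as}$, $b = h\;\Varid{bs}$, $c = h\;\Varid{cs}$. I will use only two consequences of the hypothesis. Instantiating Lemma~\ref{lemma:aggregate-det-reasoning} with $\Varid{yss} := \Varid{xss}$---legitimate for every $\Varid{xss}$ by Lemma~\ref{lemma:perm-id}---gives the \emph{pure} identity $h\;(\Varid{concat}\;\Varid{xss}) = g\;(\Varid{map}\;h\;\Varid{xss})$ for every RDD $\Varid{xss}$; and whenever $\Varid{yss}$ is a permutation of $\Varid{xss}$ that $\Varid{perm}$ can actually deliver (i.e.\ $\Varid{perm}\;\Varid{xss} = \Varid{return}\;\Varid{yss}\mathbin{\talloblong}\Varid{m}$ for some $\Varid{m}$), the same lemma gives $g\;(\Varid{map}\;h\;\Varid{xss}) = g\;(\Varid{map}\;h\;\Varid{yss})$.

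For the right identity, I would instantiate the pure identity at the one-partition RDD $[\Varid{as}]$: since $\Varid{concat}\;[\Varid{as}] = \Varid{as}$ and $\Varid{map}\;h\;[\Varid{as}] = [a]$, it reads $a = g\;[a]$, and unfolding $g$ gives $a = a\mathbin{\oplus}\Varid{z}$. For commutativity, unfolding $\Varid{perm}$ on a two-element list yields $\Varid{perm}\;[\Varid{as},\Varid{bs}] = \Varid{return}\;[\Varid{as},\Varid{bs}]\mathbin{\talloblong}\Varid{return}\;[\Varid{bs},\Varid{as}]$, so by commutativity of $(\talloblong)$ the list $[\Varid{bs},\Varid{as}]$ is one that $\Varid{perm}$ delivers; the second consequence then gives $g\;[a,b] = g\;[b,a]$, i.e.\ $a\mathbin{\oplus}(b\mathbin{\oplus}\Varid{z}) = b\mathbin{\oplus}(a\mathbin{\oplus}\Varid{z})$, which collapses to $a\mathbin{\oplus}b = b\mathbin{\oplus}a$ by the right identity. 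Left identity is then immediate, since $\Varid{z} = h\;[\mskip1.5mu\mskip1.5mu]\in\Varid{img}\;h$, so $\Varid{z}\mathbin{\oplus}a = a\mathbin{\oplus}\Varid{z} = a$; hence $\Varid{z}$ is a two-sided identity on $\Varid{img}\;h$.

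The one step that is not a mechanical unfold is associativity, because feeding a single RDD through $g\mathbin{\cdot}\Varid{map}\;h$ can only produce right-nested $(\oplus)$-expressions, never $(a\mathbin{\oplus}b)\mathbin{\oplus}c$. The trick is to exploit that the left-hand side $h\;(\Varid{concat}\;\Varid{xss})$ of the pure identity depends on $\Varid{xss}$ solely through $\Varid{concat}\;\Varid{xss}$: apply that identity to the two RDDs $[\Varid{as},\Varid{bs},\Varid{cs}]$ and $[\Varid{as}\mathbin{+\!\!\!\!\!+}\Varid{bs},\,\Varid{cs}]$, whose concatenations both equal $\Varid{as}\mathbin{+\!\!\!\!\!+}\Varid{bs}\mathbin{+\!\!\!\!\!+}\Varid{cs}$, to obtain $g\;[a,b,c] = g\;[\,h\;(\Varid{as}\mathbin{+\!\!\!\!\!+}\Varid{bs}),\,c\,]$. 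A further application of the pure identity at $[\Varid{as},\Varid{bs}]$ identifies $h\;(\Varid{as}\mathbin{+\!\!\!\!\!+}\Varid{bs}) = g\;[a,b] = a\mathbin{\oplus}b$, and unfolding $g$ on both sides---absorbing the trailing $\mathbin{\oplus}\Varid{z}$ by the right identity each time---turns the equation into $a\mathbin{\oplus}(b\mathbin{\oplus}c) = (a\mathbin{\oplus}b)\mathbin{\oplus}c$. Notably this part uses neither commutativity nor the permutation consequence.

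I expect the main obstacle to be organisational rather than conceptual: cleanly discharging the ``this permutation is one $\Varid{perm}$ can deliver'' side conditions (only the trivial permutation and a single two-element transposition are ever needed, each settled by unfolding $\Varid{perm}$ together with idempotence/commutativity of $(\talloblong)$), and keeping the $\mathbin{\oplus}\Varid{z}$ tails under control so that every intermediate equation really reduces to the intended two-sided statement on $\Varid{img}\;h$. As an alternative packaging of the associativity-and-identity half, one may observe that the pure identity is precisely the right-hand side of Lemma~\ref{lemma:hom-concat} for $h = \Varid{foldr}\;(\otimes)\;\Varid{z}$, so $\Varid{foldr}\;(\otimes)\;\Varid{z} = \Varid{hom}\;(\oplus)\;(\Varid{foldr}\;(\otimes)\;\Varid{z}\mathbin{\cdot}\Varid{wrap})\;\Varid{z}$ and the homomorphism equations force $(\oplus)$ to be associative with $\Varid{z}$ as identity on $\Varid{img}\;h$; commutativity still requires the permutation argument above.
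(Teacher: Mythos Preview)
Your proposal is correct. The identity and commutativity parts are essentially the paper's argument (the paper obtains the left identity by instantiating Lemma~\ref{lemma:aggregate-det-reasoning} at $[[\,],\Varid{as}]$ rather than via commutativity, but this is cosmetic). Your associativity argument, however, is genuinely different and in fact cleaner than the paper's.

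The paper proves associativity by permuting $[\Varid{xs},\Varid{ys},\Varid{ws}]$ to $[\Varid{ws},\Varid{xs},\Varid{ys}]$ (again via Lemma~\ref{lemma:aggregate-det-reasoning}), unfolding both sides to $x\mathbin{\oplus}(y\mathbin{\oplus}w)$ and $w\mathbin{\oplus}(x\mathbin{\oplus}y)$, and then invoking the just-established commutativity to swap $w$ and $x\mathbin{\oplus}y$; that last swap quietly relies on $x\mathbin{\oplus}y\in\Varid{img}\;h$, which holds but is not made explicit. Your route---exploiting that $\Varid{concat}\;[\Varid{as},\Varid{bs},\Varid{cs}]=\Varid{concat}\;[\Varid{as}\mathbin{+\!\!\!\!\!+}\Varid{bs},\Varid{cs}]$ and applying the pure identity at both RDDs---reaches $(a\mathbin{\oplus}b)\mathbin{\oplus}c$ directly, uses neither commutativity nor any nontrivial permutation, and makes the closure of $\Varid{img}\;h$ under $\mathbin{\oplus}$ (via $h(\Varid{as}\mathbin{+\!\!\!\!\!+}\Varid{bs})=a\mathbin{\oplus}b$) an explicit intermediate step rather than a hidden obligation. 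Your alternative packaging through Lemma~\ref{lemma:hom-concat} is also sound and is in effect what the paper does later as Theorem~\ref{thm:aggregate-hom}.
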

\begin{proof}
In the discussion below, let \ensuremath{\Varid{x}},\ensuremath{\Varid{y}}, and \ensuremath{\Varid{w}} be in \ensuremath{\Varid{img}\;(\Varid{foldr}\;(\otimes)\;\Varid{z})}. That is, there exists \ensuremath{\Varid{xs}}, \ensuremath{\Varid{ys}}, and \ensuremath{\Varid{ws}} such that \ensuremath{\Varid{x}\mathrel{=}\Varid{foldr}\;(\otimes)\;\Varid{z}\;\Varid{xs}},
\ensuremath{\Varid{y}\mathrel{=}\Varid{foldr}\;(\otimes)\;\Varid{z}\;\Varid{ys}}, and \ensuremath{\Varid{w}\mathrel{=}\Varid{foldr}\;(\otimes)\;\Varid{z}\;\Varid{ws}}.

{\sc Identity}. We reason:
\begin{hscode}\SaveRestoreHook
\column{B}{@{}>{\hspre}l<{\hspost}@{}}%
\column{7}{@{}>{\hspre}l<{\hspost}@{}}%
\column{9}{@{}>{\hspre}l<{\hspost}@{}}%
\column{E}{@{}>{\hspre}l<{\hspost}@{}}%
\>[7]{}\Varid{y}{}\<[E]%
\\
\>[B]{}\mathbin{=}{}\<[7]%
\>[7]{}\Varid{foldr}\;(\otimes)\;\Varid{z}\;(\Varid{concat}\;[\mskip1.5mu \Varid{xs}\mskip1.5mu]){}\<[E]%
\\
\>[B]{}\mathbin{=}{}\<[9]%
\>[9]{}\mbox{\commentbegin  \ensuremath{\Varid{perm}\;[\mskip1.5mu \Varid{xs}\mskip1.5mu]\mathrel{=}\Varid{return}\;[\mskip1.5mu \Varid{xs}\mskip1.5mu]\mathbin{\talloblong}\emptyset}, Lemma \ref{lemma:aggregate-det-reasoning}  \commentend}{}\<[E]%
\\
\>[B]{}\hsindent{7}{}\<[7]%
\>[7]{}\Varid{foldr}\;(\oplus)\;\Varid{z}\;(\Varid{map}\;(\Varid{foldr}\;(\otimes)\;\Varid{z})\;[\mskip1.5mu \Varid{xs}\mskip1.5mu]){}\<[E]%
\\
\>[B]{}\mathbin{=}{}\<[7]%
\>[7]{}\Varid{y}\mathbin{\oplus}\Varid{z}~~.{}\<[E]%
\ColumnHook
\end{hscode}\resethooks
Thus \ensuremath{\Varid{z}} is a right identity of \ensuremath{(\oplus)}. Similarly,
\begin{hscode}\SaveRestoreHook
\column{B}{@{}>{\hspre}l<{\hspost}@{}}%
\column{4}{@{}>{\hspre}l<{\hspost}@{}}%
\column{7}{@{}>{\hspre}l<{\hspost}@{}}%
\column{8}{@{}>{\hspre}l<{\hspost}@{}}%
\column{9}{@{}>{\hspre}l<{\hspost}@{}}%
\column{E}{@{}>{\hspre}l<{\hspost}@{}}%
\>[7]{}\Varid{y}{}\<[E]%
\\
\>[B]{}\mathbin{=}{}\<[7]%
\>[7]{}\Varid{foldr}\;(\otimes)\;\Varid{z}\;(\Varid{concat}\;[\mskip1.5mu [\mskip1.5mu \mskip1.5mu],\Varid{xs}\mskip1.5mu]){}\<[E]%
\\
\>[B]{}\mathbin{=}{}\<[9]%
\>[9]{}\mbox{\commentbegin  \ensuremath{\Varid{perm}\;[\mskip1.5mu [\mskip1.5mu \mskip1.5mu],\Varid{xs}\mskip1.5mu]\mathrel{=}\Varid{return}\;[\mskip1.5mu [\mskip1.5mu \mskip1.5mu],\Varid{xs}\mskip1.5mu]\mathbin{\talloblong}\Varid{m}}, Lemma \ref{lemma:aggregate-det-reasoning}  \commentend}{}\<[E]%
\\
\>[B]{}\hsindent{4}{}\<[4]%
\>[4]{}\Varid{foldr}\;(\oplus)\;\Varid{z}\;(\Varid{map}\;(\Varid{foldr}\;(\otimes)\;\Varid{z})\;[\mskip1.5mu [\mskip1.5mu \mskip1.5mu],\Varid{xs}\mskip1.5mu]){}\<[E]%
\\
\>[B]{}\mathbin{=}{}\<[7]%
\>[7]{}\Varid{z}\mathbin{\oplus}(\Varid{y}\mathbin{\oplus}\Varid{z}){}\<[E]%
\\
\>[B]{}\mathbin{=}{}\<[8]%
\>[8]{}\mbox{\commentbegin  \ensuremath{\Varid{z}} is a right identity of \ensuremath{(\oplus)}  \commentend}{}\<[E]%
\\
\>[B]{}\hsindent{7}{}\<[7]%
\>[7]{}\Varid{z}\mathbin{\oplus}\Varid{y}~~.{}\<[E]%
\ColumnHook
\end{hscode}\resethooks
Thus \ensuremath{\Varid{z}} is also a left identity of \ensuremath{(\oplus)}.

{\sc Commutativity}. We reason:
\begin{hscode}\SaveRestoreHook
\column{B}{@{}>{\hspre}l<{\hspost}@{}}%
\column{7}{@{}>{\hspre}l<{\hspost}@{}}%
\column{9}{@{}>{\hspre}l<{\hspost}@{}}%
\column{E}{@{}>{\hspre}l<{\hspost}@{}}%
\>[7]{}\Varid{x}\mathbin{\oplus}\Varid{y}{}\<[E]%
\\
\>[B]{}\mathbin{=}{}\<[9]%
\>[9]{}\mbox{\commentbegin  \ensuremath{\Varid{z}} is a right identity  \commentend}{}\<[E]%
\\
\>[B]{}\hsindent{7}{}\<[7]%
\>[7]{}\Varid{x}\mathbin{\oplus}(\Varid{y}\mathbin{\oplus}\Varid{z}){}\<[E]%
\\
\>[B]{}\mathbin{=}{}\<[7]%
\>[7]{}\Varid{foldr}\;(\oplus)\;\Varid{z}\;(\Varid{map}\;(\Varid{foldr}\;(\otimes)\;\Varid{z})\;[\mskip1.5mu \Varid{xs},\Varid{ys}\mskip1.5mu]){}\<[E]%
\\
\>[B]{}\mathbin{=}{}\<[9]%
\>[9]{}\mbox{\commentbegin  \ensuremath{\Varid{perm}\;[\mskip1.5mu \Varid{xs},\Varid{ys}\mskip1.5mu]\mathrel{=}\Varid{return}\;[\mskip1.5mu \Varid{ys},\Varid{xs}\mskip1.5mu]\mathbin{\talloblong}\Varid{m}}, Lemma \ref{lemma:aggregate-det-reasoning}  \commentend}{}\<[E]%
\\
\>[B]{}\hsindent{7}{}\<[7]%
\>[7]{}\Varid{foldr}\;(\oplus)\;\Varid{z}\;(\Varid{map}\;(\Varid{foldr}\;(\otimes)\;\Varid{z})\;[\mskip1.5mu \Varid{ys},\Varid{xs}\mskip1.5mu]){}\<[E]%
\\
\>[B]{}\mathbin{=}{}\<[7]%
\>[7]{}\Varid{y}\mathbin{\oplus}(\Varid{x}\mathbin{\oplus}\Varid{z}){}\<[E]%
\\
\>[B]{}\mathbin{=}{}\<[9]%
\>[9]{}\mbox{\commentbegin  \ensuremath{\Varid{z}} is a right identity  \commentend}{}\<[E]%
\\
\>[B]{}\hsindent{7}{}\<[7]%
\>[7]{}\Varid{y}\mathbin{\oplus}\Varid{x}~~.{}\<[E]%
\ColumnHook
\end{hscode}\resethooks

{\sc Associativity}. We reason:
\begin{hscode}\SaveRestoreHook
\column{B}{@{}>{\hspre}l<{\hspost}@{}}%
\column{7}{@{}>{\hspre}l<{\hspost}@{}}%
\column{9}{@{}>{\hspre}l<{\hspost}@{}}%
\column{E}{@{}>{\hspre}l<{\hspost}@{}}%
\>[7]{}\Varid{x}\mathbin{\oplus}(\Varid{y}\mathbin{\oplus}\Varid{w}){}\<[E]%
\\
\>[B]{}\mathbin{=}{}\<[9]%
\>[9]{}\mbox{\commentbegin  \ensuremath{\Varid{z}} is a right identity  \commentend}{}\<[E]%
\\
\>[B]{}\hsindent{7}{}\<[7]%
\>[7]{}\Varid{x}\mathbin{\oplus}(\Varid{y}\mathbin{\oplus}(\Varid{w}\mathbin{\oplus}\Varid{z})){}\<[E]%
\\
\>[B]{}\mathbin{=}{}\<[7]%
\>[7]{}\Varid{foldr}\;(\oplus)\;\Varid{z}\;(\Varid{map}\;(\Varid{foldr}\;(\otimes)\;\Varid{z})\;[\mskip1.5mu \Varid{xs},\Varid{ys},\Varid{ws}\mskip1.5mu]){}\<[E]%
\\
\>[B]{}\mathbin{=}{}\<[9]%
\>[9]{}\mbox{\commentbegin  \ensuremath{(\oplus)} commutative  \commentend}{}\<[E]%
\\
\>[B]{}\hsindent{7}{}\<[7]%
\>[7]{}\Varid{foldr}\;(\oplus)\;\Varid{z}\;(\Varid{map}\;(\Varid{foldr}\;(\otimes)\;\Varid{z})\;[\mskip1.5mu \Varid{ws},\Varid{xs},\Varid{ys}\mskip1.5mu]){}\<[E]%
\\
\>[B]{}\mathbin{=}{}\<[7]%
\>[7]{}\Varid{w}\mathbin{\oplus}(\Varid{x}\mathbin{\oplus}(\Varid{y}\mathbin{\oplus}\Varid{z})){}\<[E]%
\\
\>[B]{}\mathbin{=}{}\<[9]%
\>[9]{}\mbox{\commentbegin  \ensuremath{\Varid{z}} is a right identity  \commentend}{}\<[E]%
\\
\>[B]{}\hsindent{7}{}\<[7]%
\>[7]{}\Varid{w}\mathbin{\oplus}(\Varid{x}\mathbin{\oplus}\Varid{y}){}\<[E]%
\\
\>[B]{}\mathbin{=}{}\<[9]%
\>[9]{}\mbox{\commentbegin  \ensuremath{(\oplus)} commutative  \commentend}{}\<[E]%
\\
\>[B]{}\hsindent{7}{}\<[7]%
\>[7]{}(\Varid{x}\mathbin{\oplus}\Varid{y})\mathbin{\oplus}\Varid{w}~~.{}\<[E]%
\ColumnHook
\end{hscode}\resethooks
\end{proof}

\begin{theorem}
\label{thm:aggregate-hom}
If \ensuremath{\Varid{aggregate}\;\Varid{z}\;(\otimes)\;(\oplus)\mathrel{=}\Varid{return}\mathbin{\cdot}\Varid{foldr}\;(\otimes)\;\Varid{z}\mathbin{\cdot}\Varid{concat}},
we have \ensuremath{\Varid{foldr}\;(\otimes)\;\Varid{z}\mathrel{=}\Varid{hom}\;(\oplus)\;(\mathbin{\otimes}\Varid{z})\;\Varid{z}}.
\end{theorem}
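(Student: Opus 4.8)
The plan is to obtain the homomorphism property from the two facts already established about a deterministic \ensuremath{\Varid{aggregate}}, both of which are available under the hypothesis of this theorem: Theorem~\ref{thm:aggregate-cmonoid}, which says that \ensuremath{(\oplus)} restricted to \ensuremath{\Varid{img}\;(\Varid{foldr}\;(\otimes)\;\Varid{z})} is associative and commutative with \ensuremath{\Varid{z}} as identity; and Lemma~\ref{lemma:aggregate-det-reasoning}, which lets us equate \ensuremath{\Varid{foldr}\;(\otimes)\;\Varid{z}\;(\Varid{concat}\;\Varid{xss})} with \ensuremath{\Varid{foldr}\;(\oplus)\;\Varid{z}\;(\Varid{map}\;(\Varid{foldr}\;(\otimes)\;\Varid{z})\;\Varid{xss})} for any \ensuremath{\Varid{xss}} we like (taking \ensuremath{\Varid{yss}\mathbin{:=}\Varid{xss}}, which is legitimate by Lemma~\ref{lemma:perm-id}, so that only the equality of the first two lines of that lemma is used). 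The goal \ensuremath{\Varid{foldr}\;(\otimes)\;\Varid{z}\mathrel{=}\Varid{hom}\;(\oplus)\;(\mathbin{\otimes}\Varid{z})\;\Varid{z}} will then follow from Lemma~\ref{lemma:foldr-hom}: its ``associative-with-unit on \ensuremath{\Varid{img}\;(\Varid{foldr}\;(\otimes)\;\Varid{z})}'' premise is exactly what Theorem~\ref{thm:aggregate-cmonoid} supplies, so all that remains is the ``if'' direction of that equivalence, namely the distributivity law \ensuremath{\Varid{x}\mathbin{\otimes}(\Varid{y}\mathbin{\oplus}\Varid{w})\mathrel{=}(\Varid{x}\mathbin{\otimes}\Varid{y})\mathbin{\oplus}\Varid{w}} for all \ensuremath{\Varid{x}\mathbin{::}\Varid{a}} and \ensuremath{\Varid{y},\Varid{w}\in\Varid{img}\;(\Varid{foldr}\;(\otimes)\;\Varid{z})}.

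To establish that law, fix such \ensuremath{\Varid{x},\Varid{y},\Varid{w}} and choose lists \ensuremath{\Varid{ys},\Varid{ws}} with \ensuremath{\Varid{y}\mathrel{=}\Varid{foldr}\;(\otimes)\;\Varid{z}\;\Varid{ys}} and \ensuremath{\Varid{w}\mathrel{=}\Varid{foldr}\;(\otimes)\;\Varid{z}\;\Varid{ws}}. I would apply Lemma~\ref{lemma:aggregate-det-reasoning} twice. First, with \ensuremath{\Varid{xss}\mathbin{:=}[\mskip1.5mu \Varid{ys},\Varid{ws}\mskip1.5mu]}: the left-hand side is \ensuremath{\Varid{foldr}\;(\otimes)\;\Varid{z}\;(\Varid{ys}\plus\Varid{ws})} and the right-hand side reduces to \ensuremath{\Varid{y}\mathbin{\oplus}(\Varid{w}\mathbin{\oplus}\Varid{z})}, which is \ensuremath{\Varid{y}\mathbin{\oplus}\Varid{w}} since \ensuremath{\Varid{w}} lies in the image and \ensuremath{\Varid{z}} is a right identity there; hence \ensuremath{\Varid{foldr}\;(\otimes)\;\Varid{z}\;(\Varid{ys}\plus\Varid{ws})\mathrel{=}\Varid{y}\mathbin{\oplus}\Varid{w}}. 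Second, with \ensuremath{\Varid{xss}\mathbin{:=}[\mskip1.5mu \Varid{x}\mathbin{:}\Varid{ys},\Varid{ws}\mskip1.5mu]}: since \ensuremath{\Varid{concat}\;\Varid{xss}\mathrel{=}\Varid{x}\mathbin{:}(\Varid{ys}\plus\Varid{ws})}, the left-hand side is \ensuremath{\Varid{x}\mathbin{\otimes}\Varid{foldr}\;(\otimes)\;\Varid{z}\;(\Varid{ys}\plus\Varid{ws})}, which by the first step equals \ensuremath{\Varid{x}\mathbin{\otimes}(\Varid{y}\mathbin{\oplus}\Varid{w})}; the right-hand side is \ensuremath{(\Varid{x}\mathbin{\otimes}\Varid{y})\mathbin{\oplus}(\Varid{w}\mathbin{\oplus}\Varid{z})}, and since again \ensuremath{\Varid{w}} is in the image this is \ensuremath{(\Varid{x}\mathbin{\otimes}\Varid{y})\mathbin{\oplus}\Varid{w}}. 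Equating the two sides yields exactly \ensuremath{\Varid{x}\mathbin{\otimes}(\Varid{y}\mathbin{\oplus}\Varid{w})\mathrel{=}(\Varid{x}\mathbin{\otimes}\Varid{y})\mathbin{\oplus}\Varid{w}}, and Lemma~\ref{lemma:foldr-hom} then delivers the theorem.

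There is no serious obstacle; the argument is short once the right inputs are picked. The only point needing care is the domain bookkeeping: each value to which a monoid law of \ensuremath{(\oplus)} is applied --- in particular whenever a trailing \ensuremath{\mathbin{\oplus}\Varid{z}} is absorbed --- must be exhibited as an element of \ensuremath{\Varid{img}\;(\Varid{foldr}\;(\otimes)\;\Varid{z})}, which is why it is convenient to keep the witnessing lists \ensuremath{\Varid{ys}}, \ensuremath{\Varid{ws}}, and \ensuremath{\Varid{x}\mathbin{:}\Varid{ys}} explicit rather than working directly with their folded values. The one mildly creative step is recognising that feeding the lists of lists \ensuremath{[\mskip1.5mu \Varid{ys},\Varid{ws}\mskip1.5mu]} and \ensuremath{[\mskip1.5mu \Varid{x}\mathbin{:}\Varid{ys},\Varid{ws}\mskip1.5mu]} into Lemma~\ref{lemma:aggregate-det-reasoning} is precisely what converts the \ensuremath{\Varid{foldr}\;(\otimes)\;\Varid{z}\mathbin{\cdot}\Varid{concat}} reading (which lets the leading \ensuremath{\Varid{x}} be pulled out through \ensuremath{(\otimes)}) into the chained-\ensuremath{(\oplus)} reading, producing the distributivity law essentially for free.
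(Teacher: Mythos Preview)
Your argument is correct, but it takes a detour compared with the paper. The paper proves the theorem by directly checking the three defining equations of a list homomorphism for \ensuremath{\Varid{foldr}\;(\otimes)\;\Varid{z}}: the empty and singleton cases are immediate, and for concatenation a single application of Lemma~\ref{lemma:aggregate-det-reasoning} with \ensuremath{\Varid{xss}\mathbin{:=}[\mskip1.5mu \Varid{xs},\Varid{ys}\mskip1.5mu]} gives \ensuremath{\Varid{foldr}\;(\otimes)\;\Varid{z}\;(\Varid{xs}\mathbin{+\!\!\!\!\!+}\Varid{ys})\mathrel{=}\Varid{foldr}\;(\otimes)\;\Varid{z}\;\Varid{xs}\mathbin{\oplus}(\Varid{foldr}\;(\otimes)\;\Varid{z}\;\Varid{ys}\mathbin{\oplus}\Varid{z})}, after which Theorem~\ref{thm:aggregate-cmonoid} absorbs the trailing \ensuremath{\Varid{z}}. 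You instead route through Lemma~\ref{lemma:foldr-hom}, reducing the goal to the distributivity law \ensuremath{\Varid{x}\mathbin{\otimes}(\Varid{y}\mathbin{\oplus}\Varid{w})\mathrel{=}(\Varid{x}\mathbin{\otimes}\Varid{y})\mathbin{\oplus}\Varid{w}}, which you then establish with two applications of Lemma~\ref{lemma:aggregate-det-reasoning} (at \ensuremath{[\mskip1.5mu \Varid{ys},\Varid{ws}\mskip1.5mu]} and \ensuremath{[\mskip1.5mu \Varid{x}\mathbin{:}\Varid{ys},\Varid{ws}\mskip1.5mu]}). The paper's direct route is shorter and avoids invoking Lemma~\ref{lemma:foldr-hom} at all; your route costs an extra application of the determinism lemma but yields the distributivity law as a by-product, which is a pleasant bonus even if not needed here.
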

\begin{proof}
Apparently \ensuremath{\Varid{foldr}\;(\otimes)\;\Varid{z}\;[\mskip1.5mu \mskip1.5mu]\mathrel{=}\Varid{z}} and
\ensuremath{\Varid{foldr}\;(\otimes)\;\Varid{z}\;[\mskip1.5mu \Varid{x}\mskip1.5mu]\mathrel{=}\Varid{x}\mathbin{\otimes}\Varid{z}}.
We are left with proving the case for concatenation.
\begin{hscode}\SaveRestoreHook
\column{B}{@{}>{\hspre}l<{\hspost}@{}}%
\column{7}{@{}>{\hspre}l<{\hspost}@{}}%
\column{9}{@{}>{\hspre}l<{\hspost}@{}}%
\column{E}{@{}>{\hspre}l<{\hspost}@{}}%
\>[7]{}\Varid{foldr}\;(\otimes)\;\Varid{z}\;(\Varid{xs}\mathbin{+\!\!\!\!\!+}\Varid{ys}){}\<[E]%
\\
\>[B]{}\mathbin{=}{}\<[7]%
\>[7]{}\Varid{foldr}\;(\otimes)\;\Varid{z}\;(\Varid{concat}\;[\mskip1.5mu \Varid{xs},\Varid{ys}\mskip1.5mu]){}\<[E]%
\\
\>[B]{}\mathbin{=}{}\<[9]%
\>[9]{}\mbox{\commentbegin  Lemma \ref{lemma:aggregate-det-reasoning}  \commentend}{}\<[E]%
\\
\>[B]{}\hsindent{7}{}\<[7]%
\>[7]{}\Varid{foldr}\;(\oplus)\;\Varid{z}\;(\Varid{map}\;(\Varid{foldr}\;(\otimes)\;\Varid{z})\;[\mskip1.5mu \Varid{xs},\Varid{ys}\mskip1.5mu]){}\<[E]%
\\
\>[B]{}\mathbin{=}{}\<[7]%
\>[7]{}\Varid{foldr}\;(\otimes)\;\Varid{z}\;\Varid{xs}\mathbin{\oplus}(\Varid{foldr}\;(\otimes)\;\Varid{z}\;\Varid{ys}\mathbin{\oplus}\Varid{z}){}\<[E]%
\\
\>[B]{}\mathbin{=}{}\<[9]%
\>[9]{}\mbox{\commentbegin  Theorem \ref{thm:aggregate-cmonoid}, \ensuremath{\Varid{z}} is identity  \commentend}{}\<[E]%
\\
\>[B]{}\hsindent{7}{}\<[7]%
\>[7]{}\Varid{foldr}\;(\otimes)\;\Varid{z}\;\Varid{xs}\mathbin{\oplus}\Varid{foldr}\;(\otimes)\;\Varid{z}\;\Varid{ys}~~.{}\<[E]%
\ColumnHook
\end{hscode}\resethooks
\end{proof}

\begin{corollary} Given \ensuremath{(\otimes)\mathbin{::}\Varid{a}\to \Varid{b}\to \Varid{b}} and \ensuremath{(\oplus)\mathbin{::}\Varid{b}\to \Varid{b}\to \Varid{b}}.
\ensuremath{\Varid{aggregate}\;\Varid{z}\;(\otimes)\;(\oplus)\mathrel{=}\Varid{return}\mathbin{\cdot}\Varid{foldr}\;(\otimes)\;\Varid{z}\mathbin{\cdot}\Varid{concat}} if and only
if \ensuremath{(\Varid{img}\;(\Varid{foldr}\;(\otimes)\;\Varid{z}),(\oplus),\Varid{z})} forms a commutative monoid, and
that \ensuremath{\Varid{foldr}\;(\otimes)\;\Varid{z}\mathrel{=}\Varid{hom}\;(\oplus)\;(\mathbin{\otimes}\Varid{z})\;\Varid{z}}.
\end{corollary}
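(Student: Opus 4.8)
The plan is to obtain the corollary by stitching together the facts about \ensuremath{\Varid{aggregate}} already established, plus one small extension of Lemma~\ref{lemma:fold-perm}. Write \ensuremath{\Varid{h}\mathrel{=}\Varid{foldr}\;(\otimes)\;\Varid{z}} throughout, and recall from the remark after Lemma~\ref{lemma:foldr-hom} that \ensuremath{\Varid{h}\mathbin{\cdot}\Varid{wrap}\mathrel{=}(\mathbin{\otimes}\Varid{z})}. For the ``only if'' direction I would just invoke Theorem~\ref{thm:aggregate-cmonoid}, which already states that \ensuremath{(\oplus)} restricted to \ensuremath{\Varid{img}\;\Varid{h}} is associative, commutative, and has \ensuremath{\Varid{z}} as identity, and Theorem~\ref{thm:aggregate-hom}, which gives \ensuremath{\Varid{h}\mathrel{=}\Varid{hom}\;(\oplus)\;(\mathbin{\otimes}\Varid{z})\;\Varid{z}}. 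The only clause not spelled out by Theorem~\ref{thm:aggregate-cmonoid} is that \ensuremath{\Varid{img}\;\Varid{h}} is closed under \ensuremath{(\oplus)}, but that is exactly the concatenation clause \ensuremath{\Varid{h}\;(\Varid{xs}\mathbin{+\!\!\!\!\!+}\Varid{ys})\mathrel{=}\Varid{h}\;\Varid{xs}\mathbin{\oplus}\Varid{h}\;\Varid{ys}} of the homomorphism read from left to right, so \ensuremath{(\Varid{img}\;\Varid{h},(\oplus),\Varid{z})} is genuinely a commutative monoid and this direction is done.

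For the ``if'' direction, assume \ensuremath{(\Varid{img}\;\Varid{h},(\oplus),\Varid{z})} is a commutative monoid and \ensuremath{\Varid{h}\mathrel{=}\Varid{hom}\;(\oplus)\;(\mathbin{\otimes}\Varid{z})\;\Varid{z}}. Unfolding the definition of \ensuremath{\Varid{aggregate}} and using~\eqref{eq:comp-ap}, the value \ensuremath{\Varid{aggregate}\;\Varid{z}\;(\otimes)\;(\oplus)\;\Varid{xss}} equals \ensuremath{\Varid{foldr}\;(\oplus)\;\Varid{z}} applied, via the \ensuremath{(\mathrel{\raisebox{0.5\depth}{\scaleobj{0.5}{\langle}} \scaleobj{0.8}{\$} \raisebox{0.5\depth}{\scaleobj{0.5}{\rangle}}})} of Figure~\ref{figure:monadic-operators}, to \ensuremath{\Varid{perm}\;(\Varid{map}\;\Varid{h}\;\Varid{xss})}. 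Every element of \ensuremath{\Varid{map}\;\Varid{h}\;\Varid{xss}} lies in \ensuremath{\Varid{img}\;\Varid{h}}, and \ensuremath{\Varid{z}\mathrel{=}\Varid{h}\;[\mskip1.5mu \mskip1.5mu]\in \Varid{img}\;\Varid{h}}; since \ensuremath{(\oplus)} is associative, commutative, and closed on \ensuremath{\Varid{img}\;\Varid{h}} (all part of the monoid hypothesis), a domain-restricted version of Lemma~\ref{lemma:fold-perm} applies and collapses this expression to \ensuremath{\Varid{return}\;(\Varid{foldr}\;(\oplus)\;\Varid{z}\;(\Varid{map}\;\Varid{h}\;\Varid{xss}))}. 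Finally, since \ensuremath{\Varid{h}\mathrel{=}\Varid{hom}\;(\oplus)\;(\Varid{h}\mathbin{\cdot}\Varid{wrap})\;\Varid{z}} (using \ensuremath{\Varid{h}\mathbin{\cdot}\Varid{wrap}\mathrel{=}(\mathbin{\otimes}\Varid{z})}), Lemma~\ref{lemma:hom-concat} rewrites \ensuremath{\Varid{foldr}\;(\oplus)\;\Varid{z}\;(\Varid{map}\;\Varid{h}\;\Varid{xss})} to \ensuremath{\Varid{h}\;(\Varid{concat}\;\Varid{xss})}, which is precisely \ensuremath{(\Varid{return}\mathbin{\cdot}\Varid{h}\mathbin{\cdot}\Varid{concat})\;\Varid{xss}}, as required.

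The main obstacle is this domain-restricted form of Lemma~\ref{lemma:fold-perm} (with its companion Lemma~\ref{lemma:fold-insert}): as stated they demand \ensuremath{\Varid{x}\mathbin{\odot}(\Varid{y}\mathbin{\odot}\Varid{z})\mathrel{=}\Varid{y}\mathbin{\odot}(\Varid{x}\mathbin{\odot}\Varid{z})} for \emph{all} arguments, whereas here it is available only for arguments drawn from \ensuremath{\Varid{img}\;\Varid{h}}. Fortunately their proofs never use associativity or an identity --- only that bi-commutativity --- so it suffices to re-run the same two inductions carrying the invariant that every partial \ensuremath{\Varid{foldr}\;(\oplus)\;\Varid{z}} of a suffix stays inside \ensuremath{\Varid{img}\;\Varid{h}}; this is exactly what \ensuremath{\Varid{z}\in \Varid{img}\;\Varid{h}} and closure of \ensuremath{(\oplus)} on \ensuremath{\Varid{img}\;\Varid{h}} guarantee, so no genuinely new argument is needed, only careful bookkeeping of the domain. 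Once that restricted lemma is in hand, the remaining steps are routine rearrangement with the operators of Figure~\ref{figure:monadic-operators} together with the two structural lemmas on list homomorphisms.
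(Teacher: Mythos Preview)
Your argument is correct and follows the same route the paper intends: the paper's proof is a single sentence citing Theorem~\ref{thm:aggregate-det}, Theorem~\ref{thm:aggregate-cmonoid}, and Theorem~\ref{thm:aggregate-hom}. For the ``only if'' direction you do exactly what the paper does, and your extra remark that closure of \ensuremath{\Varid{img}\;\Varid{h}} under \ensuremath{(\oplus)} comes from the concatenation clause of the homomorphism is a detail the paper leaves implicit.

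For the ``if'' direction you are in fact more careful than the paper. The paper simply appeals to Theorem~\ref{thm:aggregate-det}, but that theorem (via Lemma~\ref{lemma:fold-perm}) is stated under the assumption that \ensuremath{(\oplus)} is associative and commutative \emph{everywhere}, whereas the corollary's hypothesis supplies those laws only on \ensuremath{\Varid{img}\;(\Varid{foldr}\;(\otimes)\;\Varid{z})}. You rightly note that the proofs of Lemmas~\ref{lemma:fold-perm} and~\ref{lemma:fold-insert} use only the bi-commutativity identity, and that the induction can be replayed under the invariant that all intermediate \ensuremath{\Varid{foldr}\;(\oplus)\;\Varid{z}} values remain in \ensuremath{\Varid{img}\;\Varid{h}}, which closure and \ensuremath{\Varid{z}\in \Varid{img}\;\Varid{h}} guarantee. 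That is precisely the patch the paper's one-line proof glosses over, and the subsequent use of Lemma~\ref{lemma:hom-concat} to rewrite \ensuremath{\Varid{foldr}\;(\oplus)\;\Varid{z}\mathbin{\cdot}\Varid{map}\;\Varid{h}} as \ensuremath{\Varid{h}\mathbin{\cdot}\Varid{concat}} matches the paper's intended chain through Corollary~\ref{cor:aggregate-det-hom}.
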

\begin{proof}
A conclusion following from Theorem~\ref{thm:aggregate-det},
Theorem \ref{thm:aggregate-cmonoid}, and Theorem \ref{thm:aggregate-hom}.
\end{proof}

\subsubsection*{Acknowledgements}
In around late 2016,
Yu-Fang Chen, Chih-Duo Hong, Ond{\v{r}}ej Leng{\'{a}}l,
Nishant Sinha and Bow-Yaw Wang invited me into their project
formalising Spark. It was what inspired my interests in reasoning about monads, which led to a number of subsequent work.
The initial proofs of properties of \ensuremath{\Varid{aggregate}} and other combinators were done by Ond{\v{r}}ej Leng{\'{a}}l, without using monads.
\citet{Affeldt:19:Hierarchy} modelled a hierarchy of monadic effects in Coq. The formalisation was applied to verify a number of equational proofs of monadic programs, including some of the proofs in an earlier version of this report. I am solely responsible for any remaining errors, however.

%\bibliography{bib}

\begin{thebibliography}{8}

%%% ====================================================================
%%% NOTE TO THE USER: you can override these defaults by providing
%%% customized versions of any of these macros before the \bibliography
%%% command.  Each of them MUST provide its own final punctuation,
%%% except for \shownote{}, \showDOI{}, and \showURL{}.  The latter two
%%% do not use final punctuation, in order to avoid confusing it with
%%% the Web address.
%%%
%%% To suppress output of a particular field, define its macro to expand
%%% to an empty string, or better, \unskip, like this:
%%%
%%% \newcommand{\showDOI}[1]{\unskip}   % LaTeX syntax
%%%
%%% \def \showDOI #1{\unskip}           % plain TeX syntax
%%%
%%% ====================================================================

\ifx \showCODEN    \undefined \def \showCODEN     #1{\unskip}     \fi
\ifx \showDOI      \undefined \def \showDOI       #1{#1}\fi
\ifx \showISBNx    \undefined \def \showISBNx     #1{\unskip}     \fi
\ifx \showISBNxiii \undefined \def \showISBNxiii  #1{\unskip}     \fi
\ifx \showISSN     \undefined \def \showISSN      #1{\unskip}     \fi
\ifx \showLCCN     \undefined \def \showLCCN      #1{\unskip}     \fi
\ifx \shownote     \undefined \def \shownote      #1{#1}          \fi
\ifx \showarticletitle \undefined \def \showarticletitle #1{#1}   \fi
\ifx \showURL      \undefined \def \showURL       {\relax}        \fi
% The following commands are used for tagged output and should be
% invisible to TeX
\providecommand\bibfield[2]{#2}
\providecommand\bibinfo[2]{#2}
\providecommand\natexlab[1]{#1}
\providecommand\showeprint[2][]{arXiv:#2}

\bibitem[\protect\citeauthoryear{Affeldt, Nowak, and Saikawa}{Affeldt
  et~al\mbox{.}}{2019}]%
        {Affeldt:19:Hierarchy}
\bibfield{author}{\bibinfo{person}{Reynald Affeldt}, \bibinfo{person}{David
  Nowak}, {and} \bibinfo{person}{Takafumi Saikawa}.}
  \bibinfo{year}{2019}\natexlab{}.
\newblock \showarticletitle{A hierarchy of monadic effects for program
  verification using equational reasoning}. In
  \bibinfo{booktitle}{\emph{Mathematics of Program Construction}},
  \bibfield{editor}{\bibinfo{person}{Graham Hutton}} (Ed.).
  \bibinfo{publisher}{Springer}.
\newblock


\bibitem[\protect\citeauthoryear{Bird}{Bird}{1987}]%
        {Bird:87:Introduction}
\bibfield{author}{\bibinfo{person}{Richard~S. Bird}.}
  \bibinfo{year}{1987}\natexlab{}.
\newblock \showarticletitle{An introduction to the theory of lists}.
\newblock In \bibinfo{booktitle}{\emph{Logic of Programming and Calculi of
  Discrete Design}}, \bibfield{editor}{\bibinfo{person}{Manfred Broy}} (Ed.).
  Number~36 in \bibinfo{series}{NATO ASI Series F}.
  \bibinfo{publisher}{Springer-Verlag}, \bibinfo{pages}{3--42}.
\newblock


\bibitem[\protect\citeauthoryear{Chen, Hong, Leng{\'{a}}l, Mu, Sinha, and
  Wang}{Chen et~al\mbox{.}}{2017}]%
        {Lengal:17:Executable}
\bibfield{author}{\bibinfo{person}{Yu-Fang Chen}, \bibinfo{person}{Chih-Duo
  Hong}, \bibinfo{person}{Ond{\v{r}}ej Leng{\'{a}}l},
  \bibinfo{person}{Shin-Cheng Mu}, \bibinfo{person}{Nishant Sinha}, {and}
  \bibinfo{person}{Bow-Yaw Wang}.} \bibinfo{year}{2017}\natexlab{}.
\newblock \showarticletitle{An executable sequential specification for {S}park
  aggregation}. In \bibinfo{booktitle}{\emph{International Conference on
  Networked Systems}}. \bibinfo{publisher}{Springer-Verlag}.
\newblock


\bibitem[\protect\citeauthoryear{Fischer, Kiselyov, and Shan}{Fischer
  et~al\mbox{.}}{2011}]%
        {Fischer:11:Purely}
\bibfield{author}{\bibinfo{person}{Sebastian Fischer}, \bibinfo{person}{Oleg
  Kiselyov}, {and} \bibinfo{person}{Chung-chieh Shan}.}
  \bibinfo{year}{2011}\natexlab{}.
\newblock \showarticletitle{Purely functional lazy nondeterministic
  programming}.
\newblock \bibinfo{journal}{\emph{Journal of Functional Programming}}
  \bibinfo{volume}{21}, \bibinfo{number}{4-5} (\bibinfo{date}{September}
  \bibinfo{year}{2011}), \bibinfo{pages}{413--465}.
\newblock


\bibitem[\protect\citeauthoryear{Gibbons and Hinze}{Gibbons and Hinze}{2011}]%
        {GibbonsHinze:11:Just}
\bibfield{author}{\bibinfo{person}{Jeremy Gibbons} {and} \bibinfo{person}{Ralf
  Hinze}.} \bibinfo{year}{2011}\natexlab{}.
\newblock \showarticletitle{Just do it: simple monadic equational reasoning}.
  In \bibinfo{booktitle}{\emph{International Conference on Functional
  Programming}}, \bibfield{editor}{\bibinfo{person}{Olivier Danvy}} (Ed.).
  \bibinfo{publisher}{ACM Press}, \bibinfo{pages}{2--14}.
\newblock


\bibitem[\protect\citeauthoryear{Kiselyov}{Kiselyov}{2015}]%
        {Kiselyov:15:Laws}
\bibfield{author}{\bibinfo{person}{Oleg Kiselyov}.}
  \bibinfo{year}{2015}\natexlab{}.
\newblock \bibinfo{title}{Laws of MonadPlus}.
\newblock
  \bibinfo{howpublished}{\url{http://okmij.org/ftp/Computation/monads.html\#monadplus}}.
\newblock


\bibitem[\protect\citeauthoryear{Voigtl{\"{a}}nder}{Voigtl{\"{a}}nder}{2009}]%
        {Voigtlander:09:Free}
\bibfield{author}{\bibinfo{person}{Janis Voigtl{\"{a}}nder}.}
  \bibinfo{year}{2009}\natexlab{}.
\newblock \showarticletitle{Free theorems involving type constructor classes}.
  In \bibinfo{booktitle}{\emph{International Conference on Functional
  Programming}}, \bibfield{editor}{\bibinfo{person}{Andrew Tolmach}} (Ed.).
  \bibinfo{publisher}{ACM Press}, \bibinfo{pages}{173--184}.
\newblock


\bibitem[\protect\citeauthoryear{Zaharia, Chowdhury, Das, Dave, Ma, McCauley,
  Franklin, Shenker, and Stoica}{Zaharia et~al\mbox{.}}{2012}]%
        {Zaharia:12:Resilient}
\bibfield{author}{\bibinfo{person}{Matei Zaharia}, \bibinfo{person}{Mosharaf
  Chowdhury}, \bibinfo{person}{Tathagata Das}, \bibinfo{person}{Ankur Dave},
  \bibinfo{person}{Justin Ma}, \bibinfo{person}{Murphy McCauley},
  \bibinfo{person}{Michael~J. Franklin}, \bibinfo{person}{Scott Shenker}, {and}
  \bibinfo{person}{Ion Stoica}.} \bibinfo{year}{2012}\natexlab{}.
\newblock \showarticletitle{Resilient distributed datasets: a fault-tolerant
  abstraction for in-memory cluster computing}. In
  \bibinfo{booktitle}{\emph{Networked Systems Design and Implementation}},
  \bibfield{editor}{\bibinfo{person}{Steven Gribble} {and}
  \bibinfo{person}{Dina Katabi}} (Eds.). \bibinfo{publisher}{USENIX}.
\newblock


\end{thebibliography}
%%% -*-BibTeX-*-
%%% Do NOT edit. File created by BibTeX with style
%%% ACM-Reference-Format-Journals [18-Jan-2012].

%% Appendix
\appendix

\section{Miscellaneous Proofs}
\label{sec:misc-proofs}

\paragraph{Proving \eqref{eq:comp-bind-ap}} \ensuremath{\Varid{f}\mathbin{\hstretch{0.7}{=\!\!<\!\!<}}(\Varid{g}\mathrel{\raisebox{0.5\depth}{\scaleobj{0.5}{\langle}} \scaleobj{0.8}{\$} \raisebox{0.5\depth}{\scaleobj{0.5}{\rangle}}}\Varid{m})\mathrel{=}(\Varid{f}\mathbin{\cdot}\Varid{g})\mathbin{\hstretch{0.7}{=\!\!<\!\!<}}\Varid{m}}.
\begin{proof} We reason:
\begin{hscode}\SaveRestoreHook
\column{B}{@{}>{\hspre}l<{\hspost}@{}}%
\column{7}{@{}>{\hspre}l<{\hspost}@{}}%
\column{9}{@{}>{\hspre}l<{\hspost}@{}}%
\column{E}{@{}>{\hspre}l<{\hspost}@{}}%
\>[7]{}\Varid{f}\mathbin{\hstretch{0.7}{=\!\!<\!\!<}}(\Varid{g}\mathrel{\raisebox{0.5\depth}{\scaleobj{0.5}{\langle}} \scaleobj{0.8}{\$} \raisebox{0.5\depth}{\scaleobj{0.5}{\rangle}}}\Varid{m}){}\<[E]%
\\
\>[B]{}\mathbin{=}{}\<[9]%
\>[9]{}\mbox{\commentbegin  definition of \ensuremath{(\mathrel{\raisebox{0.5\depth}{\scaleobj{0.5}{\langle}} \scaleobj{0.8}{\$} \raisebox{0.5\depth}{\scaleobj{0.5}{\rangle}}})}  \commentend}{}\<[E]%
\\
\>[B]{}\hsindent{7}{}\<[7]%
\>[7]{}\Varid{f}\mathbin{\hstretch{0.7}{=\!\!<\!\!<}}((\Varid{return}\mathbin{\cdot}\Varid{g})\mathbin{\hstretch{0.7}{=\!\!<\!\!<}}\Varid{m}){}\<[E]%
\\
\>[B]{}\mathbin{=}{}\<[9]%
\>[9]{}\mbox{\commentbegin  monad law \eqref{eq:monad-assoc}  \commentend}{}\<[E]%
\\
\>[B]{}\hsindent{7}{}\<[7]%
\>[7]{}(\lambda \Varid{x}\to \Varid{f}\mathbin{\hstretch{0.7}{=\!\!<\!\!<}}\Varid{return}\;(\Varid{g}\;\Varid{x}))\mathbin{\hstretch{0.7}{=\!\!<\!\!<}}\Varid{m}{}\<[E]%
\\
\>[B]{}\mathbin{=}{}\<[9]%
\>[9]{}\mbox{\commentbegin  monad law \eqref{eq:monad-bind-ret}  \commentend}{}\<[E]%
\\
\>[B]{}\hsindent{7}{}\<[7]%
\>[7]{}(\lambda \Varid{x}\to \Varid{f}\;(\Varid{g}\;\Varid{x}))\mathbin{\hstretch{0.7}{=\!\!<\!\!<}}\Varid{m}{}\<[E]%
\\
\>[B]{}\mathbin{=}{}\<[7]%
\>[7]{}(\Varid{f}\mathbin{\cdot}\Varid{g})\mathbin{\hstretch{0.7}{=\!\!<\!\!<}}\Varid{m}~~.{}\<[E]%
\ColumnHook
\end{hscode}\resethooks
\end{proof}

\paragraph{Proving \eqref{eq:comp-ap-ap}} \ensuremath{\Varid{f}\mathrel{\raisebox{0.5\depth}{\scaleobj{0.5}{\langle}} \scaleobj{0.8}{\$} \raisebox{0.5\depth}{\scaleobj{0.5}{\rangle}}}(\Varid{g}\mathrel{\raisebox{0.5\depth}{\scaleobj{0.5}{\langle}} \scaleobj{0.8}{\$} \raisebox{0.5\depth}{\scaleobj{0.5}{\rangle}}}\Varid{m})\mathrel{=}(\Varid{f}\mathbin{\cdot}\Varid{g})\mathrel{\raisebox{0.5\depth}{\scaleobj{0.5}{\langle}} \scaleobj{0.8}{\$} \raisebox{0.5\depth}{\scaleobj{0.5}{\rangle}}}\Varid{m}}.
\begin{proof} We reason:
\begin{hscode}\SaveRestoreHook
\column{B}{@{}>{\hspre}l<{\hspost}@{}}%
\column{4}{@{}>{\hspre}l<{\hspost}@{}}%
\column{9}{@{}>{\hspre}l<{\hspost}@{}}%
\column{E}{@{}>{\hspre}l<{\hspost}@{}}%
\>[4]{}\Varid{f}\mathrel{\raisebox{0.5\depth}{\scaleobj{0.5}{\langle}} \scaleobj{0.8}{\$} \raisebox{0.5\depth}{\scaleobj{0.5}{\rangle}}}(\Varid{g}\mathrel{\raisebox{0.5\depth}{\scaleobj{0.5}{\langle}} \scaleobj{0.8}{\$} \raisebox{0.5\depth}{\scaleobj{0.5}{\rangle}}}\Varid{m}){}\<[E]%
\\
\>[B]{}\mathbin{=}{}\<[9]%
\>[9]{}\mbox{\commentbegin  definition of \ensuremath{(\mathrel{\raisebox{0.5\depth}{\scaleobj{0.5}{\langle}} \scaleobj{0.8}{\$} \raisebox{0.5\depth}{\scaleobj{0.5}{\rangle}}})}  \commentend}{}\<[E]%
\\
\>[B]{}\hsindent{4}{}\<[4]%
\>[4]{}(\Varid{return}\mathbin{\cdot}\Varid{f})\mathbin{\hstretch{0.7}{=\!\!<\!\!<}}(\Varid{g}\mathrel{\raisebox{0.5\depth}{\scaleobj{0.5}{\langle}} \scaleobj{0.8}{\$} \raisebox{0.5\depth}{\scaleobj{0.5}{\rangle}}}\Varid{m}){}\<[E]%
\\
\>[B]{}\mathbin{=}{}\<[9]%
\>[9]{}\mbox{\commentbegin  by \eqref{eq:comp-bind-ap}  \commentend}{}\<[E]%
\\
\>[B]{}\hsindent{4}{}\<[4]%
\>[4]{}(\Varid{return}\mathbin{\cdot}\Varid{f}\mathbin{\cdot}\Varid{g})\mathbin{\hstretch{0.7}{=\!\!<\!\!<}}\Varid{m}{}\<[E]%
\\
\>[B]{}\mathbin{=}{}\<[9]%
\>[9]{}\mbox{\commentbegin  definition of \ensuremath{(\mathrel{\raisebox{0.5\depth}{\scaleobj{0.5}{\langle}} \scaleobj{0.8}{\$} \raisebox{0.5\depth}{\scaleobj{0.5}{\rangle}}})}  \commentend}{}\<[E]%
\\
\>[B]{}\hsindent{4}{}\<[4]%
\>[4]{}(\Varid{f}\mathbin{\cdot}\Varid{g})\mathrel{\raisebox{0.5\depth}{\scaleobj{0.5}{\langle}} \scaleobj{0.8}{\$} \raisebox{0.5\depth}{\scaleobj{0.5}{\rangle}}}\Varid{m}~~.{}\<[E]%
\ColumnHook
\end{hscode}\resethooks
\end{proof}

For the next results we prove a lemma:
\begin{align}
\ensuremath{(\Varid{f}\mathbin{\hstretch{0.7}{=\!\!<\!\!<}})\mathbin{\cdot}(\Varid{g}\mathbin{\hstretch{0.7}{=\!\!<\!\!<}})} &= \ensuremath{(((\Varid{f}\mathbin{\hstretch{0.7}{=\!\!<\!\!<}})\mathbin{\cdot}\Varid{g})\mathbin{\hstretch{0.7}{=\!\!<\!\!<}})} \mbox{~~.} \label{eq:bind-comp-bind}
\end{align}
\begin{hscode}\SaveRestoreHook
\column{B}{@{}>{\hspre}l<{\hspost}@{}}%
\column{4}{@{}>{\hspre}l<{\hspost}@{}}%
\column{9}{@{}>{\hspre}l<{\hspost}@{}}%
\column{E}{@{}>{\hspre}l<{\hspost}@{}}%
\>[4]{}(\Varid{f}\mathbin{\hstretch{0.7}{=\!\!<\!\!<}})\mathbin{\cdot}(\Varid{g}\mathbin{\hstretch{0.7}{=\!\!<\!\!<}}){}\<[E]%
\\
\>[B]{}\mathbin{=}{}\<[9]%
\>[9]{}\mbox{\commentbegin  $\eta$ intro.  \commentend}{}\<[E]%
\\
\>[B]{}\hsindent{4}{}\<[4]%
\>[4]{}(\lambda \Varid{m}\to \Varid{f}\mathbin{\hstretch{0.7}{=\!\!<\!\!<}}(\Varid{g}\mathbin{\hstretch{0.7}{=\!\!<\!\!<}}\Varid{m})){}\<[E]%
\\
\>[B]{}\mathbin{=}{}\<[9]%
\>[9]{}\mbox{\commentbegin  monad law \eqref{eq:monad-assoc}  \commentend}{}\<[E]%
\\
\>[B]{}\hsindent{4}{}\<[4]%
\>[4]{}(\lambda \Varid{m}\to (\lambda \Varid{y}\to \Varid{f}\mathbin{\hstretch{0.7}{=\!\!<\!\!<}}\Varid{g}\;\Varid{y})\mathbin{\hstretch{0.7}{=\!\!<\!\!<}}\Varid{m}){}\<[E]%
\\
\>[B]{}\mathbin{=}{}\<[9]%
\>[9]{}\mbox{\commentbegin  $\eta$ reduction  \commentend}{}\<[E]%
\\
\>[B]{}\hsindent{4}{}\<[4]%
\>[4]{}(((\Varid{f}\mathbin{\hstretch{0.7}{=\!\!<\!\!<}})\mathbin{\cdot}\Varid{g})\mathbin{\hstretch{0.7}{=\!\!<\!\!<}})~~.{}\<[E]%
\ColumnHook
\end{hscode}\resethooks

% For the next results we prove a lemma:
% \begin{equation}
% |(f =<<) . (g <.> h) = ((f . g) =<<) . h| \mbox{~~.} \label{eq:bind-comp-mcomp}
% \end{equation}
% \begin{proof}
% \begin{spec}
%    (f =<<) . (g <.> h)
% =    {- definition of |(<.>)| -}
%    (f =<<) . ((return . g) =<<) . h
% =    {- $\eta$ intro. -}
%    (\m -> f =<< ((return . g) =<< m)) . h
% =    {- monad law \eqref{eq:monad-assoc} -}
%    (\m -> (\y -> f =<< return (g y)) =<< m) . h
% =    {- monad law \eqref{eq:monad-bind-ret} -}
%    (\m -> (\y -> f (g y)) =<< m) . h
% =    {- $\eta$ reduction -}
%    ((f . g) =<<) . h {-"~~."-}
% \end{spec}
% \end{proof}

\paragraph{Proving \eqref{eq:comp-mcomp-mcomp}} \ensuremath{\Varid{f}\mathrel{\raisebox{0.5\depth}{\scaleobj{0.5}{\langle \bullet \rangle}}}(\Varid{g}\mathrel{\raisebox{0.5\depth}{\scaleobj{0.5}{\langle \bullet \rangle}}}\Varid{m})\mathrel{=}(\Varid{f}\mathbin{\cdot}\Varid{g})\mathrel{\raisebox{0.5\depth}{\scaleobj{0.5}{\langle \bullet \rangle}}}\Varid{m}}.
\begin{proof} We reason:
\begin{hscode}\SaveRestoreHook
\column{B}{@{}>{\hspre}l<{\hspost}@{}}%
\column{6}{@{}>{\hspre}l<{\hspost}@{}}%
\column{9}{@{}>{\hspre}l<{\hspost}@{}}%
\column{E}{@{}>{\hspre}l<{\hspost}@{}}%
\>[6]{}\Varid{f}\mathrel{\raisebox{0.5\depth}{\scaleobj{0.5}{\langle \bullet \rangle}}}(\Varid{g}\mathrel{\raisebox{0.5\depth}{\scaleobj{0.5}{\langle \bullet \rangle}}}\Varid{m}){}\<[E]%
\\
\>[B]{}\mathbin{=}{}\<[9]%
\>[9]{}\mbox{\commentbegin  definition of \ensuremath{(\mathrel{\raisebox{0.5\depth}{\scaleobj{0.5}{\langle \bullet \rangle}}})}  \commentend}{}\<[E]%
\\
\>[B]{}\hsindent{6}{}\<[6]%
\>[6]{}((\Varid{return}\mathbin{\cdot}\Varid{f})\mathbin{\hstretch{0.7}{=\!\!<\!\!<}})\mathbin{\cdot}((\Varid{return}\mathbin{\cdot}\Varid{g})\mathbin{\hstretch{0.7}{=\!\!<\!\!<}})\mathbin{\cdot}\Varid{m}{}\<[E]%
\\
\>[B]{}\mathbin{=}{}\<[9]%
\>[9]{}\mbox{\commentbegin  by \eqref{eq:bind-comp-bind}  \commentend}{}\<[E]%
\\
\>[B]{}\hsindent{6}{}\<[6]%
\>[6]{}((((\Varid{return}\mathbin{\cdot}\Varid{f})\mathbin{\hstretch{0.7}{=\!\!<\!\!<}})\mathbin{\cdot}\Varid{return}\mathbin{\cdot}\Varid{g})\mathbin{\hstretch{0.7}{=\!\!<\!\!<}})\mathbin{\cdot}\Varid{m}{}\<[E]%
\\
\>[B]{}\mathbin{=}{}\<[9]%
\>[9]{}\mbox{\commentbegin  monad law \eqref{eq:monad-bind-ret}  \commentend}{}\<[E]%
\\
\>[B]{}\hsindent{6}{}\<[6]%
\>[6]{}((\Varid{return}\mathbin{\cdot}\Varid{f}\mathbin{\cdot}\Varid{g})\mathbin{\hstretch{0.7}{=\!\!<\!\!<}})\mathbin{\cdot}\Varid{m}{}\<[E]%
\\
\>[B]{}\mathbin{=}{}\<[9]%
\>[9]{}\mbox{\commentbegin  definition of \ensuremath{(\mathrel{\raisebox{0.5\depth}{\scaleobj{0.5}{\langle \bullet \rangle}}})}  \commentend}{}\<[E]%
\\
\>[B]{}\hsindent{6}{}\<[6]%
\>[6]{}(\Varid{f}\mathbin{\cdot}\Varid{g})\mathrel{\raisebox{0.5\depth}{\scaleobj{0.5}{\langle \bullet \rangle}}}\Varid{m}~~.{}\<[E]%
\ColumnHook
\end{hscode}\resethooks
% \begin{spec}
%    f <.> (g <.> m)
% =    {- definition of |(<.>)| -}
%    ((return . f) =<<) . (g <.> m)
% =    {- by \eqref{eq:bind-comp-mcomp} -}
%    ((return . f . g) =<<) . m
% =    {- definition of |(<.>)| -}
%    (f . g) <.> m {-"~~."-}
% \end{spec}
\end{proof}

\paragraph{Proving \eqref{eq:kc-mcomp}} \ensuremath{\Varid{f}\mathrel{\hstretch{0.7}{<\!\!\!=\!\!<}}(\Varid{g}\mathrel{\raisebox{0.5\depth}{\scaleobj{0.5}{\langle \bullet \rangle}}}\Varid{h})\mathrel{=}(\Varid{f}\mathbin{\cdot}\Varid{g})\mathrel{\hstretch{0.7}{<\!\!\!=\!\!<}}\Varid{h}}.
\begin{proof} We reason:
\begin{hscode}\SaveRestoreHook
\column{B}{@{}>{\hspre}l<{\hspost}@{}}%
\column{4}{@{}>{\hspre}l<{\hspost}@{}}%
\column{9}{@{}>{\hspre}l<{\hspost}@{}}%
\column{E}{@{}>{\hspre}l<{\hspost}@{}}%
\>[4]{}\Varid{f}\mathrel{\hstretch{0.7}{<\!\!\!=\!\!<}}(\Varid{g}\mathrel{\raisebox{0.5\depth}{\scaleobj{0.5}{\langle \bullet \rangle}}}\Varid{h}){}\<[E]%
\\
\>[B]{}\mathbin{=}{}\<[9]%
\>[9]{}\mbox{\commentbegin  definitions of \ensuremath{(\mathrel{\hstretch{0.7}{<\!\!\!=\!\!<}})}  \commentend}{}\<[E]%
\\
\>[B]{}\hsindent{4}{}\<[4]%
\>[4]{}(\Varid{f}\mathbin{\hstretch{0.7}{=\!\!<\!\!<}})\mathbin{\cdot}((\Varid{return}\mathbin{\cdot}\Varid{g})\mathbin{\hstretch{0.7}{=\!\!<\!\!<}})\mathbin{\cdot}\Varid{h}{}\<[E]%
\\
\>[B]{}\mathbin{=}{}\<[9]%
\>[9]{}\mbox{\commentbegin  by \eqref{eq:bind-comp-bind}  \commentend}{}\<[E]%
\\
\>[B]{}\hsindent{4}{}\<[4]%
\>[4]{}(((\Varid{f}\mathbin{\hstretch{0.7}{=\!\!<\!\!<}})\mathbin{\cdot}\Varid{return}\mathbin{\cdot}\Varid{g})\mathbin{\hstretch{0.7}{=\!\!<\!\!<}})\mathbin{\cdot}\Varid{h}{}\<[E]%
\\
\>[B]{}\mathbin{=}{}\<[9]%
\>[9]{}\mbox{\commentbegin  monad law \eqref{eq:monad-bind-ret}  \commentend}{}\<[E]%
\\
\>[B]{}\hsindent{4}{}\<[4]%
\>[4]{}((\Varid{f}\mathbin{\cdot}\Varid{g})\mathbin{\hstretch{0.7}{=\!\!<\!\!<}})\mathbin{\cdot}\Varid{h}{}\<[E]%
\\
\>[B]{}\mathbin{=}{}\<[9]%
\>[9]{}\mbox{\commentbegin  definition of \ensuremath{(\mathrel{\hstretch{0.7}{<\!\!\!=\!\!<}})}  \commentend}{}\<[E]%
\\
\>[B]{}\hsindent{4}{}\<[4]%
\>[4]{}(\Varid{f}\mathbin{\cdot}\Varid{g})\mathrel{\hstretch{0.7}{<\!\!\!=\!\!<}}\Varid{h}~~.{}\<[E]%
\ColumnHook
\end{hscode}\resethooks
% \begin{spec}
%    f <=< (g <.> h)
% =    {- definitions of |(<=<)| -}
%    (f =<<) . (g <.> h)
% =    {- by \eqref{eq:bind-comp-mcomp} -}
%    ((f . g) =<<) . h
% =    {- definition of |(<=<)| -}
%    (f . g) <=< h {-"~~."-}
% \end{spec}
\end{proof}

\paragraph{Proving \eqref{eq:mcomp-kc}} \ensuremath{\Varid{f}\mathrel{\raisebox{0.5\depth}{\scaleobj{0.5}{\langle \bullet \rangle}}}(\Varid{g}\mathrel{\hstretch{0.7}{<\!\!\!=\!\!<}}\Varid{h})\mathrel{=}(\Varid{f}\mathrel{\raisebox{0.5\depth}{\scaleobj{0.5}{\langle \bullet \rangle}}}\Varid{g})\mathrel{\hstretch{0.7}{<\!\!\!=\!\!<}}\Varid{h}}.
\begin{proof} We reason:
\begin{hscode}\SaveRestoreHook
\column{B}{@{}>{\hspre}l<{\hspost}@{}}%
\column{4}{@{}>{\hspre}l<{\hspost}@{}}%
\column{9}{@{}>{\hspre}l<{\hspost}@{}}%
\column{E}{@{}>{\hspre}l<{\hspost}@{}}%
\>[4]{}\Varid{f}\mathrel{\raisebox{0.5\depth}{\scaleobj{0.5}{\langle \bullet \rangle}}}(\Varid{g}\mathrel{\hstretch{0.7}{<\!\!\!=\!\!<}}\Varid{h}){}\<[E]%
\\
\>[B]{}\mathbin{=}{}\<[9]%
\>[9]{}\mbox{\commentbegin  definitions of \ensuremath{(\mathrel{\hstretch{0.7}{<\!\!\!=\!\!<}})} and \ensuremath{(\mathrel{\raisebox{0.5\depth}{\scaleobj{0.5}{\langle \bullet \rangle}}})}  \commentend}{}\<[E]%
\\
\>[B]{}\hsindent{4}{}\<[4]%
\>[4]{}((\Varid{return}\mathbin{\cdot}\Varid{f})\mathbin{\hstretch{0.7}{=\!\!<\!\!<}})\mathbin{\cdot}(\Varid{g}\mathbin{\hstretch{0.7}{=\!\!<\!\!<}})\mathbin{\cdot}\Varid{h}{}\<[E]%
\\
\>[B]{}\mathbin{=}{}\<[9]%
\>[9]{}\mbox{\commentbegin  by \eqref{eq:bind-comp-bind}  \commentend}{}\<[E]%
\\
\>[B]{}\hsindent{4}{}\<[4]%
\>[4]{}((((\Varid{return}\mathbin{\cdot}\Varid{f})\mathbin{\hstretch{0.7}{=\!\!<\!\!<}})\mathbin{\cdot}\Varid{g})\mathbin{\hstretch{0.7}{=\!\!<\!\!<}})\mathbin{\cdot}\Varid{h}{}\<[E]%
\\
\>[B]{}\mathbin{=}{}\<[9]%
\>[9]{}\mbox{\commentbegin  definition of \ensuremath{(\mathrel{\raisebox{0.5\depth}{\scaleobj{0.5}{\langle \bullet \rangle}}})}  \commentend}{}\<[E]%
\\
\>[B]{}\hsindent{4}{}\<[4]%
\>[4]{}((\Varid{f}\mathrel{\raisebox{0.5\depth}{\scaleobj{0.5}{\langle \bullet \rangle}}}\Varid{g})\mathbin{\hstretch{0.7}{=\!\!<\!\!<}})\mathbin{\cdot}\Varid{h}{}\<[E]%
\\
\>[B]{}\mathbin{=}{}\<[9]%
\>[9]{}\mbox{\commentbegin  definition of \ensuremath{(\mathrel{\hstretch{0.7}{<\!\!\!=\!\!<}})}  \commentend}{}\<[E]%
\\
\>[B]{}\hsindent{4}{}\<[4]%
\>[4]{}(\Varid{f}\mathrel{\raisebox{0.5\depth}{\scaleobj{0.5}{\langle \bullet \rangle}}}\Varid{g})\mathrel{\hstretch{0.7}{<\!\!\!=\!\!<}}\Varid{h}~~.{}\<[E]%
\ColumnHook
\end{hscode}\resethooks
\end{proof}

\paragraph{Proof of Lemma~\ref{lemma:hom-concat}}~
\begin{proof}
A Ping-pong proof.

{\sc Direction} $(\Rightarrow)$. Let \ensuremath{\Varid{h}\mathrel{=}\Varid{hom}\;(\oplus)\;(\Varid{h}\mathbin{\cdot}\Varid{wrap})\;\Varid{z}}, prove \ensuremath{\Varid{foldr}\;(\oplus)\;\Varid{z}\;(\Varid{map}\;\Varid{h}\;\Varid{xss})\mathrel{=}\Varid{h}\;(\Varid{concat}\;\Varid{xss})} by induction on \ensuremath{\Varid{xss}}.

{\sc Case} \ensuremath{\Varid{xss}\mathbin{:=}[\mskip1.5mu \mskip1.5mu]}:
\begin{hscode}\SaveRestoreHook
\column{B}{@{}>{\hspre}l<{\hspost}@{}}%
\column{7}{@{}>{\hspre}l<{\hspost}@{}}%
\column{E}{@{}>{\hspre}l<{\hspost}@{}}%
\>[7]{}\Varid{foldr}\;(\oplus)\;\Varid{z}\;(\Varid{map}\;\Varid{h}\;[\mskip1.5mu \mskip1.5mu]){}\<[E]%
\\
\>[B]{}\mathbin{=}{}\<[7]%
\>[7]{}\Varid{foldr}\;(\oplus)\;\Varid{z}\;[\mskip1.5mu \mskip1.5mu]{}\<[E]%
\\
\>[B]{}\mathbin{=}{}\<[7]%
\>[7]{}\Varid{z}{}\<[E]%
\\
\>[B]{}\mathbin{=}{}\<[7]%
\>[7]{}\Varid{h}\;(\Varid{concat}\;[\mskip1.5mu \mskip1.5mu])~~.{}\<[E]%
\ColumnHook
\end{hscode}\resethooks

{\sc Case} \ensuremath{\Varid{xss}\mathbin{:=}\Varid{xs}\mathbin{:}\Varid{xss}}:
\begin{hscode}\SaveRestoreHook
\column{B}{@{}>{\hspre}l<{\hspost}@{}}%
\column{7}{@{}>{\hspre}l<{\hspost}@{}}%
\column{9}{@{}>{\hspre}l<{\hspost}@{}}%
\column{E}{@{}>{\hspre}l<{\hspost}@{}}%
\>[7]{}\Varid{foldr}\;(\oplus)\;\Varid{z}\;(\Varid{map}\;\Varid{h}\;(\Varid{xs}\mathbin{:}\Varid{xss})){}\<[E]%
\\
\>[B]{}\mathbin{=}{}\<[7]%
\>[7]{}\Varid{h}\;\Varid{xs}\mathbin{\oplus}\Varid{foldr}\;(\oplus)\;\Varid{z}\;(\Varid{map}\;\Varid{h}\;\Varid{xss}){}\<[E]%
\\
\>[B]{}\mathbin{=}{}\<[9]%
\>[9]{}\mbox{\commentbegin  induction  \commentend}{}\<[E]%
\\
\>[B]{}\hsindent{7}{}\<[7]%
\>[7]{}\Varid{h}\;\Varid{xs}\mathbin{\oplus}\Varid{h}\;(\Varid{concat}\;\Varid{xss}){}\<[E]%
\\
\>[B]{}\mathbin{=}{}\<[9]%
\>[9]{}\mbox{\commentbegin  \ensuremath{\Varid{h}} homomorphism  \commentend}{}\<[E]%
\\
\>[B]{}\hsindent{7}{}\<[7]%
\>[7]{}\Varid{h}\;(\Varid{concat}\;(\Varid{xs}\mathbin{:}\Varid{xss}))~~.{}\<[E]%
\ColumnHook
\end{hscode}\resethooks

{\sc Direction} $(\Leftarrow)$. Assuming \ensuremath{\Varid{foldr}\;(\oplus)\;\Varid{z}\;(\Varid{map}\;\Varid{h}\;\Varid{xss})\mathrel{=}\Varid{h}\;(\Varid{concat}\;\Varid{xss})}, prove that \ensuremath{\Varid{h}\mathrel{=}\Varid{hom}\;(\oplus)\;(\Varid{h}\mathbin{\cdot}\Varid{wrap})\;\Varid{z}}.

{\sc Case} empty list:
\begin{hscode}\SaveRestoreHook
\column{B}{@{}>{\hspre}l<{\hspost}@{}}%
\column{7}{@{}>{\hspre}l<{\hspost}@{}}%
\column{9}{@{}>{\hspre}l<{\hspost}@{}}%
\column{E}{@{}>{\hspre}l<{\hspost}@{}}%
\>[7]{}\Varid{h}\;[\mskip1.5mu \mskip1.5mu]{}\<[E]%
\\
\>[B]{}\mathbin{=}{}\<[7]%
\>[7]{}\Varid{h}\;(\Varid{concat}\;[\mskip1.5mu \mskip1.5mu]){}\<[E]%
\\
\>[B]{}\mathbin{=}{}\<[9]%
\>[9]{}\mbox{\commentbegin  assumption  \commentend}{}\<[E]%
\\
\>[B]{}\hsindent{7}{}\<[7]%
\>[7]{}\Varid{foldr}\;(\oplus)\;\Varid{z}\;(\Varid{map}\;\Varid{h}\;[\mskip1.5mu \mskip1.5mu]){}\<[E]%
\\
\>[B]{}\mathbin{=}{}\<[7]%
\>[7]{}\Varid{z}~~.{}\<[E]%
\ColumnHook
\end{hscode}\resethooks

{\sc Case} singleton list: certainly \ensuremath{\Varid{h}\;[\mskip1.5mu \Varid{x}\mskip1.5mu]\mathrel{=}\Varid{h}\;[\mskip1.5mu \Varid{x}\mskip1.5mu]}.

{\sc Case} concatentation:
\begin{hscode}\SaveRestoreHook
\column{B}{@{}>{\hspre}l<{\hspost}@{}}%
\column{7}{@{}>{\hspre}l<{\hspost}@{}}%
\column{9}{@{}>{\hspre}l<{\hspost}@{}}%
\column{E}{@{}>{\hspre}l<{\hspost}@{}}%
\>[7]{}\Varid{h}\;(\Varid{xs}\mathbin{+\!\!\!\!\!+}\Varid{ys}){}\<[E]%
\\
\>[B]{}\mathbin{=}{}\<[7]%
\>[7]{}\Varid{h}\;(\Varid{concat}\;[\mskip1.5mu \Varid{xs},\Varid{ys}\mskip1.5mu]){}\<[E]%
\\
\>[B]{}\mathbin{=}{}\<[9]%
\>[9]{}\mbox{\commentbegin  assumption  \commentend}{}\<[E]%
\\
\>[B]{}\hsindent{7}{}\<[7]%
\>[7]{}\Varid{foldr}\;(\oplus)\;\Varid{z}\;(\Varid{map}\;\Varid{h}\;[\mskip1.5mu \Varid{xs},\Varid{ys}\mskip1.5mu]){}\<[E]%
\\
\>[B]{}\mathbin{=}{}\<[7]%
\>[7]{}\Varid{h}\;\Varid{xs}\mathbin{\oplus}(\Varid{h}\;\Varid{ys}\mathbin{\oplus}\Varid{z}){}\<[E]%
\\
\>[B]{}\mathbin{=}{}\<[7]%
\>[7]{}\Varid{h}\;\Varid{xs}\mathbin{\oplus}\Varid{h}\;\Varid{ys}~~.{}\<[E]%
\ColumnHook
\end{hscode}\resethooks
\end{proof}

\paragraph{Proof of Lemma~\ref{lemma:foldr-hom}}~
\begin{proof}
A Ping-pong proof.

{\sc Direction} $(\Leftarrow)$. We show that \ensuremath{\Varid{foldr}\;(\otimes)\;\Varid{z}\mathrel{=}\Varid{hom}\;(\oplus)\;(\mathbin{\otimes}\Varid{z})\;\Varid{z}}, provided that \ensuremath{\Varid{x}\mathbin{\otimes}(\Varid{y}\mathbin{\oplus}\Varid{w})\mathrel{=}(\Varid{x}\mathbin{\otimes}\Varid{y})\mathbin{\oplus}\Varid{w}}.

It is immediate that \ensuremath{\Varid{foldr}\;(\otimes)\;\Varid{z}\;[\mskip1.5mu \mskip1.5mu]\mathrel{=}\Varid{z}} around
\ensuremath{\Varid{foldr}\;(\otimes)\;\Varid{z}\;[\mskip1.5mu \Varid{x}\mskip1.5mu]\mathrel{=}\Varid{x}\mathbin{\otimes}\Varid{z}}.
For \ensuremath{\Varid{xs}\mathbin{+\!\!\!\!\!+}\Varid{ys}}, note that
\begin{align*}
  \ensuremath{\Varid{foldr}\;(\otimes)\;\Varid{z}\;(\Varid{xs}\mathbin{+\!\!\!\!\!+}\Varid{ys})} &= \ensuremath{\Varid{foldr}\;(\otimes)\;(\Varid{foldr}\;(\otimes)\;\Varid{z}\;\Varid{ys})\;\Varid{xs}} \mbox{~~.}
  \label{eq:fold-cat}
\end{align*}
The aim is thus to prove that
\begin{hscode}\SaveRestoreHook
\column{B}{@{}>{\hspre}l<{\hspost}@{}}%
\column{3}{@{}>{\hspre}l<{\hspost}@{}}%
\column{E}{@{}>{\hspre}l<{\hspost}@{}}%
\>[3]{}\Varid{foldr}\;(\otimes)\;(\Varid{foldr}\;(\otimes)\;\Varid{z}\;\Varid{ys})\;\Varid{xs}\mathbin{=}(\Varid{foldr}\;(\otimes)\;\Varid{z}\;\Varid{xs})\mathbin{\oplus}(\Varid{foldr}\;(\otimes)\;\Varid{z}\;\Varid{ys})~~.{}\<[E]%
\ColumnHook
\end{hscode}\resethooks
We perform an induction on \ensuremath{\Varid{xs}}. The case when \ensuremath{\Varid{xs}\mathbin{:=}[\mskip1.5mu \mskip1.5mu]} trivially
holds. For \ensuremath{\Varid{xs}\mathbin{:=}\Varid{x}\mathbin{:}\Varid{xs}}, we reason:
\begin{hscode}\SaveRestoreHook
\column{B}{@{}>{\hspre}l<{\hspost}@{}}%
\column{7}{@{}>{\hspre}l<{\hspost}@{}}%
\column{8}{@{}>{\hspre}l<{\hspost}@{}}%
\column{E}{@{}>{\hspre}l<{\hspost}@{}}%
\>[7]{}\Varid{foldr}\;(\otimes)\;(\Varid{foldr}\;(\otimes)\;\Varid{z}\;\Varid{ys})\;(\Varid{x}\mathbin{:}\Varid{xs}){}\<[E]%
\\
\>[B]{}\mathbin{=}{}\<[7]%
\>[7]{}\Varid{x}\mathbin{\otimes}\Varid{foldr}\;(\otimes)\;(\Varid{foldr}\;(\otimes)\;\Varid{z}\;\Varid{ys})\;\Varid{xs}{}\<[E]%
\\
\>[B]{}\mathbin{=}{}\<[8]%
\>[8]{}\mbox{\commentbegin  induction  \commentend}{}\<[E]%
\\
\>[B]{}\hsindent{7}{}\<[7]%
\>[7]{}\Varid{x}\mathbin{\otimes}((\Varid{foldr}\;(\otimes)\;\Varid{z}\;\Varid{xs})\mathbin{\oplus}(\Varid{foldr}\;(\otimes)\;\Varid{z}\;\Varid{ys})){}\<[E]%
\\
\>[B]{}\mathbin{=}{}\<[8]%
\>[8]{}\mbox{\commentbegin  assumption: \ensuremath{\Varid{x}\mathbin{\otimes}(\Varid{y}\mathbin{\oplus}\Varid{w})\mathrel{=}(\Varid{x}\mathbin{\otimes}\Varid{y})\mathbin{\oplus}\Varid{w}}  \commentend}{}\<[E]%
\\
\>[B]{}\hsindent{7}{}\<[7]%
\>[7]{}(\Varid{x}\mathbin{\otimes}(\Varid{foldr}\;(\otimes)\;\Varid{z}\;\Varid{xs}))\mathbin{\oplus}(\Varid{foldr}\;(\otimes)\;\Varid{z}\;\Varid{ys}){}\<[E]%
\\
\>[B]{}\mathbin{=}{}\<[7]%
\>[7]{}(\Varid{foldr}\;(\otimes)\;\Varid{z}\;(\Varid{x}\mathbin{:}\Varid{xs}))\mathbin{\oplus}(\Varid{foldr}\;(\otimes)\;\Varid{z}\;\Varid{ys})~~.{}\<[E]%
\ColumnHook
\end{hscode}\resethooks

{\sc Direction} $(\Rightarrow)$. Given
\ensuremath{\Varid{foldr}\;(\otimes)\;\Varid{z}\mathrel{=}\Varid{hom}\;(\oplus)\;(\mathbin{\otimes}\Varid{z})\;\Varid{z}},
prove that \ensuremath{\Varid{x}\mathbin{\otimes}(\Varid{y}\mathbin{\oplus}\Varid{w})\mathrel{=}(\Varid{x}\mathbin{\otimes}\Varid{y})\mathbin{\oplus}\Varid{w}} for \ensuremath{\Varid{y}} and \ensuremath{\Varid{w}} in the range of \ensuremath{\Varid{foldr}\;(\otimes)\;\Varid{z}}.

Let \ensuremath{\Varid{y}\mathrel{=}\Varid{foldr}\;(\otimes)\;\Varid{z}\;\Varid{xs}} and \ensuremath{\Varid{w}\mathrel{=}\Varid{foldr}\;(\otimes)\;\Varid{z}\;\Varid{ys}} for some
\ensuremath{\Varid{xs}} and \ensuremath{\Varid{ys}}. We reason:
\begin{hscode}\SaveRestoreHook
\column{B}{@{}>{\hspre}l<{\hspost}@{}}%
\column{7}{@{}>{\hspre}l<{\hspost}@{}}%
\column{9}{@{}>{\hspre}l<{\hspost}@{}}%
\column{E}{@{}>{\hspre}l<{\hspost}@{}}%
\>[7]{}\Varid{x}\mathbin{\otimes}(\Varid{y}\mathbin{\oplus}\Varid{w}){}\<[E]%
\\
\>[B]{}\mathbin{=}{}\<[7]%
\>[7]{}\Varid{x}\mathbin{\otimes}(\Varid{foldr}\;(\otimes)\;\Varid{z}\;\Varid{xs}\mathbin{\oplus}\Varid{foldr}\;(\otimes)\;\Varid{z}\;\Varid{ys}){}\<[E]%
\\
\>[B]{}\mathbin{=}{}\<[9]%
\>[9]{}\mbox{\commentbegin  since \ensuremath{\Varid{foldr}\;(\otimes)\;\Varid{z}\mathrel{=}\Varid{hom}\;(\oplus)\;(\mathbin{\otimes}\Varid{z})\;\Varid{z}}  \commentend}{}\<[E]%
\\
\>[B]{}\hsindent{7}{}\<[7]%
\>[7]{}\Varid{x}\mathbin{\otimes}(\Varid{foldr}\;(\otimes)\;\Varid{z}\;(\Varid{xs}\mathbin{+\!\!\!\!\!+}\Varid{ys})){}\<[E]%
\\
\>[B]{}\mathbin{=}{}\<[7]%
\>[7]{}\Varid{foldr}\;(\otimes)\;\Varid{z}\;(\Varid{x}\mathbin{:}\Varid{xs}\mathbin{+\!\!\!\!\!+}\Varid{ys}){}\<[E]%
\\
\>[B]{}\mathbin{=}{}\<[9]%
\>[9]{}\mbox{\commentbegin  since \ensuremath{\Varid{foldr}\;(\otimes)\;\Varid{z}\mathrel{=}\Varid{hom}\;(\oplus)\;(\mathbin{\otimes}\Varid{z})\;\Varid{z}}  \commentend}{}\<[E]%
\\
\>[B]{}\hsindent{7}{}\<[7]%
\>[7]{}\Varid{foldr}\;(\otimes)\;\Varid{z}\;(\Varid{x}\mathbin{:}\Varid{xs})\mathbin{\oplus}\Varid{foldr}\;(\otimes)\;\Varid{z}\;\Varid{ys}{}\<[E]%
\\
\>[B]{}\mathbin{=}{}\<[7]%
\>[7]{}(\Varid{x}\mathbin{\otimes}\Varid{foldr}\;(\otimes)\;\Varid{z}\;\Varid{xs})\mathbin{\oplus}\Varid{foldr}\;(\otimes)\;\Varid{z}\;\Varid{ys}{}\<[E]%
\\
\>[B]{}\mathbin{=}{}\<[7]%
\>[7]{}(\Varid{x}\mathbin{\otimes}\Varid{y})\mathbin{\oplus}\Varid{w}~~.{}\<[E]%
\ColumnHook
\end{hscode}\resethooks
\end{proof}
%
% \paragraph{Proof of Lemma~\ref{lemma:foldr-map-odot-commute}}~
% \begin{proof} We reason:
% %if False
% \begin{code}
% pfFoldrMapOdotCommute ::
%   (b -> a -> a) -> (a -> a -> a) -> (b -> a) -> b -> b -> a -> a
% pfFoldrMapOdotCommute odot oplus h x y w  =
% \end{code}
% %endif
% \begin{code}
%    x `odot` (y `odot` w)
%  ===   {- definition of |odot| -}
%    h x `oplus` (h y `oplus` w)
%  ===   {- |oplus| associative -}
%    (h x `oplus` h y) `oplus` w
%  ===   {- |oplus| commutative -}
%    (h y `oplus` h x) `oplus` w
%  ===   {- |oplus| associative -}
%    h y `oplus` (h x `oplus` w)
%  ===   {- definition of |odot| -}
%    y `odot` (x `odot` w) {-"~~."-}
% \end{code}
% \end{proof}

\end{document}